\newtheorem{theorem}{Theorem}[section]
\newtheorem{lemma}{Lemma}[section]
\newtheorem{corollary}{Corollary}[section]
\newtheorem{definition}{Definition}[section]
\newtheorem{proposition}{Proposition}[section]
\crefname{definition}{Def.}{Defs.}
\newcommand{\defcal} [1]{\expandafter\newcommand\csname cal#1\endcsname{{\cal #1}}}
\newcommand{\defsf} [1]{\expandafter\newcommand\csname sf#1\endcsname{{\sf #1}}}
\newcommand{\defbf} [1]{\expandafter\newcommand\csname bf#1\endcsname{{\bf #1}}}
\newcommand{\defbb} [1]{\expandafter\newcommand\csname bb#1\endcsname{{\mathbb{#1}}}}
\newcommand{\deffrak} [1]{\expandafter\newcommand\csname frak#1\endcsname{{\mathfrak{#1}}}}
\newcounter{ct}
    \edef\letter{\Alph{ct}}
    \edef\letter{\alph{ct}}
\newcommand{\argmin}{\mathop{\mathrm{argmin}}}
\newcommand{\numberthis}{\addtocounter{equation}{1}\tag{\theequation}}
\newcommand{\TVdist}{\mathrm{d}_{\mathrm{TV}}}
\newcommand{\KLdist}{\mathrm{d}_{\mathrm{KL}}}
\newcommand{\mirrorfunc}[1]{
    \ensuremath{\if$#1$\phi \else \phi(#1)\fi}
}
\newcommand{\mirrorfuncdual}[1]{
    \ensuremath{\if$#1$\phi^{*}\else \phi^{*}(#1)\fi}
}
\newcommand{\primtodual}[1]{\nabla\mirrorfunc{#1}}
\newcommand{\dualtoprim}[1]{\nabla\mirrorfuncdual{#1}}
\newcommand{\hessmirror}[1]{\nabla^{2}\mirrorfunc{#1}}
\newcommand{\hessmirrorinv}[1]{\nabla^{2}\mirrorfuncdual{#1}}
\newcommand{\trace}{\mathrm{trace}}
\newcommand{\defeq}{\overset{\mathrm{def}}{=}}
\newcommand{\subscript}[2]{$#1 _ #2$}
\newlist{assumplist}{enumerate}{1}
\setlist[assumplist]{label=(\subscript{\textbf{A}}{{\arabic*}})}
\newcommand{\potential}[1]{
    \ensuremath{\if$#1$f\else f(#1)\fi}
}
\newcommand{\gradpotential}[1]{\nabla\potential{#1}}
\newcommand{\mixingtime}[1]{\tau_{\mathrm{mix}}(#1)}
\newcommand{\interior}[1]{\mathrm{int}(#1)}
\newcommand{\crossratio}[3]{\mathrm{CR}(#1, #2; #3)}
\newcommand{\costgradpotential}{\mathrm{Cost}[\gradpotential{}]}
\newcommand{\costhessmirror}{\mathrm{Cost}[\nabla^{2}\phi]}
\newcommand{\costdualtoprim}{\mathrm{Cost}[\dualtoprim{}]}
\newcommand{\costcholesky}[1]{\mathrm{Cost}[\text{Cholesky}(#1)]}
\newcommand{\costtrisolve}[1]{\mathrm{Cost}[\text{TriSolve(#1)}]}
\newcommand{\costpotential}{\mathrm{Cost}[\potential{}]}
\newcommand{\MAMLA}{\textsf{MAMLA}}
\newcommand{\MALA}{\textsf{MALA}}
\newcommand{\MLA}{\textsf{MLA}}
\newcommand{\MLAFD}{\textsf{MLA\textsubscript{FD}}}
\newcommand{\MLABD}{\textsf{MLA\textsubscript{BD}}}
\newcommand{\primalspace}{\mathcal{K}}
\newcommand{\targetdens}{\pi}
\newcommand{\targetdist}{\Pi}
\newcommand*{\algotitle}[2]{%
  \stepcounter{algocf}%
  \hypertarget{algocf.title.\theHalgocf}{}%
  \NR@gettitle{#1}%
  \label{#2}%
  \addtocounter{algocf}{-1}%
}
\crefname{paragraph}{part}{parts}
\def\@maketitle{%
  \newpage
  \begin{center}%
  \let \footnote \thanks
    {\Large \bf \@title \par}%
  \end{center}%
  \par
  \vskip 0.5em}
\title{Fast sampling from constrained spaces using the \\
Metropolis-adjusted Mirror Langevin algorithm}
\begin{document}
\maketitle

\begin{center}
{\large
\begin{tabular}{ccc}
    \makecell{Vishwak Srinivasan\(^\star\)\\{\normalsize\texttt{vishwaks@mit.edu}}} & \makecell{Andre Wibisono\(^\dagger\)\\{\normalsize\texttt{andre.wibisono@yale.edu}}} & \makecell{Ashia Wilson\(^\star\)\\{\normalsize\texttt{ashia07@mit.edu}}}
\end{tabular}
\vskip 0.5em

\normalsize
\begin{tabular}{c}
\({}^{\star}\)Department of Electrical Engineering and Computer Science, MIT
\\
[0.25em]
\({}^{\dagger}\)Department of Computer Science, Yale University \\
\end{tabular}
}
\end{center}

\begin{abstract}
  \noindent
  We propose a new method called the Metropolis-adjusted Mirror Langevin algorithm for approximate sampling from distributions whose support is a compact and convex set.
  This algorithm adds an accept-reject filter to the Markov chain induced by a single step of the Mirror Langevin algorithm \citep{zhang2020wasserstein}, which is a basic discretisation of the Mirror Langevin dynamics.
  Due to the inclusion of this filter, our method is unbiased relative to the target, while known discretisations of the Mirror Langevin dynamics including the Mirror Langevin algorithm have an asymptotic bias.
  For this algorithm, we also give upper bounds for the number of iterations taken to mix to a constrained distribution whose potential is relatively smooth, convex, and Lipschitz continuous with respect to a self-concordant mirror function.
  As a consequence of the reversibility of the Markov chain induced by the inclusion of the Metropolis-Hastings filter, we obtain an exponentially better dependence on the error tolerance for approximate constrained sampling.
  We also present numerical experiments that corroborate our theoretical findings.
\end{abstract}

\setcounter{page}{1}

\vspace*{-6mm}
\section{Introduction}
\label{sec:intro}
\vspace*{-2mm}
Continuous distributions supported on high-dimensional spaces are prevalent in various areas of science, more commonly so in machine learning and statistics.
Samples drawn from such distributions can be used to generate confidence intervals for a point estimate, or provide Monte Carlo estimates for functionals of distributions.
This motivates development of efficient algorithms to sample from such distributions.
However, not all distributions are supported on the entire space (for e.g., \(\bbR^{d}\)).
In one dimension, some common examples are the Gamma and Beta distributions (supported on \((0, \infty)\) and \((0, 1)\), respectively), and the latter's generalisation to multiple dimensions is the Dirichlet distribution (supported on a simplex \(\Delta_{d + 1}\), a compact and convex subset of \(\bbR^{d}\)).
Such constrained distributions not only occur in theory, but also in practice: for instance, in several Bayesian models \citep{pakman2014exact}, latent Dirichlet allocation \citep{blei2003latent} for topic modelling, regularised regression \citep{celeux2012regularization}, more recently in language models \citep{kumar2022gradient}, and for modelling metabolic networks \citep{heirendt2019creation,kook2022sampling}.
Analogous to sampling, optimisation over constrained domains are also of interest, and constrained optimisation is generally harder than unconstrained optimisation.
Two popular families of approaches for solving constrained optimisation problems are interior-point methods \citep{nesterov1994interior}, and mirror descent \citep{nemirovskii1983problem}.

\begin{framed}
\textbf{The constrained sampling problem}~~
More formally, in this paper, we are interested in generating (approximate) samples from a target distribution \(\Pi\) with support \(\primalspace\) that is a compact, convex subset of \(\bbR^{d}\), and whose density \(\pi\) is of the form
\begin{equation*}
    \pi(x) \propto e^{-\potential{x}}~.
\end{equation*}
Here, \(f : \primalspace \to \bbR \cup \{\infty\}\) is termed the \emph{potential} of \(\Pi\).
\end{framed}

The constrained sampling problem has been long-studied and a variety of prior approaches exist.
The earliest inventions initially focused on obtaining uniform samples over \(\primalspace\) (i.e., when \(\potential{} = 0\)).
This task had applications in estimating the volume of \(\primalspace\), and these methods were subsequently generalised to sampling from log-concave distributions supported on \(\primalspace\) (equivalently when \(\potential{}\) is convex in its domain).
These inventions include the \textsf{Hit-And-Run} \citep{smith1984efficient,lovasz1999hit,lovasz2003hit}, and \textsf{BallWalk} algorithms \citep{lovasz1993random,kannan1997random}.
Amongst other later modifications and analyses of these algorithms, the \textsf{DikinWalk} \citep{kannan2009random,narayanan2017efficient} is particularly notable due to the use of Dikin ellipsoids which are extensively used in the design and analysis of interior-point methods in optimisation.
A commonality of all the aforementioned methods is that they only require calls to the (unnormalised) density of the target distribution.

To see how the gradients of \(\potential{}\) could be useful, consider the unconstrained setting where \(\primalspace = \bbR^{d}\).
A popular algorithm for sampling in this setting is the unadjusted Langevin algorithm (\ref{eq:ULA}), which is the Euler-Maruyama discretisation of the continuous-time Langevin dynamics (\ref{eq:LD}).
\begin{align*}
    dX_{t} &= -\nabla \potential{X_{t}} dt + \sqrt{2}~dB_{t}
    \tag{\textsf{LD}}\label{eq:LD} \\
    X_{k + 1} - X_{k} &= -h \cdot \gradpotential{X_{k}} + \sqrt{2h} \cdot \xi_{k}; \quad \xi_{k} \sim \calN(0, I_{d})\tag{\textsf{ULA}} \label{eq:ULA}~.
\end{align*}
Without the Brownian motion term \(dB_{t}\), \ref{eq:LD} resembles the gradient flow of \(\potential{}\), and analogously, setting \(\xi_{k} = \bm{0}\) in \ref{eq:ULA} recovers gradient descent with step size \(h\).
In the seminal work of \cite{jordan1998variational}, the Langevin dynamics was shown to be the gradient flow of the KL divergence in the space of probability measures equipped with the Wasserstein metric via the Fokker-Planck equation -- thus making the connection between sampling and optimisation more substantive; see also the paper by \cite{wibisono2018sampling}.
However, in the constrained sampling problem described earlier, the Langevin dynamics (or algorithm) is not applicable since the solutions (or iterates) are no longer guaranteed to remain in \(\primalspace\), which we recall is a compact and convex subset of \(\bbR^{d}\).
The Mirror Langevin dynamics (\ref{eq:MLD}) \citep{zhang2020wasserstein,chewi2020exponential} was proposed to address this inapplicability, which draws from the idea of mirror descent in optimisation.
Specifically, an invertible \emph{mirror map} is used, which is often defined as the gradient of a barrier function (also called the \emph{mirror function}) \(\mirrorfunc{}\) over \(\primalspace\).
A closely related dynamics was proposed in \cite{hsieh2018mirrored} which is also termed Mirror Langevin dynamics, but this is not the focus of this work -- for a detailed comparison between the two dynamics, see \citet[\S 1.2]{zhang2020wasserstein}, and our interest is the dynamics defined as follows.
\begin{equation*}
\label{eq:MLD}
    Y_{t} = \primtodual{X_{t}}; \quad dY_{t} = -\gradpotential{X_{t}} dt + \sqrt{2}~\hessmirror{X_{t}}^{\nicefrac{1}{2}} ~dB_{t}~\tag{\textsf{MLD}}.
\end{equation*}
As noted for the Langevin dynamics in \ref{eq:LD}, without the Brownian motion term \(dB_{t}\), \ref{eq:MLD} resembles the original mirror dynamics as proposed by \cite{nemirovskii1983problem} for optimisation.
Fundamentally, the mirror function serves two key purposes: (1) it changes the geometry of the primal space (\(\primalspace\)) suitably; the underlying metric is given by the Hessian of the mirror function, and (2) it allows us to perform unconstrained \ref{eq:LD}-style diffusion in the dual space (the range of the mirror map) which is unconstrained when the domain is bounded.
Due to the occurrence of the time-varying diffusion matrix in \ref{eq:MLD}, the dynamics can be discretised in a variety of ways.
The most basic discretisation (or also called the Euler-Maruyama discretisation) of \ref{eq:MLD} is commonly referred to as the Mirror Langevin algorithm (\ref{eq:MLA}) \citep{zhang2020wasserstein,li2022mirror}
\begin{equation*}
    \left\{
    \begin{aligned}
    Y_{k} &= \primtodual{X_{k}}~; \\
    Y_{k + 1} - Y_{k} &= -h\cdot \gradpotential{X_{k}} + \sqrt{2h} \cdot \hessmirror{X_{k}}^{\nicefrac{1}{2}}~\xi_{k}~, \qquad \xi_{k} \sim \calN(0, I_{d})~; \\
    X_{k + 1} &= (\primtodual{})^{-1}(Y_{k + 1})~.
    \end{aligned}
    \right.
    \tag{\textsf{MLA}}\label{eq:MLA}
\end{equation*}
Other discretisations have since been proposed and analysed \citep{ahn2021efficient,jiang2021mirror}; however, the advantage of \ref{eq:MLA} over these other discretisations is that \ref{eq:MLA} can be implemented exactly, whereas the latter methods involve simulating a stochastic differential equation (SDE) which cannot be performed exactly, and has to be approximated for practical purposes.

Theoretically, for \ref{eq:MLA}, the distribution of \(X_{k}\) as \(k \to \infty\) is not guaranteed to coincide with \(\Pi\), and such an algorithm is said to be \emph{biased}, where the distance between this limit and \(\Pi\) is termed the \emph{bias}.
This phenomenon is not specific to \ref{eq:MLA}; for instance, the more popular \ref{eq:ULA} is also biased, and this bias is shown to scale with the step size \(h\) under relatively weak assumptions made on the target distribution -- see \cite{durmus2017nonasymptotic,dalalyan2017theoretical,dalalyan2017further,dalalyan2019user,cheng2018convergence,vempala2019rapid,li2022sqrt} for a variety of analyses of \ref{eq:ULA}.
Interestingly, due to the dependence on \(h\), the bias of \ref{eq:ULA} tends to \(0\) as \(h \to 0\) (thus termed a ``vanishing bias'').
In the case of \ref{eq:MLA} however, the first analysis by \cite{zhang2020wasserstein} suggested that \ref{eq:MLA} had a non-vanishing bias, and conjectured that this was unavoidable.
Notably, their analysis assumed a \emph{modified self-concordance condition} on \(\mirrorfunc{}\), which is sufficient to ensure the existence of strong solutions for the continuous-time dynamics (\ref{eq:MLD}).
In contrast, a more natural assumption placed on \(\mirrorfunc{}\) is \emph{self-concordance}, especially considered in optimisation.
A subsequent but different analysis of \ref{eq:MLA} by \cite{jiang2021mirror} assumed the more standard self-concordance condition over \(\mirrorfunc{}\), but their analysis was also unable to improve the non-vanishing bias of \ref{eq:MLA}.
They also presented an analysis of an alternate yet impractical discretisation of \ref{eq:MLD} developed and analysed by \cite{ahn2021efficient} that we refer to as \MLAFD{}, and independent analyses in both of these works showed that \MLAFD{} has a vanishing bias when the more natural self-concordance condition over \(\mirrorfunc{}\) is considered.
More recently, \cite{li2022mirror} improved the result due to \citet{zhang2020wasserstein}, and showed that \ref{eq:MLA} has a vanishing bias when \(\mirrorfunc{}\) satisfies the modified self-concordance condition.
This newer result is based on a mean-square analysis technique developed in \cite{li2019stochastic}, for which the modified self-concordance condition is more amenable.
However, this modified self-concordance condition does not imply some of the more desirable properties that self-concordance functions satisfy, for example \emph{affine invariance}, which we discuss later.

\subsubsection*{A summary of our work}
Despite these recent results, a question that has been left unanswered is if \ref{eq:MLA}, or any other exactly implementable algorithm based on discretising \ref{eq:MLD} has a vanishing bias when \(\mirrorfunc{}\) is simply self-concordant.
In this paper, we give an algorithm that is \emph{unbiased}, and with provable guarantees while assuming self-concordance of \(\mirrorfunc{}\).
Our proposed algorithm applies a Metropolis-Hastings filter to the proposal induced by a single step of \ref{eq:MLA} -- we call this the \emph{Metropolis-adjusted Mirror Langevin algorithm} (or \nameref{alg:mamla} in short).
Each iteration of the algorithm is composed of three key steps, and these are formalised in \Cref{alg:mamla} in the sequel.
\begin{enumerate}[leftmargin=*]
    \setlength{\itemsep}{0pt}
    \item Generate a proposal \(Z_{k}\) from iterate \(X_{k}\) using a single step of \ref{eq:MLA}.
    \item Compute the Metropolis-Hastings acceptance probability \(p_{\mathrm{accept}}\).
    \item With probability \(p_{\mathrm{accept}}\), set \(X_{k + 1} = Z_{k}\) (accept); otherwise set \(X_{k + 1} = X_{k}\) (reject).
\end{enumerate}
In principle, this is similar to the Metropolis-adjusted Langevin algorithm (\MALA{}) \citep{roberts1996exponential}, which applies a Metropolis-Hastings filter to the proposal induced by a single step of \ref{eq:ULA}.
Due to the form of \ref{eq:MLA}, \nameref{alg:mamla} is exactly implementable much like \MALA{}, which crucially relies on the ability to compute the proposal densities.
Moreover by construction, the inclusion of a Metropolis-Hastings filter results in a Markov chain that is reversible with respect to the target distribution.
Consequently, the algorithm converges to the target distribution (and is therefore unbiased) unlike \ref{eq:MLA} and other discretisations of \ref{eq:MLD}, and does so exponentially quickly.
In particular, let \(\potential{}\) be \(\mu\)-strongly convex, \(\lambda\)-smooth, and \(\beta\)-Lipschitz relative to a self-concordant mirror function \(\mirrorfunc{}\) (see \Cref{sec:func-classes} for formal definitions).
We show in \Cref{thm:mix-mamla} that
\begin{enumerate}[leftmargin=*]
    \setlength{\itemsep}{0pt}
    \item when \(\mu > 0\), the \(\delta\)-mixing time of \nameref{alg:mamla} scales as \(\calO\left(\frac{1}{\mu} \cdot \max\left\{d^{3}, d\lambda, \beta^{2}\right\} \cdot \log\left(\frac{1}{\delta}\right)\right)\), and
    \item when \(\mu = 0\), the \(\delta\)-mixing time of \nameref{alg:mamla} scales as \(\calO\left(\nu \cdot \max\{d^{3}, d\lambda, \beta^{2}\} \cdot \log\left(\frac{1}{\delta}\right)\right)\), where \(\nu\) is a constant that depends on the structure of \(\primalspace\) as induced by \(\mirrorfunc{}\).
\end{enumerate}
We obtain these guarantees through the classical one-step overlap technique due to \cite{lovasz1993random}, which was also used to give mixing time guarantees for \MALA{} in \cite{dwivedi2018log,chewi2021optimal}.
Our analysis is however not a direct consequence of these aforementioned results specifically due to the occurrence of the mirror map \(\primtodual{}\) and metric \(\hessmirror{}\) in \ref{eq:MLA}, which poses certain difficulties.
We handle these newer quantities by strongly leveraging the self-concordant nature of the mirror function \(\mirrorfunc{}\) and use recent isoperimetry results for distributions supported on a Hessian manifold induced by a self-concordant function \citep{gopi2023algorithmic} to provide guarantees for \nameref{alg:mamla}.
We compare these mixing time upper bounds to those obtained for other unadjusted discretisations of \ref{eq:MLD} in more detail later this work, but here briefly remark on the mixing time upper bound for \ref{eq:MLA} shown in \cite{li2022mirror} which scales as \(\calO(\nicefrac{d\lambda^{2}}{\mu^{3}\delta^{2}})\), while assuming that \(\mirrorfunc{}\) satisfies a modified self-concordance condition.

\paragraph*{A special case where \(\mirrorfunc{} = \potential{}\)}
In this case, the Mirror Langevin dynamics (\ref{eq:MLD}) specialises to the \emph{Newton Langevin dynamics} (\ref{eq:NLD}), which is the sampling analogue of the continuous time version of Newton's method in optimisation.
\begin{equation*}
\label{eq:NLD}
    Y_{t} = \gradpotential{X_{t}}; \quad dY_{t} = -\gradpotential{X_{t}} dt + \sqrt{2}~\nabla^{2}\potential{X_{t}}^{\nicefrac{1}{2}} ~dB_{t}~\tag{\textsf{NLD}}.
\end{equation*}
\citet{chewi2020exponential} showed that when \(\potential{}\) is strictly convex over \(\primalspace\), \ref{eq:NLD} converges exponentially quickly to the target in \(\chi^{2}\)-divergence at a rate that is invariant to affine transformations of \(\primalspace\).
This result is due to the Brascamp-Lieb inequality.
Hence, it is reasonable to expect that the Newton Langevin algorithm, which is the specialisation of the Mirror Langevin algorithm (\ref{eq:MLA}) with \(\mirrorfunc{} = \potential{}\), also has mixing time guarantees that are invariant to affine transformations.
However, specialising the most recent analysis of the Mirror Langevin algorithm in \cite{li2022mirror} with \(\mirrorfunc{} = \potential{}\) is unable to show this, owing to the modified self-concordance assumption, which we elaborate on later in \cref{sec:intro:cond-num-indep}.
Despite this, our analysis does indeed yield a guarantee for the \emph{Metropolis-adjusted Newton Langevin algorithm} that is invariant to affine transformations of \(\primalspace\).

\subsection{Related work}
The original version of \textsf{DikinWalk} \citep{kannan2009random} was developed with the goal of sampling uniformly from polytopes, and the iterative algorithm was based on generating either uniform samples from Dikin ellipsoids centered at the iterates, or samples from a Gaussian distribution centered at the iterates with a covariance given by the Dikin ellipsoid \citep{sachdeva2016mixing,narayanan2016randomized}.
Other related algorithms are the \textsf{JohnWalk} \citep{gustafson2018john}, \textsf{VaidyaWalk} \citep{chen2018fast}, and \textsf{WeightedDikinWalk} \citep{laddha2020strong}.
\cite{narayanan2017efficient} modify the filter in \textsf{DikinWalk} to make it amenable for the more general constrained sampling problem, and was recently modified by \cite{mangoubi2023sampling} in proposing a modified \textsf{Soft-Threshold DikinWalk}, which primarily focuses on the case where \(\primalspace\) is a polytope.
\cite{kook2023efficiently} develop a broader theory of interior point sampling methods while generalising the ideas from the analysis of \textsf{DikinWalk} in \cite{laddha2020strong}.
As remarked in \cite{narayanan2017efficient}, while the existence of such a self-concordant barrier is a stronger assumption than a separation oracle for \(\primalspace\), the self-concordant barrier enables leveraging the geometry of \(\primalspace\) better.

The aforementioned methods related to \textsf{DikinWalk} do not require access to gradients of \(\potential{}\).
Drawing inspiration from projected gradient descent for optimising functions over constrained feasibility sets, \cite{bubeck2018sampling} propose a projected Langevin algorithm, where each step of \ref{eq:ULA} is projected onto the domain.
Another class of approaches \citep{brosse2017sampling,gurbuzbalaban2022penalized} propose obtaining a good approximation \(\widetilde{\targetdist}\) of the target \(\targetdist\) such that the support of \(\widetilde{\targetdist}\) is \(\bbR^{d}\), but this does not eliminate the likelihood of iterates lying outside \(\primalspace\).
This enables running the simpler unadjusted Langevin algorithm over \(\widetilde{\targetdist}\) to obtain approximate samples from \(\targetdist\).
\textsf{MALA} \citep{roberts1996exponential,dwivedi2018log,chewi2021optimal}, or Hamiltonian Monte Carlo (\textsf{HMC}) \citep{neal2011mcmc,durmus2017convergence,bou2020coupling} is also a viable option for sampling from log-concave distributions over \(\primalspace\); however, guarantees for these methods assume Euclidean geometry of \(\primalspace\), which poses difficulties.
Riemannian Hamiltonian Monte Carlo (\textsf{RHMC}) \citep{girolami2011riemann,lee2018convergence} extends \textsf{HMC} to allow for sampling in such settings and with provable guarantees, and this has recently been modified to handle constrained domains better in \cite{kook2022sampling,noble2023unbiased}, while assuming that there exists a self-concordant barrier for \(\primalspace\) that is computable.
\cite{girolami2011riemann} also proposed the Riemannian Langevin algorithm as a generalisation of \ref{eq:ULA}, and this was subsequently analysed in \citet{gatmiry2022convergence,li2023riemannian}.
Relatedly, drawing from recent developments in proximal methods for sampling, \citet{gopi2023algorithmic} propose a novel proximal sampler for sampling over non-Euclidean spaces based on the log-Laplace transform.
While this proximal sampler has desirable and interesting theoretical properties, it is yet to be experimented with in practice.

\vspace*{-1mm}
\paragraph*{\textsc{Condition number independence}}
\label{sec:intro:cond-num-indep}
In this context, the \emph{condition number} is a property of the domain \(\primalspace\) and is defined as \(\calC_{\primalspace} = \frac{R}{r}\), where \(R\) is the radius of the smallest ball containing \(\primalspace\) and \(r\) is the radius of the largest ball contained entirely in \(\primalspace\) \citep{kannan2009random}.
A Euclidean ball in \(\bbR^{d}\) of arbitrary radius has condition number \(1\), and non-unitary affine transformations of this ball results in sets with different condition numbers.
Constrained sampling methods over \(\primalspace\) that have mixing time guarantees that are independent of \(\calC_{\primalspace}\) are valuable, and this is because the complexity of approximate sampling from distributions with affine invariant properties over a possibly ill-conditioned domain \(\primalspace\) is the same as approximate sampling from similar distributions over a well-conditioned (or \emph{isotropic}) domain like a ball.
It is important to note that \(\calC_{\primalspace}\) is not the same as the value \(\kappa = \nicefrac{\lambda}{\mu}\); the latter corresponds to the conditioning of the potential relative to a mirror function defined over \(\primalspace\).
The dependence on \(\kappa\) is expected; intuitively, it is harder to sample from sharper / peakier distributions which correspond to a higher value of \(\kappa\), even when supported over a well-conditioned domain.

The classical Hit-and-Run and ball walk algorithms for approximate uniform sampling from \(\primalspace\) are not affine invariant, and the mixing time of these algorithms scaled as \(\calC_{\primalspace}^{2}\).
The Dikin walk is an essential improvement over these methods because its mixing time guarantees are independent of \(\calC_{\primalspace}\), and this is primarily due to the involvement of the Dikin ellipsoids.
There is continued interest in developing methods whose mixing time guarantees are independent of \(\calC_{\primalspace}\): recent analyses of \textsf{RHMC} have shown to satisfy this \citep{lee2018convergence,kook2023condition}.
As far as discretisations of \ref{eq:MLD} are concerned, the analysis of \MLAFD{} in \cite{ahn2021efficient} gives mixing time guarantees that depend on constants in the assumptions made.
Notably, these assumptions are affine invariant, which implies these constants are independent of \(\calC_{\primalspace}\), and results in mixing time guarantees that are \emph{independent} of \(\calC_{\primalspace}\).
On the other hand, the recent analysis of \ref{eq:MLA} due to \citet{li2022mirror} assumes a \emph{modified self-concordance} condition.
This is not affine invariant in the sense that the parameter in this condition can change when the domain is transformed by an affine map \citep[\S D]{li2022mirror}.
Put simply, this parameter could depend on \(\calC_{\primalspace}\).
As this parameter appears in their mixing time guarantee,  it suggests that the mixing time of \ref{eq:MLA} is also possibly dependent on \(\calC_{\primalspace}\).
Our mixing time guarantees for \nameref{alg:mamla} crucially does not depend on \(\calC_{\primalspace}\) much like the guarantees for \MLAFD{}, and this is due to the affine invariance of the assumptions we make for the analysis.

\vspace*{-1mm}
\paragraph*{Organisation}
The remainder of this paper is organised as follows: we first review notation and definitions that are key for this work in \cref{sec:prelims}. In \cref{sec:algo}, we define the algorithm (\nameref{alg:mamla}) formally, discuss its implementation, provide a general mixing time guarantee for it as well as some corollaries of this general guarantee for a variety of sampling tasks.
We showcase some numerical experiments to demonstrate \nameref{alg:mamla} in \cref{sec:expts}.
In \cref{sec:proofs}, we provide proofs for the theoretical statements made in this work, and finally conclude with a discussion of our results in \cref{sec:conclusion}.

\section{Preliminaries}
\label{sec:prelims}
We begin by introducing some general notation that will be used throughout this work.
We remind the reader that \(\primalspace\) is a compact, convex subset of \(\bbR^{d}\).

\paragraph*{Notation}
The set \(\{1, \ldots, m\}\) is denoted by \([m]\).
We denote the set of positive reals by \(\bbR_{+}\).
Let \(A \in \bbR^{d \times d}\) be a symmetric positive definite matrix.
For \(x, y \in \bbR^{d}\), we define \(\langle x, y\rangle_{A} = \langle x, Ay\rangle\), and \(\|x\|_{A} = \sqrt{\langle x, x\rangle_{A}}\).
When the subscript is omited as in \(\|x\|\), then this corresponds to \(A = I_{d \times d}\), or the Euclidean norm of \(x\).
For a set \(\calA\), \(\interior{\calA}\) denotes its interior, and collection of all measurable subsets of \(\calA\) is denoted by \(\calF(\calA)\).
For a measurable map \(T : \calA \to \calB\) and a distribution \(P\) supported on \(\calA\), we denote \(T_{\#}P\) to be the pushforward measure of \(P\) using \(T\); in other words, \(T_{\#}P\) is the law of \(T(x)\) where \(x\) is distributed according to \(P\), and \(T_{\#}P\) has support \(\calB\).
Unless specified explicitly, the density of a distribution \(P\) (if it exists) at \(x\) is denoted by \(dP(x)\).

\subsection{Function classes}
\label{sec:func-classes}

\paragraph*{Legendre type}
For both the mirror descent algorithm in optimisation and \ref{eq:MLA}, a method to compute the inverse of \(\primtodual{}\) is required, as respective updates are made in the set given by range of \(\primtodual{}\) (termed the \emph{dual space}), and then mapped back to \(\primalspace\), which can appear difficult at first glance.
When the mirror function \(\mirrorfunc{}\) is of \emph{Legendre type}, this is less daunting, since \((\primtodual{})^{-1}\) can be computed in terms of the \emph{convex conjugate} of \(\mirrorfunc{}\).
A function with domain \(\primalspace\) is of Legendre type if it is differentiable and strictly convex in \(\interior{\primalspace}\), and has gradients that become unbounded as we approach the boundary of \(\primalspace\); a precise definition of Legendre type functions can be found in \citet[Chap. 26]{rockafellar1997convex}.
The convex conjugate\footnote{also referred to as the \emph{Fenchel-Legendre dual}.} of \(\mirrorfunc{}\) denoted by \(\mirrorfunc{}^{*}\) is defined as
\begin{equation*}
    \mirrorfunc{}^{*}(y) = \max_{x \in \primalspace} ~\langle x, y\rangle - \mirrorfunc{x}~.
\end{equation*}
The domain of \(\mirrorfunc{}^{*}\) is the range of \(\primtodual{}\).
Crucially, when \(\mirrorfunc{}\) is of Legendre type, then \(\primtodual{}\) is an invertible map between \(\primalspace\) and the domain of \(\mirrorfunc{}^{*}\), and the inverse of this map is \(\dualtoprim{}\).
For a given \(y\), the maximiser \(\hat{x}\) which achieves \(\mirrorfuncdual{y} = \langle \hat{x}, y\rangle - \mirrorfunc{\hat{x}}\) by optimality satisfies \(y = \primtodual{\hat{x}}\), or \(\hat{x} = \dualtoprim{y}\).
When it is possible to express \(\mirrorfuncdual{}\) in closed form, \(\dualtoprim{}\) can be computed as well.
Otherwise, \(\dualtoprim{y}\) can be estimated by approximately solving the maximisation problem in the definition of \(\mirrorfunc{}^{*}\).
Additionally, Corollary 13.3.1 from \citet{rockafellar1997convex} posits that when \(\primalspace\) is a bounded subset of \(\bbR^{d}\), the domain of \(\mirrorfunc{}^{*}\) is \(\bbR^{d}\).
This implies that for any \(y \in \bbR^{d}\), there exists a unique \(x \in \interior{\primalspace}\) such that \(\dualtoprim{y} = (\primtodual{})^{-1}(y) = x\).
This fact is essential for \ref{eq:MLA} as the Gaussian random vector \(\xi_{k}\) in \ref{eq:MLA} is unrestricted, and \(X_{k + 1}\) could be undefined unless the domain of \(\mirrorfunc{}^{*} = \bbR^{d}\).
Henceforth in this work, the mirror function \(\mirrorfunc{}\) is assumed to be of \emph{Legendre type}; this is an assumption also made in prior work that study \ref{eq:MLD} and its discretisations (\ref{eq:MLA}, \MLAFD{}).

\paragraph*{Self-concordance}
Self-concordant functions are ubiquitous in the study and design of interior-point methods in optimisation, and are defined as follows.
\begin{definition}[{\citet[\S 5.1.3]{nesterov2018lectures}}]
\label{def:self-concord}
A thrice differentiable strictly convex function \(\psi : \primalspace \to \bbR \cup \{\infty\}\) is said to be self-concordant with parameter \(\alpha \geq 0\) if for all \(x \in \interior{\primalspace}\) and \(u \in \bbR^{d}\)~,
\begin{equation*}
    |\nabla^{3}\psi(x)[u, u, u]| \leq 2\alpha\|u\|_{\nabla^{2}\psi(x)}^{3}~.
\end{equation*}
\end{definition}

\paragraph*{Relative convexity and smoothness}
These classes of functions can be viewed as generalisations of convex and smooth functions with different geometrical properties, and were independently studied in \cite{bauschke2017descent} and \cite{lu2018relatively}.

\begin{definition}
\label{def:rel-convex}
Let \(g,~\psi : \primalspace \to \bbR \cup \{\infty\}\) be differentiable convex functions.
We say that \(g\) is \(\mu\)-relatively convex with respect to \(\psi\) if \(g - \mu \cdot \psi\) is convex, where \(\mu \geq 0\).
Equivalently, when \(g,~\psi\) are twice differentiable, then \(\mu \cdot \nabla^{2}\psi(x) \preceq \nabla^{2}g(x)\) for all \(x \in \interior{\primalspace}\)~.
\end{definition}

\begin{definition}
\label{def:rel-smooth}
Let \(g,~\psi : \primalspace \to \bbR \cup \{\infty\}\) be differentiable convex functions.
We say that \(g\) is \(\lambda\)-relatively smooth with respect to \(\psi\) if \(\lambda \cdot \psi - g\) is convex, where \(\lambda \geq 0\).
Equivalently, when \(g,~\psi\) are twice differentiable, then \(\lambda \cdot \nabla^{2}\psi(x) \succeq \nabla^{2}g(x)\) for all \(x \in \interior{\primalspace}\)~.
\end{definition}

Let \(\psi : \primalspace \to \bbR \cup \{\infty\}\) be a differentiable convex function.
The Bregman divergence of \(\psi\) at \(y\) with respect to \(x\) is defined as \(D_{\psi}(y; x) = \psi(y) - \psi(x) - \langle \nabla\psi(x), y- x\rangle\).
The \(\mu\)-relative convexity and \(\lambda\)-relative smoothness of a function \(g\) with respect to \(\psi\) imply respectively that for any \(x, y \in \interior{\primalspace}\),
\begin{align*}
    g(y) &\geq g(x) + \langle \nabla g(x), y - x\rangle + \mu \cdot D_{\psi}(y; x)~, \\
    g(y) &\leq g(x) + \langle \nabla g(x), y - x\rangle + \lambda \cdot D_{\psi}(y; x)~.
\end{align*}

When \(\psi(x) = \frac{\|x\|^{2}}{2}\), and \(\primalspace = \bbR^{d}\), \(D_{\psi}(y; x) = \frac{\|y - x\|^{2}}{2}\) for any \(x, y \in \bbR^{d}\), and substituting the equation for \(D_{\psi}\) in the inequalities above recovers the standard first order definitions of convexity and smoothness \citep[\S 2.1]{nesterov2018lectures}.

\paragraph*{Relative Lipschitz continuity}

This class of functions is a generalisation of Lipschitz continuity of a differentiable function, and has been useful in the analysis of \MLAFD{} \citep{ahn2021efficient}.

\begin{definition}[{\cite{jiang2021mirror,ahn2021efficient}}]
\label{def:rel-lipschitz}
Let \(g : \primalspace \to \bbR \cup \{\infty\}\) be a differentiable function.
We say that \(g\) is \(\beta\)-relatively Lipschitz continuous with respect to a twice differentiable strictly convex function \(\psi : \primalspace \to \bbR \cup \{\infty\}\) with parameter \(\beta\) if for all \(x \in \interior{\primalspace}\), it holds that
\begin{equation*}
    \|\nabla g(x)\|_{\nabla^{2}\psi(x)^{-1}} \leq \beta \enskip \Leftrightarrow \enskip \beta^{2} \cdot \nabla^{2}\psi(x) \succeq \nabla g(x)\nabla g(x)^{\top}~.
\end{equation*}
\end{definition}

Similar to the relative convexity and smoothness definition, when \(\psi(x) = \frac{\|x\|^{2}}{2}\) with \(\primalspace = \bbR^{d}\), \(\beta\)-relative Lipschitz continuity of \(g\) with respect to \(\psi\) reduces to \(\|\nabla g(x)\| \leq \beta\), which is an equivalent characterisation of Lipschitz continuity of \(g\).
A special case to make note of is when \(\psi = g\), and a function \(g\) that satisfies \(\|\nabla g(x)\|_{\nabla^{2}g(x)^{-1}} \leq \beta\) for all \(x \in \interior{\primalspace}\) is termed a \emph{barrier function} \cite[\S 5.3.2]{nesterov2018lectures}.
This property is useful in the analysis of Newton's method in optimisation.
To avoid clashing with this terminology, we will strictly refer to \(\mirrorfunc{}\) as the \emph{mirror function}.

\textbf{Remark}~~
The above properties: self-concordance, relative convexity, relative smoothness and relative Lipschitz continuity are invariant to affine transformations of the domain.
More precisely, let \(T_{\textrm{Aff}}\) be an affine transformation of suitable dimensions.
\begin{itemize}[leftmargin=*]
\setlength{\itemsep}{0pt}
\item If \(\psi\) is a self-concordant function with parameter \(\alpha\), then \(\psi \circ T_{\textrm{Aff}}\) is also a self-concordant function with parameter \(\alpha\) \citep[Thm. 5.1.2]{nesterov2018lectures}.
\item If \(g\) is a \(\mu\)-relatively convex (or \(\lambda\)-relatively smooth, or \(\beta\)-relatively Lipschitz continuous) function with respect to \(\psi\), then \(g \circ T_{\mathrm{Aff}}\) is also a \(\mu\)-relatively convex (or \(\lambda\)-relatively smooth, or \(\beta\)-relatively Lipschitz continuous, resp.) function with respect to \(\psi \circ T_{\textrm{Aff}}\) (see \citet[Prop. 1.2]{lu2018relatively}; \citet[Thm. 5.3.3]{nesterov2018lectures}).
\end{itemize}

\paragraph*{Symmetric Barrier}
Symmetric barriers were introduced in \cite{laddha2020strong} where it was used to obtain mixing time bounds for the Dikin walk and weighted variant.
This property was originally introduced in \cite{gustafson2018john} in the development of the John walk.

\begin{definition}[{\citet[Def. 2]{laddha2020strong}}]
\label{def:sym-barrier}
Let \(\psi : \primalspace \to \bbR \cup \{\infty\}\) be a twice differentiable function, and let \(\calE_{x}^{\psi}(r) = \{y : \|y - x\|_{\nabla^{2}\psi(x)} \leq r\}\) be the Dikin ellipsoid of radius \(r\) centered at \(x\).
We say that \(\psi\) is a symmetric barrier with parameter \(\nu > 0\) if for all \(x \in \interior{\primalspace}\)~,
\begin{equation*}
    \calE_{x}^{\psi}(1) \subseteq \primalspace \cap (2x - \primalspace) \subseteq \calE_{x}^{\psi}(\sqrt{\nu})~.
\end{equation*}
\end{definition}
Examples of such barriers are the log-barrier of a polytope formed by \(m\) constraints is a symmetric barrier with parameter \(\nu = m\) (\Cref{lem:log-barrier-polytope-sym-barrier}), and the log-barrier of an ellipsoid with unit radius is also a symmetric barrier with parameter \(\nu = 2\) (\Cref{lem:log-barrier-ellipsoid-sym-barrier}).
Notably, the symmetric barrier parameters of these log-barriers is independent of the conditioning of these sets.

\subsection{Markov chains, conductance and mixing time}
\label{sec:prelims-markov-chains}

\paragraph*{Markov chains}

Let the domain of interest be \(\primalspace\).
A (time-homogeneous) Markov chain over \(\primalspace\) is characterised by a set of transition kernels \(\bfP = \{\calP_{x} : x \in \primalspace\}\), where \(\calP_{x}\) is the one-step distribution that maps measurable subsets of \(\primalspace\) to non-negative values.
With this setup, we have a transition operator \(\bbT_{\bfP}\) on the space of probability measures defined by
\begin{equation*}
    (\bbT_{\bfP}\mu)(S) = \int_{\primalspace} \calP_{y}(S) \cdot d\mu(y) \qquad \forall~S \in \calF(\primalspace)~.
\end{equation*}
The distribution after \(k\) applications of the transition operator to \(\mu\) is denoted by \(\bbT_{\bfP}^{k}\mu\).
A probability measure \(\pi\) is called a \emph{stationary} measure of a Markov chain \(\bfP\) if \(\pi = \bbT_{\bfP}\pi\).
A Markov chain \(\bfP\) is said to be \emph{reversible} with respect to a measure \(\pi\) if for any \(A, B \in \calF(\primalspace)\),
\begin{equation*}
    \int_{A} \calP_{x}(B) \cdot d\pi(x) = \int_{B} \calP_{y}(A) \cdot d\pi(y)~.
\end{equation*}
If \(\bfP\) is reversible with respect to \(\pi\), then \(\pi\) is a stationary measure of \(\bfP\); this can be checked by substituting the equality due to reversibility with \(A = \calK, ~B = S\).

\paragraph*{Conductance, total variation distance and mixing time}

The \emph{conductance} of a Markov chain quantifies how likely a Markov chain would visit low probability regions; thus the mixing of Markov chains depends on how low its conductance is.
Formally, the conductance of a Markov chain \(\bfP = \{\calP_{x} : x \in \calK\}\) with stationary measure \(\pi\) supported on \(\primalspace\) is defined as
\begin{equation*}
    \Phi_{\bfP} = \inf_{A \in \calF(\primalspace)}~ \frac{1}{\min\{\pi(A), 1 - \pi(A)\}} \int_{x \in A} \calP_{x}(\primalspace \setminus A) \cdot d\pi(x)~.
\end{equation*}

To measure how quickly a Markov chain mixes to its stationary distribution, we use the \emph{total variation (TV) distance}, which is a metric in the space of probability measures.
The TV distance between two distributions \(\mu\) and \(\nu\) with support \(\primalspace\) is defined as
\begin{equation*}
    \TVdist(\mu, \nu) = \sup_{A \in \calF(\primalspace)} \mu(A) - \nu(A)~.
\end{equation*}
Let \(\bfP\) be a Markov chain with reversible distribution \(\pi\).
For \(\delta \in (0, 1)\), the \emph{\(\delta\)-mixing time} from an initial distribution \(\mu_{0}\) of \(\bfP\), denoted by \(\mixingtime{\delta; \bfP, \mu_{0}}\), is defined as the least number of applications of \(\bbT_{\bfP}\) to \(\mu_{0}\) to achieve a distributions that is at most \(\delta\) away from \(\pi\).
Formally,
\begin{equation*}
    \mixingtime{\delta; \bfP, \mu_{0}} = \inf \{k \geq 0: \TVdist(\bbT_{\bfP}^{k}\mu_{0}, \pi) \leq \delta\}~.
\end{equation*}

Finally, we introduce the notion of a \emph{warm distribution}, which is useful for obtaining mixing time guarantees.
A distribution \(\mu\) supported on \(\primalspace\) is said to be a \(M\)-warm with respect to another distribution \(\Pi\) also supported on \(\primalspace\) if
\begin{equation*}
    \sup_{A \in \calF(\primalspace)} \frac{\mu(A)}{\Pi(A)} = M~.
\end{equation*}

\section{Metropolis-adjusted Mirror Langevin algorithm}
\label{sec:algo}
\begin{algorithm}[t]
\DontPrintSemicolon

\caption{Metropolis-adjusted Mirror Langevin algorithm (\MAMLA{})}
\algotitle{\MAMLA{}}{alg:mamla}

\SetKwInOut{Input}{Input}
\SetKwInOut{Output}{Output}
\Input{Potential \(\potential{} : \primalspace \to \bbR\), mirror function \(\mirrorfunc{} : \primalspace \to \bbR \cup \{\infty\}\), iterations \(K\), initial distribution \(\Pi_{0}\), step size \(h > 0\)}

Sample \(x_{0} \sim \Pi_{0}\).

\For{\(k \leftarrow 0\) \KwTo \(K - 1\)}{
    Sample a random vector \(\xi_{k} \sim \calN(0, I)\).

    Generate proposal \(z = \dualtoprim{\primtodual{x_{k}} - h \cdot \gradpotential{x_{k}} + \sqrt{2h \cdot \hessmirror{x_{k}}} ~\xi_{k}}\).\label{alg:proposal-step}

    Compute acceptance ratio \(p_{\mathrm{accept}}(z; x_{k}) = \min\left\{1, \frac{\pi(z) p_{z}(x_{k})}{\pi(x_{k}) p_{x_{k}}(z)}\right\}\) using \cref{eq:proposal-density}.\label{alg:acceptance-ratio-step}

    Obtain \(U \sim \mathrm{Unif}([0, 1])\).

    \eIf{\(U \leq p_{\mathrm{accept}}(z; x_{k})\)}{
        Set \(x_{k + 1} = z\).
    }{
        Set \(x_{k + 1} = x_{k}\).
    }
}

\Output{\(x_{K}\)}
\end{algorithm}

In this section, we introduce the Metropolis-adjusted Mirror Langevin algorithm (\nameref{alg:mamla}).
Let \(\bfP\) be a Markov chain which defines a collection of proposal distributions at each \(x \in \primalspace\) with densities \(\{p_{x} : x \in \primalspace\}\).
The \emph{acceptance ratio} of \(z\) with respect to \(x\) (given a target density \(\targetdens\)) is defined as
\begin{equation}
\label{eq:accept-ratio-def}
    p^{\bfP}_{\mathrm{accept}}(z; x) = \min\left\{1, \frac{\targetdens(z) p_{z}(x)}{\targetdens(x) p_{x}(z)}\right\}~.
\end{equation}
We recap the general outline of the Metropolis-adjustment of a Markov chain from \cref{sec:intro}.
From a point \(x\), the Markov chain \(\bfP\) generates a proposal \(z \sim \calP_{x}\).
This proposal \(z\) is accepted to be the next iterate with probability \(p^{\bfP}_{\mathrm{accept}}(z; x)\), and if not accepted, \(x\) is retained.
Let \(\bfT \defeq \{\calT_{x} : x \in \primalspace\}\) denote the Metropolis-adjusted Markov chain, where \(\calT_{x}\) is the one-step distribution after this adjustment.
As noted earlier, by construction, \(\bfT\) is reversible with respect to \(\targetdist\) with density \(\targetdens\).

In our setting, the Markov chain \(\bfP\) is induced by one step of \ref{eq:MLA}.
Let \(\xi\) be an independent standard normal vector in \(d\) dimensions.
From any \(x \in \interior{\primalspace}\), one step of \ref{eq:MLA} returns the point
\begin{equation*}
    x' = \dualtoprim{\primtodual{x} - h \cdot \gradpotential{x} + \sqrt{2h \cdot \hessmirror{x}} ~\xi}~.
\end{equation*}
We consider \(\calP_{x}\) to be the law of such \(x'\) for a given \(x\), and \(\bfP = \{\calP_{x} : x \in \primalspace\}\)\footnote{The transition kernel for any \(x \in \partial\primalspace\) is analytically undefined since \(\mirrorfunc{}\) is of Legendre type, but this will not have any influence since the boundary \(\partial\primalspace\) is a Lebesgue null set as \(\primalspace\) is convex.}.
To compute the acceptance ratio (\cref{eq:accept-ratio-def}), we require the density of the \(\calP_{x}\) for each \(x\), and this can be obtained by the change of the variable formula, which states that given an differentiable invertible map \(T\) and probability measure \(\mu\) with density function \(d\mu\),
\begin{equation*}
    dT_{\#}\mu(x) = d\mu(T^{-1}(x)) \cdot |\det J T^{-1}(x)|~.
\end{equation*}
where \(J T^{-1}(x)\) is the Jacobian of \(T^{-1}\) evaluated at \(x\).
Since \(\mirrorfunc{}\) is assumed to be of Legendre type, \(\primtodual{}\) (and equivalently, \(\dualtoprim{}\)) is an invertible map.
Let \(\calN(x; \mu, \Sigma)\) denote the density of a multivariate normal distribution with mean \(\mu\) and covariance \(\Sigma\) at \(x \in \bbR^{d}\).
Then, for any \(z \in \interior{\primalspace}\),
\begin{align*}
    p_{x}(z) &= \calN((\dualtoprim{})^{-1}(z); \primtodual{x} - h \cdot \gradpotential{x}, 2h \cdot \hessmirror{x}) \cdot |\det J (\dualtoprim{})^{-1}(z)| \\
    &= \frac{\det \hessmirror{z}}{(4h\pi)^{\nicefrac{d}{2}} \cdot \sqrt{\det \hessmirror{x}}} \exp\left(-\frac{\|\primtodual{z} - \primtodual{x} + h\cdot \gradpotential{x}\|^{2}_{\hessmirror{x}^{-1}}}{4h}\right) ~.\numberthis\label{eq:proposal-density}
\end{align*}

This completes the definition of \MAMLA{} (see \cref{alg:mamla}).

\subsection{Mixing time analysis}

Here, we state our main theorem concerning the mixing time of \nameref{alg:mamla}, under assumptions made on both the potential \(\potential{}\), and the mirror function \(\mirrorfunc{}\).
Recall that \(\mirrorfunc{}\) is assumed to be of Legendre type.
The other key assumptions are
\begin{assumplist}
\item \label{assump:self-concord} \(\mirrorfunc{}\) is a self-concordant function with parameter \(\alpha\) (\cref{def:self-concord}),
\item \label{assump:rel-convex-smooth} \(\potential{}\) is \(\mu\)-relatively convex and \(\lambda\)-relatively smooth with respect to \(\mirrorfunc{}\) (\cref{def:rel-convex,def:rel-smooth}),
\item \label{assump:rel-lipschitz} \(\potential{}\) is \(\beta\)-relatively Lipschitz continuous with respect to \(\mirrorfunc{}\) (\cref{def:rel-lipschitz}), and
\item \label{assump:symm-barrier} \(\mirrorfunc{}\) is a symmetric barrier with parameter \(\nu\) (\cref{def:sym-barrier}).
\end{assumplist}

Define the constants \(\bar{\alpha} = \max\{1, \alpha\}\) and \(\gamma = \frac{\lambda}{2} + \alpha \cdot \beta\), which appear in the theorem below.

\begin{theorem}
\label{thm:mix-mamla}
Consider a distribution \(\Pi\) with density \(\targetdens(x) \propto e^{-\potential{x}}\) that is supported on a compact and convex set \(\primalspace \subset \bbR^{d}\), and mirror map \(\mirrorfunc{} : \primalspace \to \bbR \cup \{\infty\}\).
If \(\potential{}\) and \(\mirrorfunc{}\) satisfy assumptions \ref{assump:self-concord}-\ref{assump:rel-lipschitz}, then there exists a maximum step size \(h_{\max} > 0\) given by
\begin{equation}
\label{eq:h-max-form}
    h_{\max} = \min\left\{1, ~\frac{C^{(1)}}{\alpha^{2}d^{3}} ~,~ \frac{C^{(2)}}{d\gamma} ~,~ \frac{C^{(3)}}{\alpha^{\nicefrac{4}{3}}\beta^{\nicefrac{2}{3}}} ~,~ \frac{C^{(4)}}{\beta^{\nicefrac{2}{3}}\gamma^{\nicefrac{2}{3}}}~,~ \frac{C^{(5)}}{\beta^{2}}\right\}
\end{equation}
for universal constants \(C^{(1)}, \ldots, C^{(5)}\) such that for any \(0 < h \leq h_{\max}\), \(\delta \in (0, 1)\), and \(M\)-warm initial distribution \(\Pi_{0}\) with respect to \(\targetdist\), \nameref{alg:mamla} has the following mixing time guarantees.
\begin{description}
    \item [When \(\mu > 0\) (strongly convex),]
\begin{equation}
    \label{eq:mix-time-strong}
    \mixingtime{\delta; \bfT, \Pi_{0}} = \calO\left(\max\left\{1, \frac{\bar{\alpha}^{4}}{\mu \cdot h}\right\} \cdot \log\left(\frac{\sqrt{M}}{\delta}\right)\right)~.
\end{equation}

\item [When \(\mu = 0\) (weakly convex),] and additionally assuming that \(\mirrorfunc{}\) satisfies \ref{assump:symm-barrier},
\begin{equation}
    \label{eq:mix-time-weak}
    \mixingtime{\delta; \bfT, \Pi_{0}} = \calO\left(\max\left\{1,  \frac{\nu \cdot \bar{\alpha}^{2}}{h}\right\} \cdot \log\left(\frac{\sqrt{M}}{\delta}\right)\right)~.
\end{equation}
\end{description}
\end{theorem}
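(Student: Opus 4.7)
The plan is to use the conductance-based framework for reversible Markov chains, adapted from the \MALA{} analysis of \cite{dwivedi2018log} to account for the mirror map. Since $\bfT$ is reversible with respect to $\targetdist$ by construction of the Metropolis filter, the Lovász--Simonovits bound gives $\mixingtime{\delta; \bfT, \Pi_{0}} = \calO(\Phi_{\bfT}^{-2} \cdot \log(\sqrt{M}/\delta))$, which reduces the theorem to a lower bound on the conductance $\Phi_{\bfT}$. I would bound $\Phi_{\bfT}$ by combining a one-step overlap estimate (nearby starting points induce similar one-step laws) with an isoperimetric inequality for $\targetdist$ on $\primalspace$.

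For the isoperimetric inequality I would invoke \cite{gopi2023algorithmic}, which provides Cheeger-type bounds for distributions on the Hessian manifold induced by a self-concordant function. Under \ref{assump:self-concord}--\ref{assump:rel-convex-smooth} with $\mu > 0$, this yields a Cheeger constant of order $\sqrt{\mu}/\bar{\alpha}^{2}$, contributing the prefactor $\bar{\alpha}^{4}/\mu$ of \eqref{eq:mix-time-strong} after squaring. When $\mu = 0$, combining their bound with the symmetric barrier assumption \ref{assump:symm-barrier} produces a Cheeger constant of order $1/(\bar{\alpha}\sqrt{\nu})$, contributing the $\nu \bar{\alpha}^{2}$ prefactor of \eqref{eq:mix-time-weak}. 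The one-step overlap reduces to identifying a high-probability ``good set'' $\primalspace_{0} \subseteq \primalspace$ such that for $x \in \primalspace_{0}$ the acceptance ratio in \cref{eq:accept-ratio-def} is at least $1/2$ with high probability over the Gaussian noise $\xi$, and such that for $x, y \in \primalspace_{0}$ with $\|y - x\|_{\hessmirror{x}}$ small, the proposals $\calP_{x}, \calP_{y}$ have small TV distance. The latter follows by Pinsker's inequality from a direct KL computation between two Gaussians in the dual space, transferred back to the primal via self-concordance estimates on $\dualtoprim{}$.

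The main obstacle is the acceptance-probability lower bound. Expanding $\log(\targetdens(z) p_{z}(x)/(\targetdens(x) p_{x}(z)))$ via \cref{eq:proposal-density} produces four distinct contributions: (a) the change $\potential{z} - \potential{x}$, handled via relative smoothness ($\lambda$) and relative Lipschitzness ($\beta$), with the constant $\gamma = \lambda/2 + \alpha\beta$ emerging from a quadratic upper bound plus a self-concordance cross-term; (b) the symmetric Bregman sum $D_{\mirrorfunc{}}(z; x) + D_{\mirrorfunc{}}(x; z)$ quantifying how far $\primtodual{z}$ sits from $\primtodual{x}$, controlled purely by $\alpha$; (c) the log-determinant ratio $\log(\det \hessmirror{z}/\det \hessmirror{x})$, whose third-order Taylor expansion under self-concordance is where the $d$-factor and the $\alpha^{2} d^{3}$ bound enter; and (d) a drift--gradient cross-term that is quadratic in $\primtodual{z} - \primtodual{x}$ and mixes $\alpha$ with $\beta$. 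Each contribution must be bounded with high probability over $\xi \sim \calN(0, I_{d})$ using Gaussian concentration, and requiring their sum to stay below a fixed constant such as $\log 2$ produces the five separate upper bounds on $h$ that together define $h_{\max}$ in \eqref{eq:h-max-form}.

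The delicate part is balancing these Taylor remainders. Each self-concordance or relative-smoothness estimate must be applied in its affine-invariant form so that no spurious dependence on $\calC_{\primalspace}$ appears; the bookkeeping is intricate because (c) in particular requires a third-order expansion whose remainder involves both $\alpha$ and the trace of a power of $\hessmirror{x}^{-1}\hessmirror{z}$. Once the good set $\primalspace_{0}$ is constructed and the step size chosen so that $\targetdist(\primalspace_{0}) \geq 1 - o(1)$, the remainder of the proof is standard: combine the proposal-overlap bound on nearby points with the isoperimetric inequality to obtain $\Phi_{\bfT} \gtrsim \sqrt{h}/\bar{\alpha}^{2} \cdot (\text{isoperimetric constant})$, square, and insert into the Lovász--Simonovits bound to recover the two stated mixing-time estimates.
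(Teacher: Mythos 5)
Your proposal follows the paper's proof essentially step for step: the Lov\'asz--Simonovits conductance bound, the isoperimetric inequality of \citet{gopi2023algorithmic} for the strongly-relatively-convex case and that of \citet{lovasz2003hit} (converted via the symmetric barrier and cross-ratio) for the weakly convex case, a one-step proposal overlap controlled by a Gaussian KL computation in the dual space plus Pinsker's inequality, and an acceptance-probability lower bound obtained by conditioning on a high-probability event for the Gaussian noise~$\xi$. The same decomposition of $h_{\max}$ into constraints coming from the log-determinant term ($\alpha^2 d^3$), the Bregman/drift terms ($d\gamma$, $\beta^{\nicefrac{2}{3}}\gamma^{\nicefrac{2}{3}}$, $\alpha^{\nicefrac{4}{3}}\beta^{\nicefrac{2}{3}}$), and the pure gradient term ($\beta^2$) also appears in the paper.

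Two minor remarks. The ``good set'' $\primalspace_0 \subseteq \primalspace$ of starting points you import from the \MALA{} analysis of \citet{dwivedi2018log} is unnecessary here: assumption (\textbf{A\textsubscript{3}}) bounds $\|\gradpotential{x}\|_{\hessmirror{x}^{-1}} \le \beta$ uniformly over $\interior{\primalspace}$, so the paper's overlap and acceptance estimates (\cref{lem:one-step-overlap}) hold for \emph{every} $x$, with no truncation of the domain and hence no need to track $\targetdist(\primalspace_0)$. Also, the paper's five-term split of $\log\bigl(\targetdens(z)p_z(x)/(\targetdens(x)p_x(z))\bigr)$ identifies the term that produces $\gamma = \nicefrac{\lambda}{2} + \alpha\beta$ as the Bregman \emph{commutator} of $\psi = \potential{} \circ \dualtoprim{}$ (a dual-space object), not a symmetric Bregman sum of $\mirrorfunc{}$ as your item~(b) suggests, and the quadratic-in-$(\primtodual{z}-\primtodual{x})$ term ($T_1^A$) is controlled by $\alpha$ alone, not by a mix of $\alpha$ and $\beta$. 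Finally, your $\bar\alpha$ bookkeeping is internally inconsistent: the paper derives $\Phi_{\bfT} \gtrsim \min\{1, \sqrt{\mu h}/\bar\alpha^2\}$ with one factor $1/\bar\alpha$ from $C_\alpha = 1/(8\alpha+4)$ in the isoperimetric inequality and one from the overlap radius $\Delta = \sqrt{h}/(10\bar\alpha)$, so that squaring gives $\bar\alpha^4/(\mu h)$; assigning $\bar\alpha^2$ both to the ``Cheeger constant'' and separately to the overlap-to-conductance step, as your text does, would double-count to $\bar\alpha^8$. These are slips in the sketch rather than gaps in the argument, and the claimed exponents in your conclusion are the right ones.
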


The proof of \cref{thm:mix-mamla} is given in \cref{sec:proofs:full-proof-thm}.

\subsubsection*{Mixing time for a Metropolis-adjusted Newton Langevin algorithm}
As noted previously, a specialisation of \ref{eq:MLD} / \ref{eq:MLA} is \ref{eq:NLD} / \textsf{NLA}, where \(\mirrorfunc{}\) is set to be \(\potential{}\).
This setting satisfies assumption \ref{assump:rel-convex-smooth} with constants \(\mu = \lambda = 1\), and when \(\potential{}\) is self-concordant with parameter \(\alpha\), and a barrier function with parameter \(\beta\), assumptions \ref{assump:self-concord} and \ref{assump:rel-lipschitz} are satisfied as well.
This leads to a corollary of \Cref{thm:mix-mamla} for the Metropolis-adjusted Newton Langevin algorithm, which is also unbiased with respect to the target, as stated in the following corollary.
In comparison, \cite{li2022mirror} provide a mixing time guarantee for \textsf{NLA} when \(\potential{}\) satisfies a modified self-concordance condition, which is a less desirable property for reasons elucidated in \cref{sec:intro:cond-num-indep}.

\begin{corollary}
\label{corr:mix-manla}
Consider a distribution \(\Pi\) with density \(\targetdens(x) \propto e^{-\potential{x}}\) that is supported on compact and convex set \(\primalspace \subset \bbR^{d}\).
When \(\potential{}\) is both a self-concordant function with parameter \(\alpha\) and a barrier function with parameter \(\beta\), there exists a maximum step size \(h_{\max} > 0\) of the form in \cref{eq:h-max-form} where \(\gamma = \frac{1}{2} + \alpha \cdot \beta\), such that for any \(0 < h \leq h_{\max}\), \(\delta \in (0, 1)\), and \(M\)-warm initial distribution \(\Pi_{0}\) with respect to \(\Pi\), \nameref{alg:mamla} with \(\mirrorfunc{} = \potential{}\) satisfies
\begin{equation*}
    \mixingtime{\delta; \bfT, \Pi_{0}} = \calO\left(\max\left\{1, \frac{\bar{\alpha}^{4}}{h}\right\} \cdot \log\left(\frac{\sqrt{M}}{\delta}\right)\right)~.
\end{equation*}
\end{corollary}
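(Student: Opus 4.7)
The plan is to derive Corollary \ref{corr:mix-manla} as a direct specialization of Theorem \ref{thm:mix-mamla} by verifying that the assumptions of the theorem are satisfied when $\mirrorfunc{} = \potential{}$, and then substituting the resulting parameters into the bounds \eqref{eq:mix-time-strong} and \eqref{eq:h-max-form}.

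First, I would verify each of the hypotheses \ref{assump:self-concord}--\ref{assump:rel-lipschitz} in turn. Assumption \ref{assump:self-concord} is immediate, since the corollary's hypothesis states that $\potential{} = \mirrorfunc{}$ is self-concordant with parameter $\alpha$. For assumption \ref{assump:rel-convex-smooth}, observe that when $\mirrorfunc{} = \potential{}$ we have $\hessmirror{x} = \nabla^{2}\potential{x}$ identically on $\primalspace$, so the relative convexity and smoothness inequalities in \cref{def:rel-convex} and \cref{def:rel-smooth} hold trivially with $\mu = \lambda = 1$. For assumption \ref{assump:rel-lipschitz}, the barrier function hypothesis of the corollary is exactly $\|\gradpotential{x}\|_{\nabla^{2}\potential{x}^{-1}} \leq \beta$; with $\mirrorfunc{} = \potential{}$, this is precisely the condition \eqref{eq:rel-lips-def} that $\potential{}$ is $\beta$-relatively Lipschitz with respect to $\mirrorfunc{}$.

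Next, I would substitute these parameters into the conclusions of Theorem \ref{thm:mix-mamla}. With $\lambda = 1$, the constant $\gamma$ in \eqref{eq:h-max-form} becomes $\gamma = \tfrac{1}{2} + \alpha\beta$, exactly matching the statement of the corollary. Since $\mu = 1 > 0$, we are in the strongly convex regime, so bound \eqref{eq:mix-time-strong} applies and yields
\begin{equation*}
\mixingtime{\delta;\bfT,\Pi_{0}} = \calO\left(\max\left\{1, \frac{\bar{\alpha}^{4}}{1 \cdot h}\right\} \cdot \log\left(\frac{\sqrt{M}}{\delta}\right)\right),
\end{equation*}
which is the claimed bound. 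No invocation of assumption \ref{assump:symm-barrier} is needed because the strongly convex branch of the theorem does not require it.

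Since the argument is entirely a verification and substitution, there is no real obstacle beyond confirming that the universal constants $C^{(1)},\ldots,C^{(5)}$ appearing in \eqref{eq:h-max-form} are unchanged by the specialization---which is automatic, as they depend only on the structural form of the bound and not on the specific values of $\mu$, $\lambda$, or on whether $\mirrorfunc{}$ and $\potential{}$ coincide. The only mild point to note in writing this up is to state explicitly that with $\mirrorfunc{} = \potential{}$ the proposal density \eqref{eq:proposal-density} and the Metropolis filter in \cref{alg:mamla} are still well-defined and implementable, since $\potential{}$ being a self-concordant barrier ensures it is of Legendre type (as used throughout the paper for $\mirrorfunc{}$). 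With these observations assembled, the corollary follows.
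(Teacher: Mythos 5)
Your proposal matches the paper's own argument: the paper states Corollary~\ref{corr:mix-manla} by observing that taking $\mirrorfunc{} = \potential{}$ gives $\mu = \lambda = 1$ in assumption~\ref{assump:rel-convex-smooth}, that the self-concordance and barrier hypotheses supply assumptions~\ref{assump:self-concord} and~\ref{assump:rel-lipschitz}, and then instantiating the strongly convex branch~\eqref{eq:mix-time-strong} of Theorem~\ref{thm:mix-mamla}. Your verification of each hypothesis, the resulting $\gamma = \frac{1}{2} + \alpha\beta$, and the remark that assumption~\ref{assump:symm-barrier} is not needed are all consistent with the paper's reasoning.
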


\subsubsection{A discussion of the result in \cref{thm:mix-mamla}}

We begin by discussing the assumptions \ref{assump:rel-convex-smooth}-\ref{assump:symm-barrier}.
\ref{assump:rel-convex-smooth} states that \(\potential{}\) is a relatively convex and smooth function with respect to \(\mirrorfunc{}\).
Prior works that analyse \ref{eq:MLA} and other discretisations of \ref{eq:MLD} \citep{zhang2020wasserstein,ahn2021efficient,jiang2021mirror,li2022mirror} consider this assumption.
As commented previously in \cref{sec:func-classes}, when \(\primalspace = \bbR^{d}\) and \(\mirrorfunc{x} = \frac{\|x\|^{2}}{2}\), this is equivalent to assuming strong convexity and smoothness of \(\potential{}\) in the usual sense which is instrumental in the analysis of \ref{eq:ULA}, \textsf{MALA}, and other unconstrained sampling methods.
We note that \cite{jiang2021mirror} assumes that \(\targetdist\) satisfies a  mirror log-Sobolev inequality in lieu of relative convexity of \(\potential{}\) with respect to \(\mirrorfunc{}\) to analyse \ref{eq:MLA}, \MLAFD{}, and another discretisation proposed in their work that we refer to as \MLABD{}.
While the usual strong convexity of \(\potential{}\) implies the log-Sobolev inequality \citep{bakry1984diffusions}, it is not known if relative strong convexity of \(\potential{}\) with respect to \(\mirrorfunc{}\) yields a mirror log-Sobolev inequality.
This makes it hard to assess whether this substitution in \cite{jiang2021mirror} is a weaker assumption than relative (strong) convexity in \ref{assump:rel-convex-smooth}.
Additionally, the mixing time guarantees for \ref{eq:MLA} in \cite{li2022mirror} (who work with a subset of assumptions in \cite{zhang2020wasserstein}) is only meaningful when \(\mu > 0\) in \ref{assump:rel-convex-smooth}, and a guarantee for \ref{eq:MLA} in the case where \(\potential{}\) is (weakly) convex (\(\mu = 0\)) is unknown still.
On the other hand, \cite{ahn2021efficient} give an analysis of \(\MLAFD{}\) for both cases i.e., \(\mu > 0\) and \(\mu = 0\).
Next, \ref{assump:rel-lipschitz} states that the gradients of \(\potential{}\) are bounded in the local norm \(\|.\|_{\hessmirror{.}^{-1}}\).
This is used in the analysis of \MLAFD{} and \MLABD{} in \cite{ahn2021efficient} and \cite{jiang2021mirror} respectively, but is not used in any existing analysis of \ref{eq:MLA}.
Finally, \ref{assump:symm-barrier} is a geometric property of \(\mirrorfunc{}\), and is useful to obtain guarantees in the case where \(\mu = 0\) in \ref{assump:rel-convex-smooth}.
In this case, we rely on isoperimetric inequalities for sampling from log-concave densities over convex bodies \citep{vempala2005geometric}, which we get through \ref{assump:symm-barrier}.
This assumption has been employed in prior work to analyse the Dikin walk \citep{laddha2020strong,kook2023efficiently}, constrained \textsf{RHMC} \citep{kook2023condition}, hybrid \textsf{RHMC} \citep{gatmiry2023sampling}.

\begin{table}[t]
\centering
\renewcommand{\arraystretch}{1.75}
\begin{tabular}{ccccc}
Setting of \(\mu\) in \ref{assump:rel-convex-smooth} & \makecell{\nameref{alg:mamla} \\ {\scriptsize (this work)}} & \ref{eq:MLA}\textsuperscript{*} & \MLAFD{} & \MLABD{} \\
\hline
\(\mu > 0\) & \(\widetilde{\calO}\left(\frac{1}{\mu} \max\left\{d^{3}, d\lambda\right\} \log\left(\frac{1}{\delta}\right)\right)\) & \(\widetilde{\calO}\left(\frac{d \lambda^{2}}{\mu^{3} \delta^{2}}\right)\) & \(\widetilde{\calO}\Big(\frac{d \lambda}{\mu\delta^{2}}\Big)\) & \(\widetilde{\calO}\left(\frac{\sqrt{\lambda^{2} + d^{3}}}{\mu^{\nicefrac{3}{2}}\delta}\right)\) \\
[4pt] \arrayrulecolor{lightgray}\hline 
\(\mu = 0\) & \(\widetilde{\calO}\left(\nu \max\Big\{d^{3}, d\lambda\Big\} \log\left(\frac{1}{\delta}\right)\right)\) & N/A & \(\widetilde{\calO}\left(\frac{d^{2} \lambda}{\delta^{4}}\right)\) & N/A \\
\end{tabular}
\caption{Comparison of mixing time guarantees for algorithms based on discretisations of \ref{eq:MLD}.
We use the \(\widetilde{\calO}\) notation to only showcase the dependence on \(\lambda, \mu, d\), and \(\delta\), and omit the dependence on other parameters.
For \MLABD{} in \cite{jiang2021mirror}, \(\mu\) is the constant in the mirror log-Sobolev condition, as relative strong convexity is not considered, and \(\lambda\) is the constant in the alternative relative smoothness condition assumed on \(\potential{}\).
}
\label{tab:comparison}
\end{table}

Under these assumptions, we are interested in how the mixing time guarantees scale with the error tolerance \(\delta\), and the dimension \(d\) of \(\primalspace\).
\Cref{tab:comparison} summarises the comparison between \nameref{alg:mamla} and other algorithms based on discretisations of \ref{eq:MLD} that have been discussed above.
The analyses of \ref{eq:MLA}, \MLAFD{}, and \MLABD{} in the aforementioned works consider varying definitions of mixing time to the \(\delta\)-mixing time in TV distance we define in \Cref{sec:prelims-markov-chains}, and additional assumptions which we highlight as follows.
The most recent mixing time guarantees for \ref{eq:MLA} in \cite{li2022mirror} is given in terms of the mirrored 2-Wasserstein distance, as was previously done in \cite{zhang2020wasserstein}; this is indicated by an asterisk in \Cref{tab:comparison}.
Both of these works also assume that \(\mirrorfunc{}\) satisfies a \emph{modified self-concordance} condition instead of self-concordance as we do in \ref{assump:self-concord}.
Furthermore, the relation between the mirrored 2-Wasserstein distance and a more canonical functional like the KL divergence (\(\KLdist{}\)) / TV distance (\(\TVdist{}\)) cannot be easily established without assuming that \(\mirrorfunc{}\) satisfies additional properties.
For \MLAFD{}, \cite{ahn2021efficient} establish bounds on \(K\) such that \(\KLdist(\overline{\Pi}_{K}, \Pi) \leq \delta\).
Here, \(\overline{\Pi}_{K}\) is a uniform mixture distribution composed of the sequence of iterates \(\{\Pi_{k}\}_{k=1}^{K}\), where \(\Pi_{k}\) is the distribution at iteration \(k\).
For \MLABD{}, \cite{jiang2021mirror} gives upper bounds on \(K\) such that \(\KLdist(\Pi_{K}, \Pi) \leq \delta\).
To do so, they place a different relative smoothness assumption over \(\potential{}\), and additionally assume that \(\mirrorfunc{}\) is strongly convex.
For \MLAFD{} and \MLABD{}, we infer mixing time guarantees in TV distance from guarantees in KL divergence using Pinsker's inequality which states that \(\TVdist(\mu, \nu) \leq \sqrt{\frac{1}{2}\KLdist(\mu, \nu)}\).
We call \nameref{alg:mamla} \emph{fast} due to the dependence on \(\delta\) in the mixing time guarantees (\(\log(\nicefrac{1}{\delta})\)), which is exponentially better than the dependence of \(\delta\) in the mixing time guarantees for \ref{eq:MLA}, \MLAFD{}, and \MLABD{} (\(\mathrm{poly}(\nicefrac{1}{\delta})\)).
This echoes the improvement observed in the mixing time guarantees for \MALA{} relative to \ref{eq:ULA} \citep{dwivedi2018log,chewi2021optimal}.
In contrast, the dependence on \(d\) is better in the mixing time guarantees for the unadjusted Mirror Langevin discretisations; fixing \(\beta\), \(\lambda\), \(\mu\), the mixing time bound for \nameref{alg:mamla} scales as \(d^{3}\) compared to \(d^{\gamma}\) with \(\gamma \in \{1, 1.5, 2\}\) in the mixing time bounds for the other methods.

\subsection{Applications of \nameref{alg:mamla} with provable guarantees}
\label{sec:algo:applications-thm}

In this subsection, we discuss some applications of \nameref{alg:mamla} for which we can infer mixing time guarantees from \cref{thm:mix-mamla}.
These are (1) uniform sampling from polytopes and regions defined by the intersection of ellipsoids, and (2) sampling from Dirichlet distributions.
The proofs of the statements given in this subsection are stated in \cref{sec:proofs:corollaries}.
We use \(C\) to denote a universal positive constant in the corollaries, which can change between corollaries.

\subsubsection{Uniform sampling over polytopes and intersection of ellipsoids}

For uniform sampling, the target density \(\targetdens\) is a constant function, and consequently \(\gradpotential{x} = 0\) for any \(x \in \interior{\primalspace}\).
In this setting, \nameref{alg:mamla} can be viewed as a Gaussian \textsf{DikinWalk} in the dual space \((\bbR^{d}, \hessmirrorinv{})\).
We are interested in approximate uniform sampling from the following sets.
\begin{itemize}[leftmargin=*]
\item \(\mathsf{Polytope}(A, b)\): a bounded polytope with non-zero volume defined by \(\{x \in \bbR^{d} : Ax \leq b\}\) for matrix \(A \in \bbR^{m \times d}\) and vector \(b \in \bbR^{m}\), and
\item \(\mathsf{Ellipsoids}(\{(c_{i}, M_{i})\}_{i=1}^{m})\): a non-empty region defined by the intersection of ellipsoids \(\{x \in \bbR^{d} : \|x - c_{i}\|_{M_{i}}^{2} \leq 1~\forall ~i \in [m]\}\) for a sequence of \(d \times d\) symmetric positive definite matrices \(\{M_{i}\}_{i=1}^{m}\) and centres \(\{c_{i}\}_{i=1}^{m}\).
The radius of \(1\) does not affect the generality of this region.
\end{itemize}
The following corollary establishes mixing time guarantees for uniform sampling over these sets.

\begin{corollary}
\label{corr:mixing-time-polytope-ellipsoids}
Let \(\Pi_{0}\) be a \(M\)-warm distribution with respect to the uniform distribution over either \(\primalspace = \mathsf{Polytope}(A, b)\), or \(\primalspace = \mathsf{Ellipsoids}(\{(c_{i}, M_{i})\}_{i=1}^{m})\), and let \(\mirrorfunc{}\) be the log-barrier of \(\primalspace\).
Then, for any \(\delta \in (0, 1)\), the mixing time of \nameref{alg:mamla} is
\begin{equation*}
    C \cdot m \cdot d^{3} \cdot \log\left(\frac{\sqrt{M}}{\delta}\right)~.
\end{equation*}
\end{corollary}

\textbf{Remark.}~~
For polytopes, \citet{laddha2020strong} show that \textsf{DikinWalk} satisfies a mixing time of \(m \cdot d\) owing to a \emph{strong self-concordance} condition that holds in this setting.
They also propose \textsf{WeightedDikinWalk}, which has a mixing time that scales as \(d^{2}\) (independent of the number of constraints).
In a similar vein, \cite{gatmiry2023sampling} propose a modification to \textsf{RHMC}, and prove a mixing time guarantee that scales as \(m^{\nicefrac{1}{3}} \cdot d^{\nicefrac{4}{3}}\) for their method.
\cite{kook2023condition} study a constrained \textsf{RHMC} algorithm applicable to this setting i.e., uniform sampling from both polytopes and intersection of ellipsoids, and show a mixing time guarantee for this algorithm that scales as \(m \cdot d^{3}\).
Our analysis echoes the mixing time guarantee of the latter method and scales as \(m \cdot d^{3}\).

\subsubsection{Sampling from Dirichlet distributions}

The Dirichlet distribution is the multi-dimensional generalisation of the Beta distribution.
A sample from the Dirichlet distribution \(x' \in \bbR^{d + 1}\) satisfies \(x'_{i} \geq 0\) for all \(i \in [d + 1]\), and \(\bm{1}^{\top}x' = 1\), and thus an element of \(\Delta_{d}\).
Equivalently, we can also express this sample with the first \(d\) elements \(x \in \bbR^{d}_{+}\) satisfying an inequality constraint \(\bm{1}^{\top}x \leq 1\), and write \(x'_{d + 1} = 1 - \bm{1}^{\top}x\).
We work with the latter definition, and \(\primalspace = \{x \in \bbR^{d}_{+} : \bm{1}^{\top}x \leq 1\}\).
The Dirichlet distribution is parameterised by a vector of positive reals\footnote{More generally, \(a_{i} > -1\), but we focus on when \(a_{i} > 0\). The case where \(a_{i} \in (-1, 0)\) results in antimodes.} \(\bm{a} \in \bbR^{d + 1}_{+}\) called the concentration parameter.
The density is
\begin{equation}
\label{eq:potential-dirichlet}
    \targetdens(x) \propto \exp(-\potential{x})~; \qquad \potential{x} = -\sum_{i=1}^{d}a_{i} \cdot \log x_{i} - a_{d + 1} \cdot \log\left(1 - \sum_{i = 1}^{d}x_{i}\right)~.
\end{equation}
We use \(\bm{a}_{\min}\) to denote the minimum of \(\bm{a}\).
Since the sample space is a special polytope, the log-barrier of \(\primalspace\) is a natural consideration for the mirror function \(\mirrorfunc{}\), and with this choice of \(\mirrorfunc{}\), we generate approximate samples from a Dirichlet distribution using \nameref{alg:mamla}.
The following corollary states a mixing time upper bound for this task.

\begin{corollary}
\label{corr:dirichlet-sampling}
Let \(\Pi_{0}\) be a \(M\)-warm distribution with respect to a Dirichlet distribution parameterised by \(\bm{a} \in \bbR^{d + 1}_{+}\).
Let \(\mirrorfunc{}\) be the log-barrier of \(\calK\), and \(\potential{}\) be as defined in \cref{eq:potential-dirichlet}.
If \(\|\bm{a}\| \geq 1\), then for any \(\delta \in (0, 1)\), the mixing time of \nameref{alg:mamla} is
\begin{equation*}
    C \cdot \frac{\max\{d^{3},~ d \cdot \|\bm{a}\|,~ \|\bm{a}\|^{2}\}}{\bm{a}_{\min}} \cdot \log\left(\frac{\sqrt{M}}{\delta}\right)~.    
\end{equation*}
\end{corollary}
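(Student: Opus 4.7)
The plan is to verify assumptions \ref{assump:self-concord}--\ref{assump:rel-lipschitz} with the claimed constants and then invoke the strongly convex case of \cref{thm:mix-mamla}, since \(\mu = \bm{a}_{\min} > 0\). The key observation is that \(\mirrorfunc{}\) and \(\potential{}\) are built from the same \(d+1\) affine constraints defining \(\primalspace\), so their Hessians admit a common rank-one decomposition that makes the verification nearly mechanical.

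Concretely, I would introduce the vectors \(v_i = e_i/x_i\) for \(i \in [d]\) and \(v_{d+1} = \bm{1}/(1 - \bm{1}^{\top}x)\). Direct differentiation yields
\begin{equation*}
\hessmirror{x} = \sum_{i=1}^{d+1} v_i v_i^{\top}, \qquad \nabla^{2}\potential{x} = \sum_{i=1}^{d+1} a_i \, v_i v_i^{\top}.
\end{equation*}
Self-concordance of the log-barrier of a polytope with parameter \(\alpha = 1\) is the standard fact noted in \cref{sec:func-classes}, giving \ref{assump:self-concord}. From the decomposition we read off \(\bm{a}_{\min} \cdot \hessmirror{x} \preceq \nabla^{2}\potential{x} \preceq \bm{a}_{\max} \cdot \hessmirror{x}\), which is exactly \ref{assump:rel-convex-smooth} with \(\mu = \bm{a}_{\min}\) and \(\lambda = \bm{a}_{\max}\).

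The main step is \ref{assump:rel-lipschitz}. I would rewrite the gradient in the same basis: \(\gradpotential{x} = V b\), where \(V = [v_1 \mid \cdots \mid v_{d+1}] \in \bbR^{d \times (d+1)}\) and \(b = (-a_1, \ldots, -a_d, a_{d+1})^{\top}\). Since \(\hessmirror{x} = V V^{\top}\) is positive definite, \(V\) has full row rank, so \(V^{\top}(VV^{\top})^{-1} V\) is an orthogonal projection with operator norm \(1\). Therefore
\begin{equation*}
\|\gradpotential{x}\|_{\hessmirror{x}^{-1}}^{2} = b^{\top} V^{\top}(VV^{\top})^{-1} V b \le \|b\|^{2} = \|\bm{a}\|^{2},
\end{equation*}
giving \(\beta = \|\bm{a}\|\). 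I expect this projection argument to be the conceptual heart of the verification; the rest is bookkeeping.

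To conclude, I would plug \(\alpha = 1\) (hence \(\bar\alpha = 1\)) and \(\gamma = \bm{a}_{\max}/2 + \|\bm{a}\| = \calO(\|\bm{a}\|)\), using \(\bm{a}_{\max} \le \|\bm{a}\|\), into \cref{eq:h-max-form}. The hypothesis \(\|\bm{a}\| \ge 1\) absorbs the \(\|\bm{a}\|^{2/3}\) and \(\|\bm{a}\|^{4/3}\) terms into the dominant \(\|\bm{a}\|^{2}\) one, so that \(h_{\max} = \Omega\bigl(\min\{1/d^{3},\, 1/(d\|\bm{a}\|),\, 1/\|\bm{a}\|^{2}\}\bigr)\). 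Substituting into \cref{eq:mix-time-strong} with \(\mu = \bm{a}_{\min}\) yields the stated mixing time.
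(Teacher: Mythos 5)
Your proposal is correct and reaches the same constants and final bound as the paper, but your verification of \ref{assump:rel-lipschitz} is a genuinely different route. The paper verifies the relative Lipschitz bound through \cref{lem:dirichlet-relative-lipschitz}: it writes out \(\langle \gradpotential{x}, v\rangle^{2}\) and \(\|v\|_{\hessmirror{x}}^{2}\) for an arbitrary test direction \(v \in \bbR^{d}\), then packages the two sides into \((d+1)\)-vectors and applies Cauchy-Schwarz to get \(\langle \gradpotential{x}, v\rangle^{2} \leq \|\bm{a}\|^{2} \cdot \|v\|_{\hessmirror{x}}^{2}\), which is the dual (variational) characterisation of \(\|\gradpotential{x}\|_{\hessmirror{x}^{-1}} \leq \|\bm{a}\|\). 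Your argument instead introduces the factorisation \(\hessmirror{x} = VV^{\top}\) and \(\gradpotential{x} = Vb\), then computes the local norm directly as \(b^{\top}V^{\top}(VV^{\top})^{-1}Vb\) and observes that \(V^{\top}(VV^{\top})^{-1}V\) is an orthogonal projection, hence \(\preceq I_{d+1}\). The two arguments are of course equivalent — a projection being bounded by the identity is Cauchy-Schwarz in operator form — but your framing is more structural: the single decomposition \(\hessmirror{x} = \sum_i v_iv_i^{\top}\), \(\nabla^{2}\potential{x} = \sum_i a_i v_iv_i^{\top}\) simultaneously makes \ref{assump:rel-convex-smooth} a triviality (eigenvalue comparison on the coefficients) and sets up \ref{assump:rel-lipschitz} as a projection fact, whereas the paper treats the Hessian comparison and the Lipschitz bound as independent computations. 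The tradeoff is that the paper's Lemma \ref{lem:dirichlet-relative-lipschitz} is entirely elementary and self-contained (plain Cauchy-Schwarz on two explicit vectors), while yours requires recognising \(V^{\top}(VV^{\top})^{-1}V\) as a projection and checking that \(V\) has full row rank (which holds since \(\hessmirror{x} = VV^{\top} \succ 0\)). Both correctly feed into \cref{eq:h-max-form} and \cref{eq:mix-time-strong}; the paper retains a \(\|\bm{a}\|^{\nicefrac{3}{2}}\) term coming from a \(\beta\sqrt{\gamma}\) quantity in the precise step-size bounds, while you wrote \(\|\bm{a}\|^{\nicefrac{4}{3}}\) from the \(\beta^{\nicefrac{2}{3}}\gamma^{\nicefrac{2}{3}}\) term — both are dominated by \(\|\bm{a}\|^{2}\) once \(\|\bm{a}\| \geq 1\), so the final statement is unaffected.
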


\textbf{Remark.}~~
In comparison, a corollary of the guarantee for \ref{eq:MLA} as shown in \cite{li2022mirror} gives a mixing time guarantee that scales as \(\frac{d \bm{a}_{\max}^{2}}{\bm{a}_{\min}^{3}} \cdot \frac{1}{\delta^{2}}\) for this task.
It is worth noting that this holds only when the \emph{modified self-concordance} parameter used in its analysis is at most \(\bm{a}_{\min}\), and that this is in the mirrored 2-Wasserstein distance.
\MLAFD{} \citep{ahn2021efficient} on the other hand satisfies a mixing time guarantee that scales as \(\frac{d \|\bm{a}\|}{\bm{a}_{\min}} \cdot \frac{1}{\delta^{2}}\).
We cannot establish mixing time guarantees for \MLABD{} in this case due to the use of a mirror log-Sobolev inequality instead of relative strong convexity, and this former condition is hard to verify in general.

\subsection{Implementation details}
For practical purposes, a discussion about the implementation of \nameref{alg:mamla} is in order.
We give the following proposition which quantifies the complexity of each iteration of the algorithm.
\begin{proposition}
    Assume the invariant that at each iteration \(k\) of \nameref{alg:mamla} that we have the primal iterate \(x_{k}\) and the dual iterate \(y_{k} = \primtodual{x_{k}}\).
    If the cost of computing \(\hessmirror{}\), \(\gradpotential{}\), and \(\dualtoprim{}\) is \(\calO(d^{3})\), then each iteration can be implemented with cost \(\calO(d^{3})\).
\end{proposition}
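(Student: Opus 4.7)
The plan is to walk through one iteration of \nameref{alg:mamla} and account for the cost of every operation, exploiting the invariant that at the start of iteration \(k\) we already possess both \(x_{k}\) and \(y_{k} = \primtodual{x_{k}}\). The key observation is that essentially every quantity required by the acceptance ratio in \cref{eq:proposal-density} is either (i) already available from the proposal step, (ii) obtainable from a single Cholesky factorisation of \(\hessmirror{x_{k}}\), or (iii) a cost already assumed to be \(\calO(d^{3})\).

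First, for the proposal step (line \ref{alg:proposal-step}), I would compute \(\gradpotential{x_{k}}\) and \(\hessmirror{x_{k}}\) at cost \(\calO(d^{3})\) by assumption, form the Cholesky factor \(L_{k}\) with \(L_{k}L_{k}^{\top} = \hessmirror{x_{k}}\) in \(\calO(d^{3})\), draw \(\xi_{k} \sim \calN(0, I_{d})\), and assemble the dual-space argument
\begin{equation*}
    w_{k} \defeq y_{k} - h\cdot\gradpotential{x_{k}} + \sqrt{2h}\cdot L_{k}\xi_{k}
\end{equation*}
at \(\calO(d^{2})\) cost. One call to \(\dualtoprim{}\) then yields \(z = \dualtoprim{w_{k}}\) at cost \(\calO(d^{3})\). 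Crucially, since \(\primtodual{}\) and \(\dualtoprim{}\) are mutual inverses, we automatically obtain \(\primtodual{z} = w_{k}\) \emph{for free}, so the invariant is restored for the candidate point without an extra gradient evaluation of \(\mirrorfunc{}\).

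Second, for the acceptance ratio (line \ref{alg:acceptance-ratio-step}), I need to evaluate both \(p_{x_{k}}(z)\) and \(p_{z}(x_{k})\) from \cref{eq:proposal-density}, along with \(\pi(z)/\pi(x_{k}) = \exp(\potential{x_{k}} - \potential{z})\). This requires \(\gradpotential{z}\) and \(\hessmirror{z}\), each \(\calO(d^{3})\), and a Cholesky factor \(L_{z}\) of \(\hessmirror{z}\), also \(\calO(d^{3})\). The two log-determinants are then read off as \(2\sum_{i}\log(L_{k})_{ii}\) and \(2\sum_{i}\log(L_{z})_{ii}\) at \(\calO(d)\). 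The two Mahalanobis norms appearing in the exponents, \(\|\primtodual{z} - \primtodual{x_{k}} + h\gradpotential{x_{k}}\|_{\hessmirror{x_{k}}^{-1}}^{2}\) and its symmetric counterpart at \(z\), are computed by two triangular solves against \(L_{k}\) and \(L_{z}\) respectively, each \(\calO(d^{2})\). Finally, computing \(\potential{x_{k}}\) and \(\potential{z}\) and the comparison against a uniform \(U\) are negligible (at worst \(\calO(d^{3})\) if \(\potential{}\) itself costs that much, which is subsumed).

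Third, to maintain the invariant into iteration \(k+1\), I note that if the proposal is accepted we set \(x_{k+1} = z\) and \(y_{k+1} = w_{k}\), both already in hand; if rejected, the pair \((x_{k}, y_{k})\) is reused. Summing the contributions gives a per-iteration cost of \(\calO(d^{3})\), completing the proof. The only subtlety worth flagging is the bookkeeping that \(\primtodual{z}\) does not need a separate evaluation of \(\primtodual{}\), which is what prevents the cost from incurring an additional mirror map application; this is the step where the invariant pays off, and I would emphasise it explicitly in the write-up.
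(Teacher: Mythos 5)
Your proposal is correct and follows essentially the same accounting as the paper: one Cholesky factorisation at each of \(x_k\) and \(z\), log-determinants read off the Cholesky diagonals, Mahalanobis norms via triangular solves, and the observation that \(\primtodual{z} = w_k\) comes free since \(\dualtoprim{}\) inverts \(\primtodual{}\), which also maintains the invariant. The only cosmetic difference is that the paper further notes the algebraic simplification \(\|\primtodual{z} - \primtodual{x_k} + h\gradpotential{x_k}\|_{\hessmirror{x_k}^{-1}}^2 = 2h\|\xi_k\|^2\), saving one triangular solve, but this does not change the \(\calO(d^3)\) conclusion.
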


We next describe a procedure that satisfies this complexity claim.
For convenience, we use the shorthand notation \(\costhessmirror\), \(\costgradpotential\), \(\costdualtoprim\) to denote the computational costs associated with computing the Hessian, the gradient of the potential and the inverse mirror map respectively at any point \(x \in \interior{\primalspace}\).

The key steps in \nameref{alg:mamla} are \cref{alg:proposal-step,alg:acceptance-ratio-step}.
We first examine the cost of \cref{alg:proposal-step}.
Note that \cref{alg:proposal-step} can be equivalently written as
\begin{equation*}
    \underbrace{\tilde{\xi} \sim \calN(0, \hessmirror{x_{k}})}_{\text{Step 1}}~; \quad \underbrace{\tilde{z} = y_{k} - h \cdot \gradpotential{x_{k}} + \sqrt{2h} \cdot \tilde{\xi}~}_{\text{Step 2}}; \quad \underbrace{z = \dualtoprim{\tilde{z}}}_{\text{Step 3}}~.
\end{equation*}
The computational cost of \cref{alg:proposal-step} is the sum of costs of these steps above.
\begin{description}
\item [Step 1:] this involves computing \(\hessmirror{x_{k}}\).
The most efficient way to sample from \(\calN(0, \hessmirror{x_{k}})\) when \(\hessmirror{x_{k}}\) has no specific structure is by computing the Cholesky factor \(L_{x_{k}}\)\footnote{Assumed to be lower triangular.}, and using \(\xi \sim \calN(0, I)\) to obtain \(\tilde{\xi} = L_{x_{k}}\xi\).
This circumvents computing the matrix square root, whose implementation can be imprecise and slow in comparison.
The cost of this step is
\begin{equation*}
    \costhessmirror + \costcholesky{d} + \calO(d^{2})
\end{equation*}
where the additional \(\calO(d^{2})\) is the cost for matrix-vector product cost for computing \(L_{x_{k}}\xi\) and absorbs the \(\calO(d)\) cost for sampling from \(\calN(0, I)\).
\item [Step 2:] here we require computing \(\gradpotential{x_{k}}\).
The remainder of this step is scalar-vector multiplication and vector addition, which gives the cost as
\begin{equation*}
    \costgradpotential + \calO(d)~.
\end{equation*}
\item [Step 3:] this involves computing the inverse mirror map only, costing \(\costdualtoprim\).
\end{description}
The net cost of \cref{alg:proposal-step} is hence \(\costhessmirror + \costcholesky{d} + \costgradpotential + \costdualtoprim + \calO(d^{2})\).
Next, we examine the cost of \cref{alg:acceptance-ratio-step}.
The key operation is computing the ratio inside the minimum.
For precision purposes, it is preferrable to work in the log-scale, and the explicit form of the log of this ratio is
\begin{align*}
    \log &~ \frac{\pi(z) p_{z}(x_{k})}{\pi(x_{k}) p_{x_{k}}(z)} = \underbrace{\potential{x_{k}} - \potential{z}}_{\text{Step 4}} + \underbrace{\frac{3}{2}\left\{\log \det \hessmirror{x_{k}} - \log \det \hessmirror{z}\right\}}_{\text{Step 5}} \numberthis \\
    &+ \underbrace{\frac{1}{4h}\left(\|\primtodual{z} - \primtodual{x_{k}} + h \cdot \gradpotential{x_{k}}\|^{2}_{\hessmirror{x_{k}}^{-1}} - \|\primtodual{x_{k}} - \primtodual{z} + h \cdot \gradpotential{z}\|^{2}_{\hessmirror{z}^{-1}}\right)}_{\text{Step 6}}~.
\end{align*}

Given \(L_{z}\) such that \(\hessmirror{z} = L_{z}L_{z}^{\top}\), step 6 can be simplfied as follows.
\begin{align*}
    \|\primtodual{z} - (\primtodual{x_{k}} - h \cdot \gradpotential{x_{k}})\|_{\hessmirror{x_{k}}^{-1}}^{2} &= 2h \cdot \|\xi\|^{2} ~,\\
    \|(\primtodual{x_{k}} - \primtodual{z}) + h \cdot \gradpotential{z}\|_{\hessmirror{z}^{-1}}^{2} &= \|h \cdot (\gradpotential{x_{k}} + \gradpotential{z}) - \sqrt{2h} \cdot L_{x_{k}}\xi\|_{\hessmirror{z}^{-1}}^{2} \\
    &= h^{2} \cdot \|L_{z}^{-1}(\gradpotential{x_{k}} + \gradpotential{z})\|^{2} + 2h \cdot \|L_{z}^{-1}L_{x_{k}}\xi\|^{2} \\
    &\qquad - (2h)^{\nicefrac{3}{2}}\langle L_{z}^{-1}(\gradpotential{x_{k}} + \gradpotential{z}), L_{z}^{-1}L_{x_{k}}\xi\rangle~.
\end{align*}
Above, we have demonstrated that computing the Cholesky factor of \(\hessmirror{z}\) is also beneficial, since the log determinant of a lower triangular matrix is the sum of log of the values on the diagonal, and triangular solves can be performed extremely efficiently (due to the echelon form).
The cost of computing the log acceptance ratio is the sum of costs of these steps above.
\begin{description}
    \item [Step 4:] this involve 2 calls to the potential, costing \(2 \cdot \costpotential\)
    \item [Step 5:] here, we have to compute \(\hessmirror{z}\) and its Cholesky factor.
    As noted previously, given the Cholesky factors, the log determinant can be computed efficiently since \(\log \det \hessmirror{x_{k}} = \log (\det L_{x_{k}})^{2} = 2 \log \det L_{x_{k}}\) (and respectively for \(\log \det \hessmirror{z}\)).
    The cost of this step is
    \begin{equation*}
        \costhessmirror + \costcholesky{d} + \calO(d)~.
    \end{equation*}
    \item [Step 6:] first, we have to compute \(\gradpotential{z}\).
    From the simplifications above, we have to perform two triangular solves to obtain \(L_{z}^{-1}L_{x_{k}}\xi\) and \(L_{z}^{-1}(\gradpotential{x_{k}} + \gradpotential{z})\).
    Recall that \(L_{x_{k}}\xi = \tilde{\xi}\) from Step 1.
    After performing these, we have to compute 3 inner products, which cost \(\calO(d)\).
    The net cost of this step is
    \begin{equation*}
        \costgradpotential + 2 \cdot \costtrisolve{d} + \calO(d)~.
    \end{equation*}
\end{description}
This gives the total cost of an iteration assuming the invariant as
\begin{equation*}
    2 \cdot (\costhessmirror + \costgradpotential + \costpotential + \costcholesky{d} + \costtrisolve{d}) + \costdualtoprim + \calO(d^{2})~. 
\end{equation*}

Without any assumptions about the structure of the matrices, \(\costtrisolve{d}\) is \(\calO(d^{2})\) and \(\costcholesky{d}\) is \(\calO(d^{3})\) \citep{trefethen1997numerical}.
When \(\costgradpotential\), \(\costhessmirror\), and \(\costdualtoprim\) is \(\calO(d^{3})\), the total cost is \(\calO(d^{3})\), thus asserting the proposition.

In practice, the cost of computing \(\hessmirror{}\), \(\gradpotential{}\), and \(\dualtoprim{}\) depends on the forms of \(\potential{}\) and \(\phi\), and is also further influenced by parallelisation and vectorisation functionality in many popular scientific computing packages which are used to implement such algorithms in practice, and these optimisations  can lower the net computational cost.
Additionally, due to such hardware optimisations, multiple samples can be simultaneously obtained in a ``batched'' manner.
When \(\phi\) is the log-barrier of a polytope, \(\costhessmirror = \calO(md^{2})\) where \(m\) is the number of linear constraints.
However, computing the inverse mirror map can be more complex.
For general polytopes, this is best computed by solving the dual problem \(\dualtoprim{y} = \argmin_{x \in \primalspace} \phi(x) - x^{\top}y\) using state-of-the-art convex solvers.
There are cases where this can be avoided using closed form expression, or reductions to binary search.
In the case where \(\primalspace\) is a box given by \([-b_{1}, b_{1}] \times \ldots \times [-b_{d}, b_{d}]\) for a positive sequence \(\{b_{i}\}_{i=1}^{d}\), or when \(\primalspace\) is an ellipsoid, the \(\costdualtoprim = \calO(d)\) using closed form expressions.
The case of the box is also special because the Cholesky decompositions and triangular solves are not necessary, since \(\hessmirror{x}\) is diagonal for all \(x\), and the cost of an iteration is reduced to \(\calO(d)\) in total.
When \(\primalspace\) is a simplex, it is not possible to obtain a closed form expression for \(\dualtoprim{}\), but this can be computed to machine precision by reducing it to binary search over an interval, and the net cost in this case is \(\calO(d)\) as well, and we give an algorithm to compute this in \cref{app:sec:simplex-inverse-mirror-map}.

\section{Numerical experiments}
\label{sec:expts}
In this section, we showcase numerical experiments that we conducted to assess \nameref{alg:mamla} on a selection of problems which have previously been discussed in \cref{sec:algo:applications-thm}.\footnote{Code for these experiments can be found at \url{https://github.com/vishwakftw/metropolis-adjusted-MLA}.}
We are interested in three settings: (1) drawing approximately uniform samples from certain polytopes and an ellipsoid (\cref{sec:uniform-sampling-empirics}), (2) drawing approximate samples from a Dirichlet distribution (\cref{sec:dirichlet-sampling-empirics}), and (3) analysing the effect of step size on the acceptance rate of the Metropolis-Hastings filter in \nameref{alg:mamla} (\cref{sec:accept-rate-analysis-empirics}).
Since we only have samples at each iteration of such MCMC algorithms, estimating the TV distance is not possible.
To circumvent this, we use other suitable approximate metrics that are amenable to empirical measures, based on prior works.
We describe these metrics as we proceed because they vary from problem to problem.

\subsection{Uniform sampling}
\label{sec:uniform-sampling-empirics}

In this problem, we are interested in drawing approximately uniform samples from three kinds of domains.
The first kind of domain is a \(d\)-dimensional box defined as \([-b_{1}, b_{1}] \times \ldots \times [-b_{d}, b_{d}]\) for a positive sequence \(\{b_{i}\}_{i=1}^{d}\).
This can equivalently be written as a polytope with \(2d\) constraints, and we will use the notation \(\mathsf{Box}(\{b_{i}\}_{i=1}^{d})\) for convenience.
The second kind of domain is a \(d\)-dimensional ellipsoid centred at \(\bm{0}\), which we denote as \(\mathsf{Ellipsoid}(M) = \{x : \|x\|_{M} \leq 1\}\).
The third kind of domain is a simplex in \((d + 1)\) dimensions embedded in \(\bbR^{d}\), which we denote as \(\mathsf{S}_{d} = \{x \in \bbR^{d} : x_{i} \geq 0 ~\forall i \in [d] ~, \bm{1}^{\top}x \leq 1\}\).

\begin{figure}[t]
    \centering
    \includegraphics[width=0.87\linewidth]{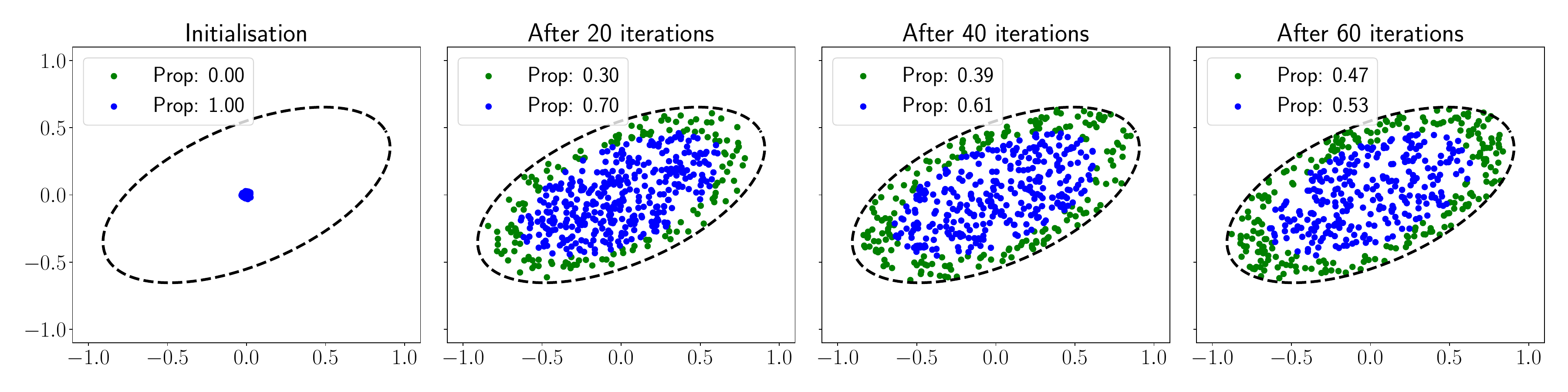}
    \caption{Progression of \nameref{alg:mamla} on an \(\mathsf{Ellipsoid}(M)\) with \(d = 2\) and \(\lambda_{1}(M) = 1\) and \(\lambda_{2}(M) = 4\).
    Points coloured in green are contained in \(\mathsf{Ellipsoid}^{\nicefrac{1}{2}}(M)\).
    ``\textsf{Prop}'' is the proportion of points in the regions.
    Note that \(\widehat{\tau}_{\mathrm{mix}}\) is at most \(60\) in this case.
    }
    \label{fig:progress-ellipsoid}
\end{figure}

We define the (empirical) mixing time for uniform sampling over these kinds of sets next.
To do this, we first identify the following sets.
\begin{align*}
    \mathsf{Box}(\{b_{i}\}_{i=1}^{d})^{\nicefrac{1}{2}} &= \{x \in \bbR^{d} : b_{i} \cdot 2^{-\nicefrac{1}{d}} < |x_{i}| \leq b_{i} ~\forall ~i \in [d]\} ~,\\
    \mathsf{Ellipsoid}(M)^{\nicefrac{1}{2}} &= \{x \in \bbR^{d} : 2^{-\nicefrac{1}{d}} < \|x\|_{M}^{2} \leq 1\}~, \\
    \mathsf{S}_{d}^{\nicefrac{1}{2}} &= \{x \in \bbR^{d}_{+} : 2^{-\nicefrac{1}{d}} < \bm{1}^{\top}x \leq 1\}~.
\end{align*}
The key properties of these sets is that these are subsets of \(\mathsf{Box}(\{b_{i}\}_{i=1}^{d})\), \(\mathsf{Ellipsoid}(M)\) and \(\mathsf{S}_{d}\) respectively, and their volumes are exactly half of the larger sets they are contained in.
Let \(\widehat{\bbT}^{k}\) be the empirical measure of samples after \(k\) iterations of \nameref{alg:mamla} obtained from multiple chains (\(N\)).
Then, the (empirical) mixing time for uniform sampling over domain \(\calK\) \citep{chen2018fast} is
\begin{equation*}
    \widehat{\tau}_{\mathrm{mix}}(\delta; \primalspace) = \inf\left\{k \geq 0 : \widehat{\bbT}^{k}(\calK^{\nicefrac{1}{2}}) \geq \frac{1}{2} - \delta\right\}~, \quad \calK \in \{\mathsf{Box}(\{b_{i}\}_{i=1}^{d})~, \mathsf{Ellipsoid}(M)~, \mathsf{S}_{d}\}.
\end{equation*}

We use \(\delta = \nicefrac{1}{20}\), and use the shorthand \(\widehat{\tau}_{\mathrm{mix}}\) when the domain is clear from context.
This metric seeks to quantify how quickly they spread out uniformly to a region closer to the boundary, especially when the initial points are close together.
It must be noted that for methods without a filter (like \ref{eq:MLA}), a large step size would cause points to move to the boundary rapidly, and give the illusion of quick mixing.
In \cref{fig:progress-ellipsoid}, we show the progression of \nameref{alg:mamla} to sample from a 2-D \(\mathsf{Ellipsoid}\), and highlight points that lie in the associated \(\mathsf{Ellipsoid}^{\nicefrac{1}{2}}\).

\begin{table}[t]
\centering
\renewcommand{\arraystretch}{1.3}
\begin{tabular}{rccc}
    & & \(h \propto d^{-1}\) & \(h \propto d^{-\nicefrac{3}{2}}\) \\
    \hline
    \multirow{2}{*}{\(\primalspace = \mathsf{Box}\)} & \(i = 1\) & 1.213~\textsubscript{(0.016)} & 1.589~\textsubscript{(0.013)}\\
    & \(i = 2\) & 1.213~\textsubscript{(0.014)} & 1.611~\textsubscript{(0.013)} \\
    \arrayrulecolor{lightgray}\hline
    \multirow{2}{*}{\(\primalspace = \mathsf{Ellipsoid}\)} & \(i = 1\) & 1.133~\textsubscript{(0.016)} & 1.628~\textsubscript{(0.026)} \\
    & \(i = 2\) & 1.167~\textsubscript{(0.016)} & 1.599~\textsubscript{(0.026)} \\
\end{tabular}
\caption{The slopes (and standard errors) of the best fit lines between \(\log(\widehat{\tau}_{\mathrm{mix}})\) and \(\log(d)\) for the sequence of domains \(\calD_{i}^{\primalspace}\) for two domain types \(\primalspace \in \{\mathsf{Box}, \mathsf{Ellipsoid}\}\) and two step size choices.
}
\label{tab:mixing-time-cond-num-indep}
\end{table}

\subsubsection{Mixing time versus dimension}

Here, our objective is two-fold: to empirically validate that the mixing time of \nameref{alg:mamla} (1) does not depend on the conditioning of the domain, and (2) scales as \(h^{-1}\) as shown by \cref{thm:mix-mamla}.
For the first, we consider two sequences of boxes and ellipsoids.
In the first sequence \(\calD_{1}^{\primalspace}\), the conditioning of \(\primalspace\) (which is either a \(\mathsf{Box}\) or an \(\mathsf{Ellipsoid}\)) scales as \(d^{2}\), and in the second sequence \(\calD_{2}^{\primalspace}\), the conditioning of the \(\primalspace\) scales as \(e^{d}\), and both of these sequences are indexed by the dimension \(d\).
More precisely, we let \(\kappa_{1} = \frac{d^{2}}{4}, \kappa_{2} = e^{\nicefrac{d}{4}}\), and define for \(i \in \{1, 2\}\)
\begin{gather*}
    \calD_{i}^{\mathsf{Box}}(d) = \mathsf{Box}(\{b_{i}\}_{i=1}^{d})~, \quad b_{1} = \cdots = b_{d - 1} = 1,~ b_{d} = \frac{1}{\kappa_{i}} ~,\\
    \calD_{i}^{\mathsf{Ellipsoid}}(d) = \mathsf{Ellipsoid}(M)~, \quad \lambda_{j}(M) = 1 + \frac{(\kappa_{i} - 1)(j - 1)}{d - 1},~ j \in [d].
\end{gather*}
The purpose of \(\kappa_{1}\) and \(\kappa_{2}\) as stated above is to empirically identify a dependency of the condition number of the domain that is either polynomial or poly-logarithmic in nature.
Such a dependence, if any, would be reflected in the slope of a log-log plot between \(\widehat{\tau}_{\mathrm{mix}}\) and \(d\).
We obtain an approximate sample over each of these domains by running \nameref{alg:mamla} with \(\potential{} = 0\), \(\mirrorfunc{}\) given by the log-barrier of the domain for \(2000\) iterations, and generate samples from \(N = 2000\) independent chains with two choices of step size \(h\): \(h = \frac{C}{d}\) and \(h = \frac{C}{d^{\nicefrac{3}{2}}}\), where \(C\) is a small constant depending on the kind of domain\footnote{\(C = 0.25\) for the box, \(C = 0.05\) for the ellipsoid, and \(C = 0.1\) for the simplex.}.
To capture variation across such generating processes, we perform \(10\) independent runs.

\begin{figure}[t]
    \centering
    \begin{subfigure}{0.32\linewidth}
        \centering
        \includegraphics[width=0.95\linewidth]{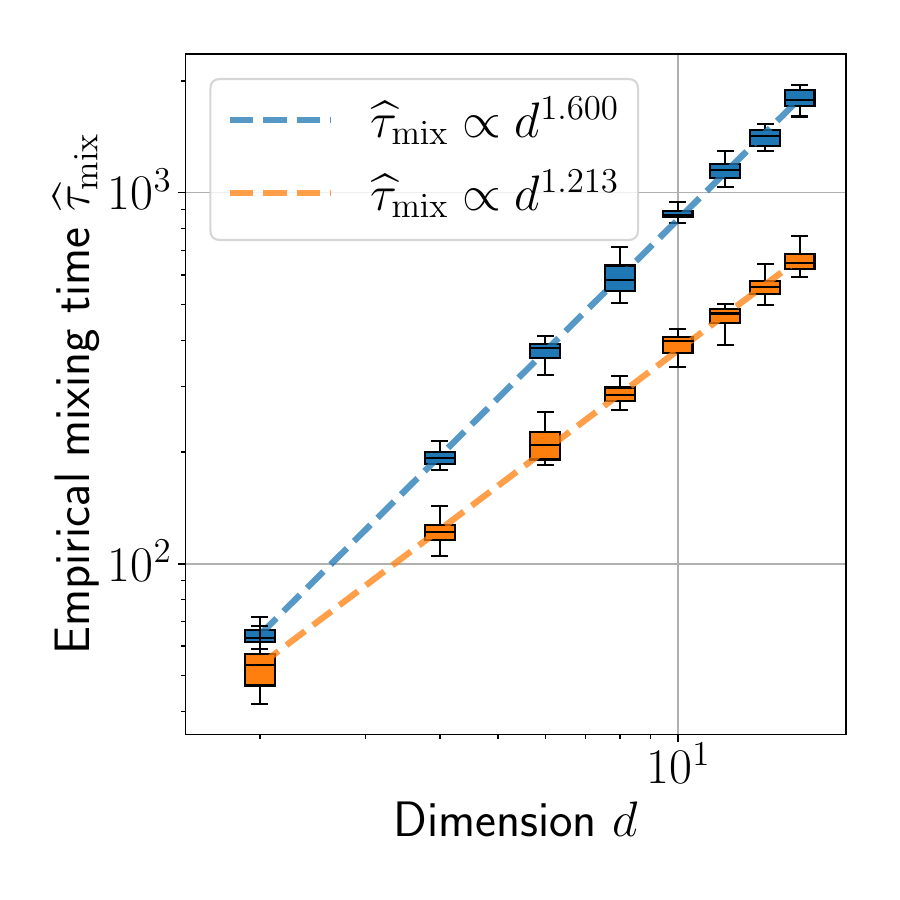}
        \caption{\(\calK = \mathsf{Box}\)}        
    \end{subfigure}
    \hfill
    \begin{subfigure}{0.32\linewidth}
        \centering
        \includegraphics[width=0.95\linewidth]{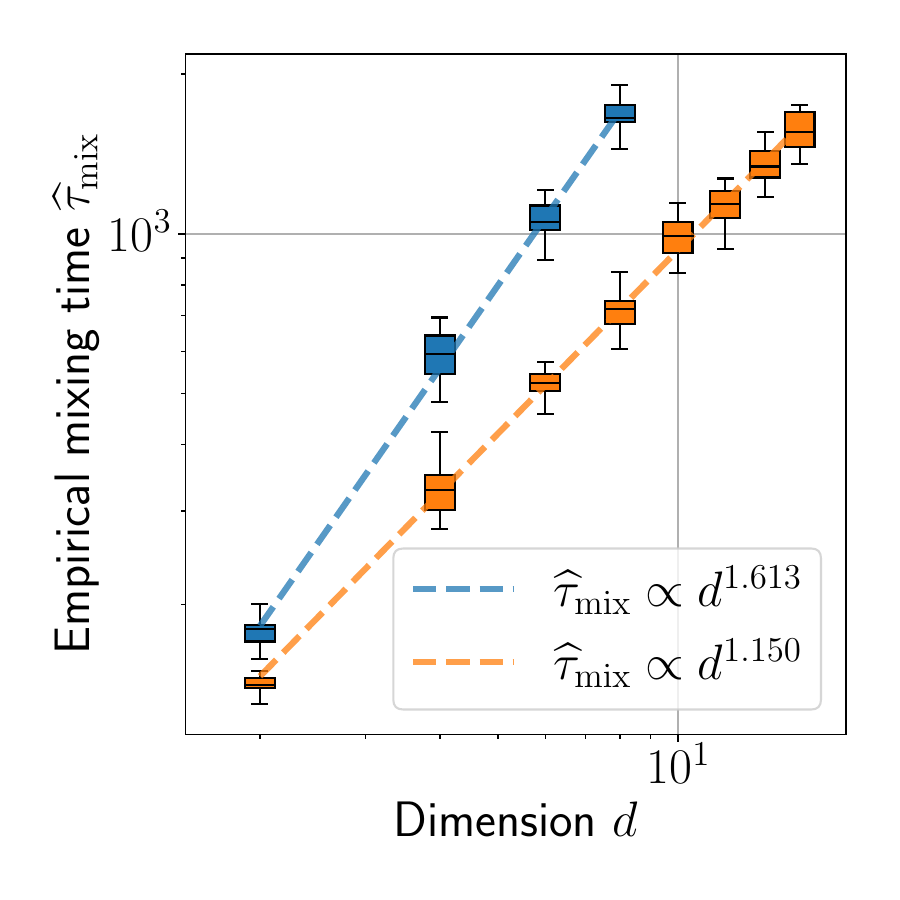}
        \caption{\(\calK = \mathsf{Ellipsoid}\)}        
    \end{subfigure}
    \hfill
    \begin{subfigure}{0.32\linewidth}
        \centering
        \includegraphics[width=0.95\linewidth]{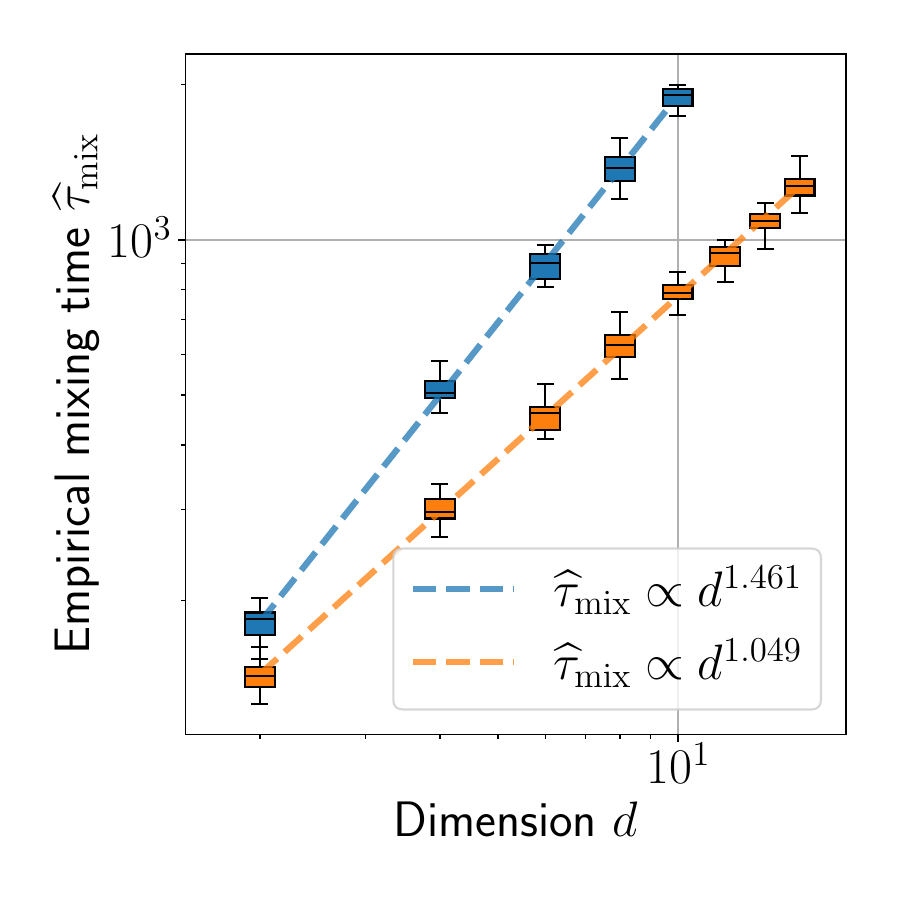}
        \caption{\(\calK = \mathsf{S}_{d}\)}        
    \end{subfigure}
    \caption{(Empirical) Mixing time versus dimension.
    The orange line corresponds to \(h \propto d^{-1}\), and the blue line corresponds to \(h \propto d^{-\nicefrac{3}{2}}\).}
    \label{fig:mixing-time-boxes-ellipsoids-simplex}
\end{figure}

From \cref{tab:mixing-time-cond-num-indep}, we see that the conditioning of the domain does not have a discernible and consistent effect on the empirical mixing time.
This is consistent with the theoretical guarantee in \cref{corr:mixing-time-polytope-ellipsoids} which does not depend on the condition number of the domains.
This accomplishes our first objective.
In \cref{fig:mixing-time-boxes-ellipsoids-simplex}, we plot the variation of the empirical mixing time for with dimension for each of the domain types: box, ellipsoid and simplex.
For each of the domain types, we observe a mixing time that is closely proportional to \(h^{-1}\).
Recall that the mixing time guarantee from \cref{corr:mixing-time-polytope-ellipsoids} scales as \(\frac{\nu}{h}\).
For the ellipsoid, \(\nu = 2\), and hence the observation is consistent with our theoretical guarantee.
However, for the box and simplex, we observe a better mixing time guarantee as \(\nu\) is \(2d\) for the box, and \(d + 1\) for the simplex domain types.
This suggests that there is perhaps more structure for these domains that can leveraged.
The box has a product structure, which reduces to independent \(1\)-dimensional sampling for each of the coordinates, and the simplex has \(d\) non-negativity constraints and only 1 constraint involving more than one coordinate.

\subsection{Sampling from Dirichlet distributions}
\label{sec:dirichlet-sampling-empirics}

Here, the goal is to draw approximate samples from a Dirichlet distribution defined in \cref{eq:potential-dirichlet}.
First, we discuss our metric for measuring the mixing time of \nameref{alg:mamla}.
Let \(\widehat{\bbT}^{k}\) be the empirical measure of samples obtained from multiple chains (\(N\)) after \(k\) iterations of \nameref{alg:mamla}, and let \(\widehat{\Pi}\) be the empirical measure of samples from the true Dirichlet distribution (which can be obtained from common scientific computing packages).
We show an example of the progression of samples from \nameref{alg:mamla} in \cref{fig:progress-dirichlet}.
To quantify how close the distribution obtain after running \(k\) iterations of \nameref{alg:mamla}, we compute the empirical 2-Wasserstein distance \(\widetilde{W^{2}_{2}}(\widehat{\bbT}^{k}, \widehat{\Pi})\) using the Sinkhorn algorithm \citep{cuturi2013sinkhorn} with a small regularisation parameter\footnote{We choose this to be \(0.001\).}.

\begin{figure}[H]
    \centering
    \includegraphics[width=0.87\linewidth]{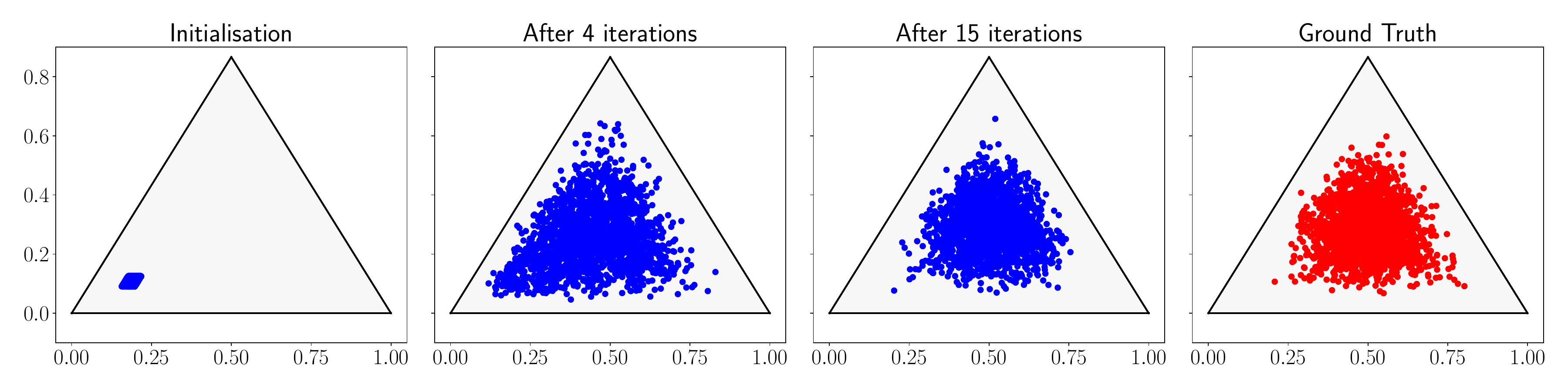}
    \caption{Progression of \nameref{alg:mamla} for sampling from a Dirichlet with \(d = 2\), \(a_{i} = 6\) for \(i \in [3]\).}
    \label{fig:progress-dirichlet}
\end{figure}

\subsubsection{Mixing time versus dimension}

The (empirical) mixing time for this task is defined as
\begin{equation*}
    \widehat{\tau}_{\mathrm{mix}}(\delta) = \inf \left\{k \geq 0 : \widetilde{W^{2}_{2}}(\widehat{\bbT}^{k}, \widehat{\Pi}) \leq \delta\right\}~
\end{equation*}
where \(\widehat{\Pi}\) is the empirical distribution of samples from the Dirichlet distribution we are interested in sampling from.
We set \(\delta = \nicefrac{1}{100}\).
For any dimension \(d\), we set the Dirichlet parameter to \(a_{i} = 3\) for all \(i \in [d]\).
We obtain an approximate sample for a given Dirichlet distribution by running \nameref{alg:mamla} with \(\potential{}\) as defined in \cref{eq:potential-dirichlet}, \(\mirrorfunc{}\) given by the log-barrier of the domain for \(2000\) iterations, and generate samples from \(N = 2000\) chains with step size \(h\).
To capture variation, we perform \(10\) such sampling runs, each independently, and with two choices of step sizes: \(h = \frac{1}{4d^{\nicefrac{3}{2}}}\) and \(h = \frac{1}{4d^{2}}\).
In \cref{fig:mixing-time-dirichlet}, we plot the variation of the mixing time with dimension.
We see that the mixing time closely scales as \(h^{-1}\) (\(d^{1.764}\) for \(h \propto d^{-1.5}\) and \(d^{2.215}\) for \(h \propto d^{-2}\)) as suggested by \cref{corr:dirichlet-sampling}. 

\subsubsection{Comparison to \ref{eq:MLA}}

\begin{figure}[t]
\begin{minipage}{0.33\linewidth}
\begin{figure}[H]
\centering
\includegraphics[width=0.95\linewidth]{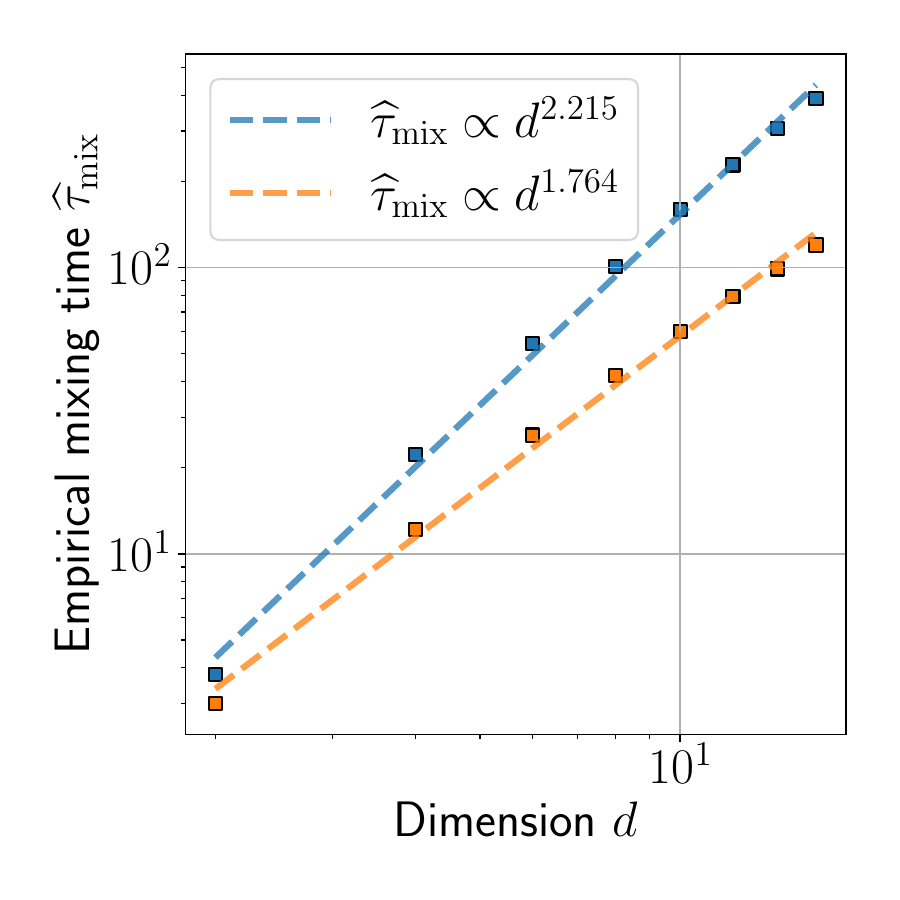}
\caption{(Empirical) Mixing time versus dimension.
Orange corresponds to \(h \propto d^{-\nicefrac{3}{2}}\), and blue corresponds to \(h \propto d^{-2}\).}
\label{fig:mixing-time-dirichlet}
\end{figure}
\end{minipage}
\hfill
\begin{minipage}{0.65\linewidth}
\begin{figure}[H]
\centering
\begin{subfigure}{0.49\linewidth}
\centering
\includegraphics[width=\linewidth]{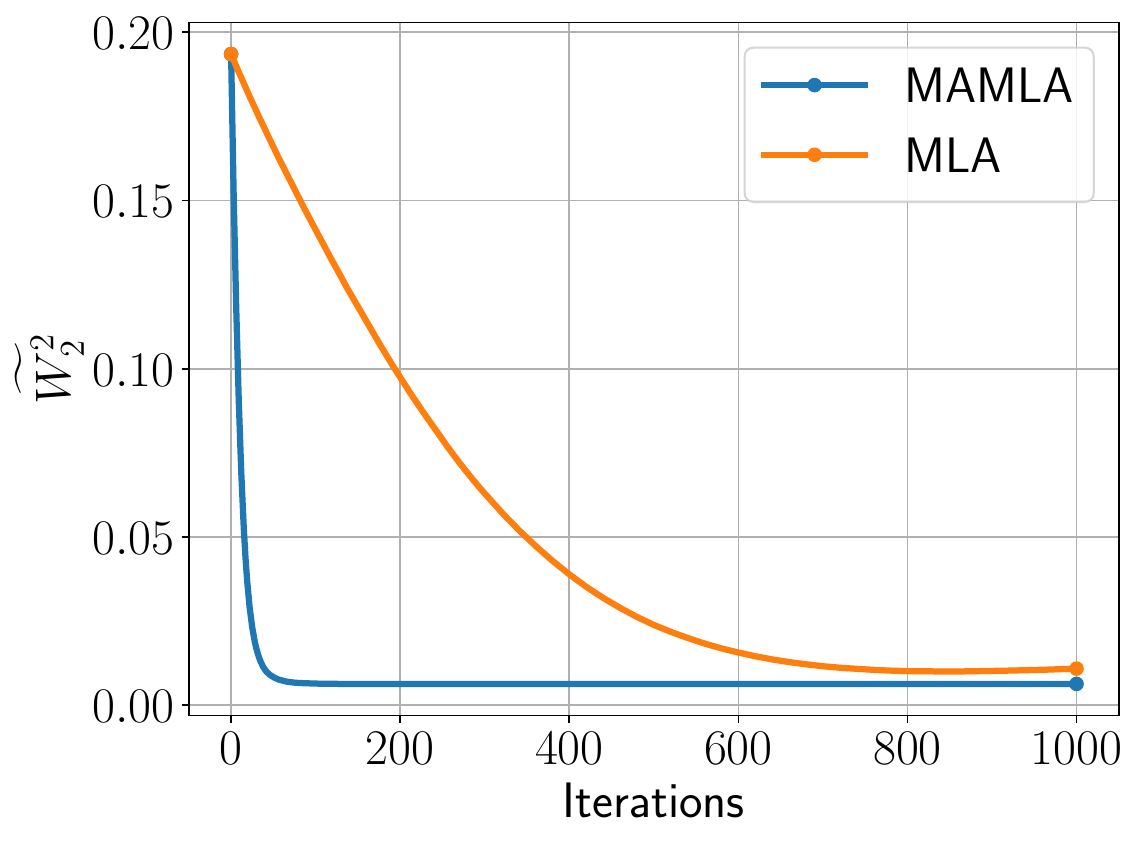}
\caption{\(d = 8\)}
\end{subfigure}
\hfill
\begin{subfigure}{0.49\linewidth}
\centering
\includegraphics[width=\linewidth]{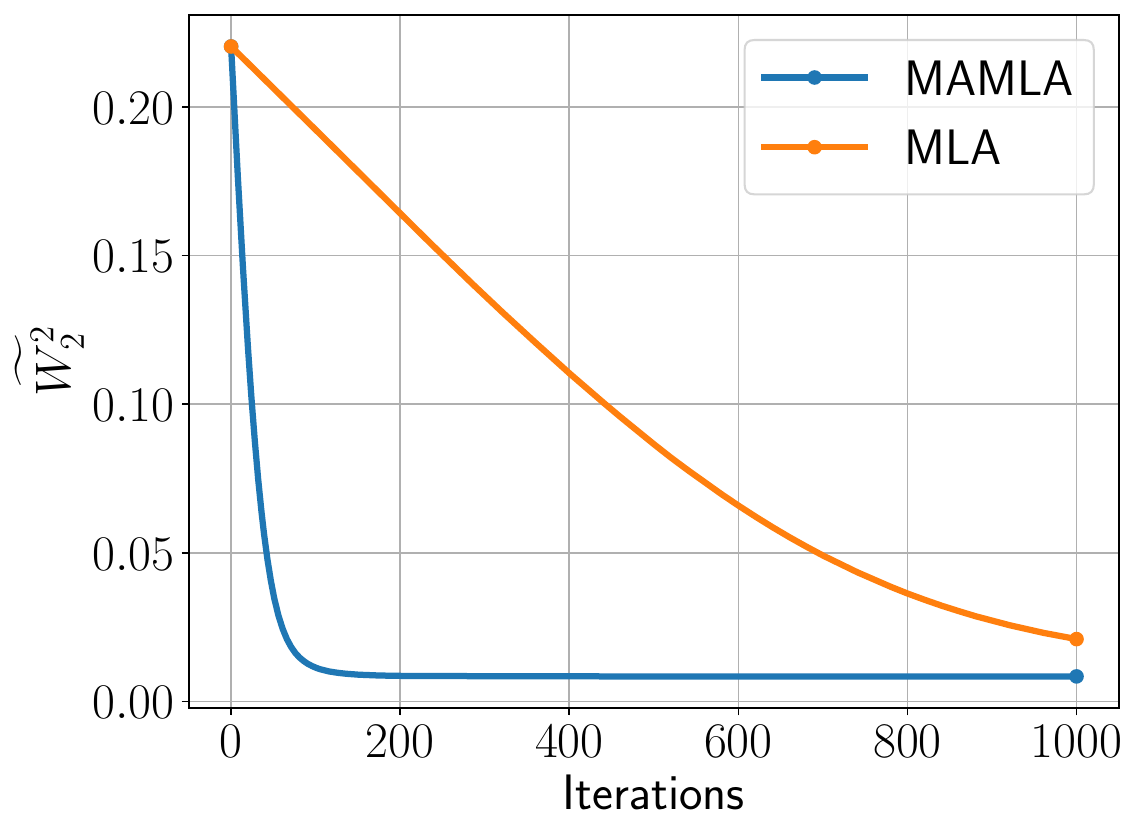}
\caption{\(d = 16\)}
\end{subfigure}
\caption{Variation of empirical \(2\)-Wasserstein distance with iterations for \nameref{alg:mamla} and \ref{eq:MLA}.}
\label{fig:variation-empirical-W2-comp}
\end{figure}
\end{minipage}
\end{figure}

For \(d = 8, 16\) specifically, we compare \nameref{alg:mamla} to \ref{eq:MLA} over the same Dirichlet setup discussed previously.
We run both \nameref{alg:mamla} and \ref{eq:MLA} for 1000 iterations with the same initial conditions, and with step sizes \(h_{\MAMLA{}} = \frac{1}{4d^{\nicefrac{3}{2}}}\) and \(h_{\MLA{}} = \frac{C}{d}\) respectively.
The step size for \ref{eq:MLA} is suggested by the analysis in \cite{li2022mirror}, and \(C\) is a constant that we tune suitably.
In \cref{fig:variation-empirical-W2-comp}, we plot the variation of the empirical \(2\)-Wasserstein distance \(\widetilde{W_{2}^{2}}\) averaged over \(10\) runs for \nameref{alg:mamla} and \ref{eq:MLA}.
For both \(d = 8\) and \(d = 16\), we see that \nameref{alg:mamla} is much faster than \ref{eq:MLA}.
Additionally, for \(d = 16\), the rate of decrease is noticeably slower than for \(d = 8\), as expected.

\subsection{Acceptance rate versus dimension}
\label{sec:accept-rate-analysis-empirics}

Recall from \cref{thm:mix-mamla} that a step size that scales as \(d^{-3}\) ensures fast mixing of \nameref{alg:mamla}.
Therefore, the choices of step size we used for the uniform sampling and Dirichlet sampling tasks discussed earlier are larger than what is sufficient theoretically.
It must be noted that this theoretical suggestion is due to a more conservative, worst-case analysis of the accept-reject step, and a choice of step size that decays slower than \(d^{-3}\) in this worst-case setting causes the probability of acceptance to decay to \(0\) as \(d\) increases.

It would therefore be instructive to understand if indeed for large \(d\) this does happen, because this can help diagnose any looseness in the analysis, and this is the focus of this subsection.
Our empirical setup is motivated by \cite{dwivedi2018log}.
We define the average acceptance rate for a choice of step size to be the average number of accepted proposals over \(4500\) iterations after a burn-in period of \(500\) iterations, and over \(2000\) particles for that step size.
For uniform sampling, we consider step sizes \(h \propto d^{-\gamma}\) for \(\gamma \in \{0.5, 0.75, 1.0, 1.5\}\).
For Dirichlet sampling, we consider step sizes \(h \propto d^{-\gamma}\) for \(\gamma \in \{0.75, 1.0, 1.5, 2.0\}\).
In \cref{fig:accept-rate-all}, we plot the variation of average acceptance rate over dimension for these step size choices and sampling tasks.

\begin{figure}[t]
    \captionsetup[subfigure]{justification=centering}
    \centering
    \begin{subfigure}[t]{0.325\linewidth}
        \centering
        \includegraphics[width=\linewidth]{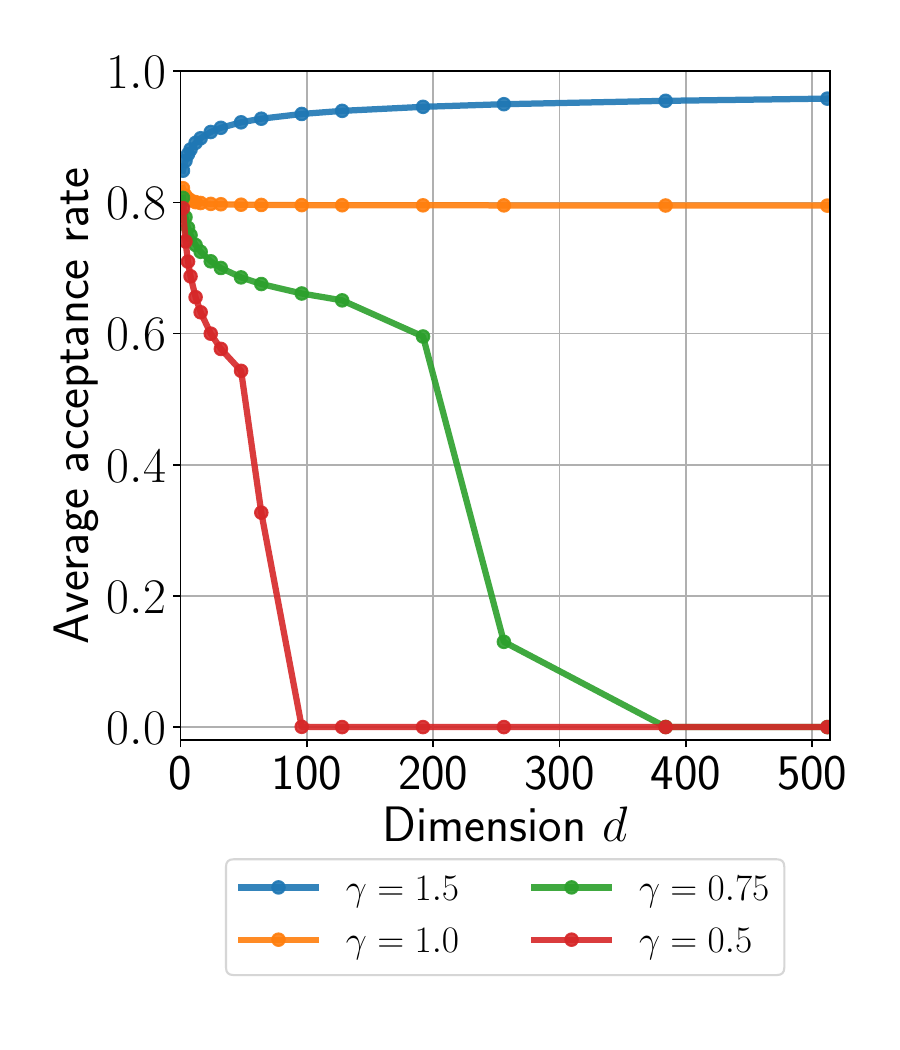}
        \caption{Uniform sampling over \\\(\primalspace = \mathsf{Ellipsoid}\)}    
    \end{subfigure}
    \hfill
    \begin{subfigure}[t]{0.325\linewidth}
        \centering
        \includegraphics[width=\linewidth]{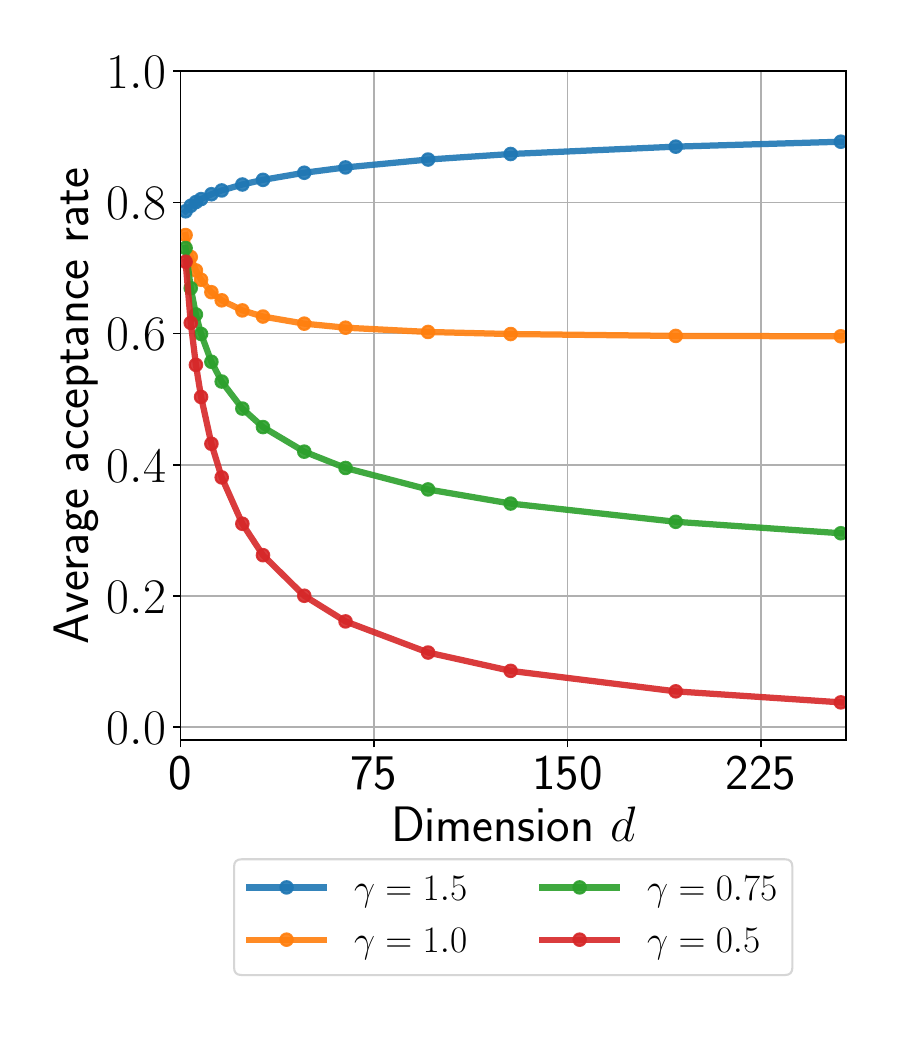}
        \caption{Uniform sampling over \\\(\primalspace = \mathsf{S}_{d}\)}    
    \end{subfigure}
    \hfill
    \begin{subfigure}[t]{0.325\linewidth}
        \centering
        \includegraphics[width=\linewidth]{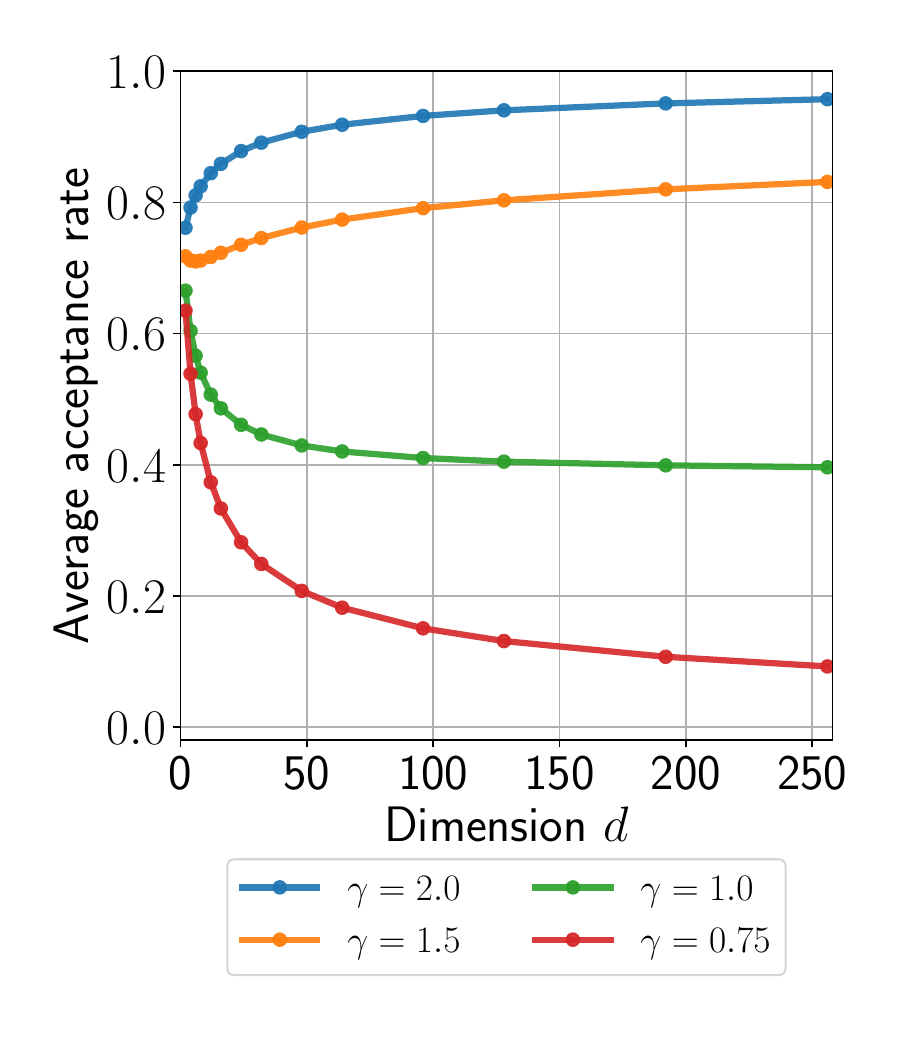}
        \caption{Sampling from \\Dirichlet distribution}
    \end{subfigure}
    \caption{Variation of average acceptance rate with step size \(h \propto d^{-\gamma}\) for various sampling tasks.}
    \label{fig:accept-rate-all}
\end{figure}

From these plots, we can see that the larger step sizes considered indeed work for large dimensions as well, pointing out the conservative nature of the analysis.
For the uniform sampling task, when \(\gamma < 1\), we see that the acceptance rate decays to \(0\) as \(d\) increases, and for \(\gamma \geq 1\), the acceptance rate is sufficiently high, and seems to plateau when \(\gamma = 1\).
For the Dirichlet sampling task, when \(\gamma > 1\), we see that the acceptance rate is sufficiently high.
When \(\gamma = 1\), it appears as though the average acceptance rate has plateaued, however at a level lower than the nominal \(50\%\).
The purpose of considering the average acceptance rate is due to its connection (in the population limit) to the mixing time analysis.
Informally, let \(h^{\star}\) be the largest step size such which results in a non-decaying average acceptance rate.
Then, as noted in \cite{dwivedi2018log}, the mixing time is expected to be proportional to \(\frac{1}{h^{\star}}\).
In this section, we have empirically demonstrated that there are step size choices that scale much higher than \(d^{-3}\) which result in a non-decaying acceptance rate, which would consequently imply an improved mixing time guarantee for \nameref{alg:mamla}.
We leave obtaining such an improved mixing time guarantee for future work.

\section{Proofs}
\label{sec:proofs}
This section is devoted providing the proofs of the main theorem and its corollaries in \cref{sec:algo}.
In \cref{sec:proofs:main-lemmas}, we give the main lemmas that form the proof of the \cref{thm:mix-mamla}, whose proof follows in \cref{sec:proofs:full-proof-thm}.
The proofs of these main lemmas are given in \cref{sec:proofs:main-lemmas-proofs}, and finally in \cref{sec:proofs:corollaries}, we give the proofs of the corollaries from \cref{sec:algo:applications-thm}.

\subsection{Results for conductance using one-step overlap}
\label{sec:proofs:main-lemmas}

The following classical result by \cite{lovasz1993random} is the basis of the mixing time guarantee of \nameref{alg:mamla}, which is a common tool for studying reversible Markov chains.
\begin{proposition}
\label{prop:ls93-mixing-result}
Let \(\bfQ\) be a lazy, reversible Markov chain over \(\primalspace\) with stationary distribution \(\Pi\) and conductance \(\Phi_{\bfQ}\).
For any \(\Pi_{0}\) that is \(M\)-warm with respect to \(\Pi\), we have for all \(k \geq 1\) that
\begin{equation*}
    \TVdist(\bbT_{\bm{Q}}^{k}\Pi_{0}, \Pi) \leq \sqrt{M} \left(1 - \frac{\Phi_{\bfQ}^{2}}{2}\right)^{k}~.
\end{equation*}
\end{proposition}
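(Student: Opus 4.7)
The plan is to reproduce the classical Lovász--Simonovits argument via the so-called LS curve. Define
$$g_t(s) \defeq \sup\{(\bbT_\bfQ^t\Pi_0)(A) : A \in \calF(\primalspace),~ \Pi(A) = s\}, \qquad s \in [0,1],$$
which is concave with $g_t(0) = 0$ and $g_t(1) = 1$, and note that $\TVdist(\bbT_\bfQ^t\Pi_0, \Pi) = \sup_{s \in [0,1]}(g_t(s) - s)$. The task thus reduces to showing that $\sup_s (g_t(s) - s)$ contracts by a factor of $(1 - \Phi_\bfQ^2/2)$ at each iteration.

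First, I would record the initial envelope. From $M$-warmness, $\Pi_0(A) \leq M\Pi(A)$ for every measurable $A$, so $g_0(s) \leq Ms$; combined with $g_0(1)=1$ and concavity, this yields a bound of the form $g_0(s) - s \leq \sqrt{M}\,\varphi(s)$, where $\varphi$ is a symmetric concave majorant (e.g.\ $\varphi(s) = \min\{\sqrt{s}, \sqrt{1-s}\}$) vanishing at the endpoints.

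Second, and this is the heart of the proof, I would establish the one-step recursion
$$g_{t+1}(s) \leq \tfrac{1}{2}\bigl[g_t(s - \Phi_\bfQ\min\{s,1-s\}) + g_t(s + \Phi_\bfQ\min\{s,1-s\})\bigr].$$
This is obtained by selecting a near-optimal set $A$ with $\Pi(A)=s$ realising $g_{t+1}(s)$, using reversibility of $\bfQ$ with respect to $\Pi$ to rewrite $(\bbT_\bfQ^{t+1}\Pi_0)(A)$ as an integral involving the densities $d(\bbT_\bfQ^t\Pi_0)/d\Pi$ on both sides of $\partial A$, and invoking the definition of conductance to lower-bound the cross-flow between $A$ and its complement. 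Laziness is essential here: the non-trivial stay-put atom of $\bfQ$ prevents the pathological case in which all the mass from $A$ instantly migrates to its complement and ensures the ``averaging'' interpretation of the right-hand side is valid.

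Third, I would iterate the recursion against the envelope from the first step. A direct calculation (using the concavity of $\varphi$ together with a second-order expansion at the extremes) shows that $s \mapsto \varphi(s)(1 - \Phi_\bfQ^2/2)$ dominates the symmetric average appearing on the right of the recursion. Propagating the envelope $g_t(s) - s \leq \sqrt{M}\,\varphi(s)(1 - \Phi_\bfQ^2/2)^t$ by induction and taking $\sup_s$, while bounding $\sup_s\varphi(s) \leq 1$, gives exactly the TV bound in the proposition.

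The main obstacle is the one-step recursion; the rest is bookkeeping. That step is where reversibility (to symmetrise the flow into and out of $A$), laziness (to block boundary degeneracies), and the conductance definition (to control leakage across level sets) must all be used together. Once it is in hand, the inductive decay of the LS curve is essentially a one-line argument.
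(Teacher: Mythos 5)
The paper does not prove this proposition; it is quoted verbatim as a classical result of \citet{lovasz1993random}, so there is no in-paper proof to compare against. Your sketch correctly reproduces the standard Lov\'asz--Simonovits curve argument, which is indeed how the cited result is established: define $g_t$, note it is concave with $g_t(0)=0$, $g_t(1)=1$, bound $g_0 - \mathrm{id}$ by a concave envelope using $M$-warmness, and iterate a one-step averaging recursion driven by laziness, reversibility, and conductance. The warm-start envelope you propose, $h_0(s) \leq \sqrt{M}\min\{\sqrt{s},\sqrt{1-s}\}$, is correct: from $h_0(s) \leq \min\{(M-1)s,\, 1-s\} \leq \sqrt{(M-1)s(1-s)} \leq \sqrt{M}\sqrt{\min\{s,1-s\}}$.

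The one quantitative gap is in the stated recursion. As written,
\[
g_{t+1}(s) \leq \tfrac{1}{2}\bigl[g_t(s - \Phi_\bfQ\min\{s,1-s\}) + g_t(s + \Phi_\bfQ\min\{s,1-s\})\bigr],
\]
propagates the envelope $\varphi(s) = \sqrt{\min\{s,1-s\}}$ only with the factor $\tfrac{1}{2}(\sqrt{1-\Phi_\bfQ}+\sqrt{1+\Phi_\bfQ}) = \sqrt{(1+\sqrt{1-\Phi_\bfQ^2})/2} \leq 1 - \Phi_\bfQ^2/8$, which is weaker than the $1-\Phi_\bfQ^2/2$ claimed. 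The Lov\'asz--Simonovits lemma for a lazy reversible chain with the conductance normalisation used in this paper in fact gives a spread of $2\Phi_\bfQ\min\{s,1-s\}$ on each side, i.e.\ $g_{t+1}(s) \leq \tfrac{1}{2}[g_t(s - 2\Phi_\bfQ\min\{s,1-s\}) + g_t(s + 2\Phi_\bfQ\min\{s,1-s\})]$, and it is this version that contracts $\varphi$ by $\sqrt{1-\Phi_\bfQ^2} \leq 1-\Phi_\bfQ^2/2$. (The factor of two comes from the laziness: writing the lazy kernel as $\tfrac{1}{2}\delta_x + \tfrac{1}{2}\calQ_x$, the active half carries twice the nominal conductance.) With that correction your outline yields exactly the stated bound; without it, it proves only the weaker rate with $\Phi_\bfQ^2/8$.
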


The above result indicates that it suffices to show a non-vacuous lower bound on the conductance of a reversible Markov chain to derive mixing time upper bounds for it.
To do so, we use the \emph{one-step overlap} technique developed in \cite{lovasz1999hit}.
The gist of this technique is to show that when given a reversible Markov chain \(\bfQ = \{\calQ_{x} : x \in \primalspace\}\), the distance \(\TVdist(\bbT_{\bm{Q}}(\delta_{x}), \bbT_{\bm{Q}}(\delta_{y}))\) (also equal to \(\TVdist(\calQ_{x}, \calQ_{y})\)) is uniformly bounded away from \(1\) for any \(x, y \in \interior{\primalspace}\) that are close.
For \nameref{alg:mamla}, recall that \(\bfT = \{\calT_{x} : x \in \primalspace\}\) denotes the Markov chain induced by a single iteration of the algorithm, and \(\calP_{x}\) is proposal distribution defined whose density is given in \cref{eq:proposal-density}.

\cref{lem:conductance-lower-strong-RelConv,lem:conductance-lower-weak} precisely state how a one-step overlap can yield lower bounds on the conductance of \(\bfT\), whose stationary distribution has density \(\targetdens \propto e^{-\potential{}}\).
These are two separate lemmas, one for when \(\potential{}\) is \(\mu\)-strongly convex relative to a self-concordant mirror function \(\mirrorfunc{}\) with \(\mu > 0\), and the other for when the potential \(\potential{}\) is simply convex i.e., when \(\mu = 0\).

\begin{lemma}
\label{lem:conductance-lower-strong-RelConv}
Let \(\mirrorfunc{} : \primalspace \to \bbR \cup \{\infty\}\) be a self-concordant function with parameter \(\alpha\), and let \(\potential{}\) be \(\mu\)-relatively convex with respect to \(\mirrorfunc{}\).
Assume that for any \(x, y \in \interior{\primalspace}\), if \(\|x - y\|_{\hessmirror{y}} \leq \Delta\) where \(\Delta \leq \frac{1}{2\alpha}\), then \(\TVdist(\calT_{x}, \calT_{y}) \leq \frac{1}{4}\).
Then, for any measurable partition \(\{A_{1}, A_{2}\}\) of \(\primalspace\),
\begin{equation*}
    \int_{A_{1}} \calT_{x}(A_{2}) \targetdens(x) dx \geq \frac{3}{16} \cdot \min\left\{1, \frac{\sqrt{\mu} \cdot \Delta}{2 \cdot (8\alpha + 4)}\right\} \cdot \min\{\targetdist(A_{1}), \targetdist(A_{2})\}~.
\end{equation*}
\end{lemma}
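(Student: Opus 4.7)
The plan is to follow the by now classical conductance lower bound argument of Lovász--Simonovits, adapted to the mirror geometry induced by $\phi$. The high-level strategy is: (i) partition each $A_i$ into a ``bad'' subset $A_i'$ (points whose one-step transition puts little mass on $A_{3-i}$) and the rest, (ii) use the one-step overlap hypothesis to show that $A_1'$ and $A_2'$ must be separated in the Hessian metric $\|\cdot\|_{\nabla^2\phi(\cdot)}$, and (iii) pit this separation against an isoperimetric inequality for $\Pi$ on the Hessian manifold $(\primalspace,\nabla^2\phi)$, available from \cite{gopi2023algorithmic}, which under $\mu$-relative convexity with $\mu>0$ gives a Cheeger-type bound with constant proportional to $\sqrt{\mu}$ (and a factor depending on $\alpha$, producing the $1/(8\alpha+4)$ that appears in the stated bound).

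Concretely, fix the threshold $\theta = \tfrac{3}{8}$ and set
\begin{equation*}
A_1' = \{x \in A_1 : \calT_x(A_2) < \theta\}, \qquad A_2' = \{y \in A_2 : \calT_y(A_1) < \theta\}.
\end{equation*}
For $x \in A_1'$ and $y \in A_2'$ one has $\calT_x(A_1) > 1-\theta$ and $\calT_y(A_1) < \theta$, so $\TVdist(\calT_x,\calT_y) \geq \calT_x(A_1) - \calT_y(A_1) > 1 - 2\theta = \tfrac{1}{4}$. The contrapositive of the one-step overlap hypothesis then forces $\|x-y\|_{\nabla^2\phi(y)} > \Delta$ for every such pair, i.e. $A_1'$ and $A_2'$ are $\Delta$-separated in the mirror metric (with $\Delta \le 1/(2\alpha)$, so the symmetry $\|x-y\|_{\nabla^2\phi(y)} \asymp \|x-y\|_{\nabla^2\phi(x)}$ from self-concordance is controlled; this is where self-concordance is needed to turn the asymmetric bound into a genuine separation).

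The argument then splits into two cases. In Case~1, some $A_i \setminus A_i'$ has $\Pi$-mass at least $\tfrac{1}{2}\Pi(A_i)$; by the definition of $A_i'$ and reversibility of $\Pi$ under $\bfT$,
\begin{equation*}
\int_{A_1} \calT_x(A_2)\,\targetdens(x)\,dx \;\ge\; \theta \cdot \Pi(A_i \setminus A_i') \;\ge\; \tfrac{\theta}{2}\min\{\Pi(A_1),\Pi(A_2)\} = \tfrac{3}{16}\min\{\Pi(A_1),\Pi(A_2)\},
\end{equation*}
which matches the first branch of the min in the stated bound. In Case~2, both $\Pi(A_i') \ge \tfrac{1}{2}\Pi(A_i)$, and we apply the isoperimetric inequality for $\Pi$ on $(\primalspace, \nabla^2\phi)$ under $\mu$-relative convexity of $f$ to the sets $A_1', A_2', \primalspace\setminus(A_1'\cup A_2')$. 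This yields
\begin{equation*}
\Pi(A_1\setminus A_1') + \Pi(A_2\setminus A_2') \;\ge\; \frac{\sqrt{\mu}\,\Delta}{8\alpha+4}\cdot \min\{\Pi(A_1'),\Pi(A_2')\},
\end{equation*}
and combining with $2\int_{A_1}\calT_x(A_2)\,d\Pi(x) \ge \theta\big(\Pi(A_1\setminus A_1') + \Pi(A_2\setminus A_2')\big)$ (using reversibility once more to symmetrize) and $\Pi(A_i') \ge \tfrac{1}{2}\Pi(A_i)$ recovers the second branch.

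The main obstacle, and the step that is not merely bookkeeping, is invoking the correct isoperimetric inequality with the exact $\alpha$-dependence $\tfrac{1}{8\alpha+4}$. One needs a Cheeger-type bound for log-concave densities (here $e^{-f}$) on the Hessian manifold induced by a self-concordant $\phi$, where distance is measured by the shortest Hessian-length path between the sets, and one must verify that $\Delta$-separation in the pointwise norm $\|\cdot\|_{\nabla^2\phi(y)}$ (which is what the contrapositive gives) implies a comparable separation in this intrinsic distance. Self-concordance supplies exactly the needed distortion estimates between $\nabla^2\phi(x)$ and $\nabla^2\phi(y)$ along short paths, and relative strong convexity with parameter $\mu$ upgrades the Brascamp--Lieb-flavored bound of \cite{gopi2023algorithmic} to scale with $\sqrt{\mu}$. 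Everything after is algebra, with the choice $\theta = 3/8$ yielding the prefactor $3/16$.
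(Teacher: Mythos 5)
Your proposal matches the paper's proof essentially step for step: same decomposition into $A_i'$ with threshold $\theta=3/8$, same two-case analysis, same TV bound $\calT_{x'}(A_1)-\calT_{y'}(A_1)>1/4$ giving a $\Delta$-separation, and same appeal to the isoperimetric inequality of \citet{gopi2023algorithmic} with the Nesterov--Todd distortion estimate to pass from the local Hessian norm to the intrinsic distance. The only slip is that your displayed isoperimetric inequality omits the factor $1/2$ arising precisely from that conversion ($d_\phi \ge \Delta - \alpha\Delta^2/2 \ge \Delta/2$ for $\alpha\Delta\le 1/2$), which is what produces the $2\cdot(8\alpha+4)$ in the stated lemma; since you flag this conversion as the step requiring care, this is a bookkeeping omission rather than a gap.
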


\begin{lemma}
\label{lem:conductance-lower-weak}
Let \(\mirrorfunc{} : \primalspace \to \bbR \cup \{\infty\}\) be a symmetric barrier with parameter \(\nu\), and let \(\potential{} : \primalspace \to \bbR \cup \{\infty\}\) be a convex function.
Assume that for any \(x, y \in \interior{\primalspace}\), if \(\|x - y\|_{\hessmirror{y}} \leq \Delta\), then \(\TVdist(\calT_{x}, \calT_{y}) \leq \frac{1}{4}\).
Then, for any measurable partition \(\{A_{1}, A_{2}\}\) of \(\primalspace\),
\begin{equation*}
    \int_{A_{1}} \calT_{x}(A_{2})\targetdens(x) dx \geq \frac{3}{16} \cdot \min\left\{1, \frac{1}{8} \cdot \frac{\Delta}{\sqrt{\nu}}\right\} \cdot \min\{\targetdist(A_{1}), \targetdist(A_{2})\}~.
\end{equation*}
\end{lemma}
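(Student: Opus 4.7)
The plan is to mirror the one-step overlap argument used for \cref{lem:conductance-lower-strong-RelConv}, replacing the input isoperimetric inequality (which relies on relative strong convexity) by a symmetric-barrier isoperimetric inequality for log-concave distributions. Specifically, I will invoke the following ingredient, assembled from Lov\'asz--Simonovits isoperimetry in the cross-ratio / Hilbert metric together with \cref{def:sym-barrier}: for any measurable three-way partition $(S_1, S_2, S_3)$ of $\primalspace$ such that $\|x - y\|_{\hessmirror{y}} > \Delta$ for every $x \in S_1$ and $y \in S_2$, one has $\targetdist(S_3) \gtrsim \frac{\Delta}{\sqrt{\nu}} \cdot \min\{\targetdist(S_1), \targetdist(S_2)\}$. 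The symmetric-barrier property is what upgrades a Dikin-metric separation to a cross-ratio separation while losing only a factor of $\sqrt{\nu}$.

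Given this input, the proof proceeds as follows. Assume without loss of generality $\targetdist(A_1) \leq \targetdist(A_2)$, and introduce the ``bad'' and ``middle'' sets
\[
A_1' := \{x \in A_1 : \calT_x(A_2) < \tfrac{3}{8}\}, \quad A_2' := \{x \in A_2 : \calT_x(A_1) < \tfrac{3}{8}\}, \quad A_3 := \primalspace \setminus (A_1' \cup A_2').
\]
For any $x \in A_1'$ and $y \in A_2'$, the test set $A_1$ witnesses
\[
\TVdist(\calT_x, \calT_y) \geq \calT_x(A_1) - \calT_y(A_1) > \tfrac{5}{8} - \tfrac{3}{8} = \tfrac{1}{4},
\]
so the contrapositive of the hypothesis forces $\|x - y\|_{\hessmirror{y}} > \Delta$ for all such pairs, i.e., $A_1'$ and $A_2'$ are $\Delta$-separated in the Hessian metric.

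Now I split into cases. If either bad set carries less than half of its parent, say $\targetdist(A_1') \leq \tfrac{1}{2}\targetdist(A_1)$, then
\[
\int_{A_1}\calT_x(A_2)\,d\targetdist(x) \geq \tfrac{3}{8}\,\targetdist(A_1 \setminus A_1') \geq \tfrac{3}{16}\,\targetdist(A_1),
\]
which gives the ``$1$'' branch of the target bound. Otherwise $\targetdist(A_1'), \targetdist(A_2') > \tfrac{1}{2}\targetdist(A_1)$, and the isoperimetric input applied to $(A_1', A_2', A_3)$ yields $\targetdist(A_3) \gtrsim \tfrac{\Delta}{\sqrt{\nu}} \cdot \tfrac{1}{2}\targetdist(A_1)$. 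Since $A_3 \subseteq (A_1 \setminus A_1') \cup (A_2 \setminus A_2')$ consists entirely of good points, I use reversibility of $\bfT$ to symmetrize the flow, $\int_{A_1}\calT_x(A_2)\,d\targetdist = \int_{A_2}\calT_y(A_1)\,d\targetdist$, and lower-bound the sum by $\tfrac{3}{8}[\targetdist(A_1 \setminus A_1') + \targetdist(A_2 \setminus A_2')] = \tfrac{3}{8}\targetdist(A_3)$; hence the common value is at least $\tfrac{3}{16}\targetdist(A_3)$, which delivers the $\tfrac{\Delta}{8\sqrt{\nu}}$ branch after bookkeeping the universal constants.

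The main obstacle is the symmetric-barrier isoperimetric inequality itself: converting a Dikin-metric separation to a cross-ratio separation with the precise $1/\sqrt{\nu}$ scaling. This is where \cref{def:sym-barrier} enters essentially, since the containment $\calE_x^{\phi}(1) \subseteq \primalspace \cap (2x - \primalspace) \subseteq \calE_x^{\phi}(\sqrt{\nu})$ controls how large a slice of $\primalspace$ a unit Dikin ball must cover and hence governs the Hilbert-metric length of chords through any point. Modulo this isoperimetric input, everything else is a direct adaptation of the bookkeeping in \cref{lem:conductance-lower-strong-RelConv}, with the strong-convexity-driven expansion constant replaced by the symmetric-barrier one.
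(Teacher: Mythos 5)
Your proposal follows essentially the same approach as the paper's proof: the same definitions of the bad sets \(A_1', A_2'\), the same \(\tfrac{3}{8}\) threshold and the TV-distance argument forcing \(\Delta\)-separation in the Hessian norm, the same case split, and the same symmetrization via reversibility to get \(\tfrac{3}{16}\targetdist(A_3)\). The isoperimetric ingredient you leave as an obstacle is in fact already available in the form the paper needs: the Lov\'asz--Vempala inequality (\cref{lem:log-concave-isoperimetry}) gives \(\targetdist(A_3)\geq\crossratio{A_2'}{A_1'}{\primalspace}\,\targetdist(A_1')\,\targetdist(A_2')\) (a \emph{product} on the right, not a \(\min\)), and \cref{lem:sym-barrier-cross-ratio} converts the Dikin-norm separation into \(\crossratio{A_2'}{A_1'}{\primalspace}\geq\Delta/\sqrt{\nu}\); the paper then converts the product \(\targetdist(A_1)\targetdist(A_2)\) to \(\tfrac12\min\{\targetdist(A_1),\targetdist(A_2)\}\) to finish. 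So the only substantive discrepancy is that you stated the isoperimetric input in a \(\min\) form rather than the product form it actually comes in — this changes the bookkeeping constant but not the structure or validity of the argument.
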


These lemmas are based on two different isoperimetric inequalities.
Specifically, \cref{lem:conductance-lower-strong-RelConv} relies on an isoperimetric inequality for distributions whose potentials \(\potential{}\) are relatively convex with respect to a self-concordant function, and \cref{lem:conductance-lower-weak} uses another isoperimetric inequality for log-concave distributions supported on bounded sets from \cite{lovasz2003hit} in conjunction with the notion of symmetric barriers to arrive at the result.

Next, we establish the one-step overlap that we have assumed in the aforementioned lemmas.
The triangle inequality for the TV distance gives for any \(x, y \in \interior{\primalspace}\)
\begin{equation*}
    \TVdist(\calT_{x}, \calT_{y}) \leq \TVdist(\calT_{x}, \calP_{x}) + \TVdist(\calP_{x}, \calP_{y}) + \TVdist(\calP_{y}, \calT_{y})~.
\end{equation*}
The goal is to show that \(\TVdist(\calT_{x}, \calT_{y})\) is sufficiently bounded away from \(1\), and one method to achieve this is by showing that each of the three quantities above are also sufficiently bounded away from \(1\).
In the following lemma, we give bounds for each of these quantities.
First, we define the function \(b : [0, 1) \to (0, 1)\) as
\begin{equation}
\label{eq:h-max-precise}
    b(\varepsilon; d, \alpha, \beta, \lambda) = \min\left\{1, \frac{\calC_{1}(\varepsilon)}{\alpha^{2} \cdot d^{3}}~,~ \frac{\calC_{2}(\varepsilon)}{d \cdot \gamma}~,~ \frac{\calC_{3}(\varepsilon)}{\beta^{\nicefrac{2}{3}} \cdot \alpha^{\nicefrac{4}{3}}}~,~ \frac{\calC_{4}(\varepsilon)}{\beta^{\nicefrac{2}{3}} \cdot \gamma^{\nicefrac{2}{3}}}~,~ \frac{\varepsilon}{4 \cdot \beta^{2}}\right\}~,
\end{equation}
where \(\calC_{1}(\varepsilon), \ldots, \calC_{4}(\varepsilon)\) are functions of \(\varepsilon\) defined in \cref{eq:h-max-constants-precise} later.

\begin{lemma}
\label{lem:one-step-overlap}
Let \(\mirrorfunc{}\) and \(\potential{}\) satisfy \ref{assump:self-concord}, \ref{assump:rel-convex-smooth}, and \ref{assump:rel-lipschitz}.
Then, the following statements hold.
\begin{enumerate}
    \item For any two points \(x, y \in \interior{\primalspace}\) such that \(\|x - y\|_{\hessmirror{y}} \leq \frac{\sqrt{h}}{10 \cdot \bar{\alpha}}\) and any step size \(h \in (0, 1)\),
    \begin{equation*}
        \TVdist(\calP_{x}, \calP_{y}) \leq \frac{1}{2}\sqrt{\frac{h \cdot d}{20} + \frac{1}{80} + \frac{9}{2} \cdot h \cdot \beta^{2}}~.
    \end{equation*}
    \item For any \(x \in \interior{\primalspace}\), \(\varepsilon < 1\), and step size \(h \in (0, b(\varepsilon; d, \alpha, \beta, \lambda)]\).
    \begin{equation*}
        \TVdist(\calT_{x}, \calP_{x}) \leq \frac{\varepsilon}{2}~.
    \end{equation*}
\end{enumerate}
where \(\bar{\alpha} = \max\{1, \alpha\}\), \(\gamma = \frac{\lambda}{2} + \alpha \cdot \beta\).
\end{lemma}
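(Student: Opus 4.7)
The plan is to handle the two parts separately, though both rely on Taylor-type estimates for self-concordant functions to control how much the proposal density changes under small perturbations. For part 1, I would exploit the fact that $\dualtoprim{}$ is a bijection between the full Euclidean space and $\interior{\primalspace}$ (since $\mirrorfunc{}$ is of Legendre type and $\primalspace$ is bounded), and TV distance is invariant under bijective measurable maps, so $\TVdist(\calP_x, \calP_y)$ equals the TV distance between the two dual-space Gaussians $\calN(\mu_x, \Sigma_x)$ and $\calN(\mu_y, \Sigma_y)$ with $\mu_u = \primtodual{u} - h\gradpotential{u}$ and $\Sigma_u = 2h\hessmirror{u}$. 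For part 2, the standard Metropolis--Hastings identity $\TVdist(\calT_x, \calP_x) = \bbE_{z\sim\calP_x}[1 - p_{\mathrm{accept}}(z; x)]$ converts the question into lower bounding the expected acceptance probability.

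For part 1, I would apply Pinsker's inequality and the closed-form Gaussian KL divergence
\begin{equation*}
\KLdist(\calN(\mu_x,\Sigma_x)\,\|\,\calN(\mu_y,\Sigma_y)) = \tfrac{1}{2}\bigl[\trace(\Sigma_y^{-1}\Sigma_x) - d - \log\det(\Sigma_y^{-1}\Sigma_x) + \|\mu_x-\mu_y\|^2_{\Sigma_y^{-1}}\bigr].
\end{equation*}
The self-concordance assumption \ref{assump:self-concord} gives that whenever $\|x-y\|_{\hessmirror{y}} \leq \tfrac{1}{2\alpha}$, the Hessians $\hessmirror{x}$ and $\hessmirror{y}$ are multiplicatively comparable with a factor depending linearly on $\alpha\|x-y\|_{\hessmirror{y}}$, so the first three (trace and log-det) terms contribute at most $O(d\cdot\alpha^2\|x-y\|^2_{\hessmirror{y}})$. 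For the mean-difference term, I would split $\mu_x-\mu_y = (\primtodual{x}-\primtodual{y}) - h(\gradpotential{x}-\gradpotential{y})$, use the self-concordance-based Taylor bound that $\primtodual{x}-\primtodual{y}$ is close to $\hessmirror{y}(x-y)$, and bound the gradient piece by a triangle inequality combined with \ref{assump:rel-lipschitz}, which yields $\|\gradpotential{x}-\gradpotential{y}\|_{\hessmirror{y}^{-1}} \leq 2\beta$ (up to self-concordance factors). Plugging the hypothesis $\|x-y\|_{\hessmirror{y}} \leq \tfrac{\sqrt{h}}{10\bar\alpha}$ into these bounds and simplifying gives the stated expression.

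For part 2, I would decompose the log acceptance ratio from \cref{eq:proposal-density} into four groups: (i) the potential difference $\potential{x} - \potential{z}$, controlled by relative smoothness \ref{assump:rel-convex-smooth} via the Bregman inequality; (ii) the $\tfrac{3}{2}\log\det$ difference, controlled by self-concordance; (iii) the forward Gaussian exponent $\tfrac{1}{4h}\|\primtodual{z}-\primtodual{x}+h\gradpotential{x}\|^2_{\hessmirror{x}^{-1}}$, which equals $\tfrac{1}{2}\|\xi\|^2$ by construction of $z$; and (iv) the reverse exponent, which I would Taylor-expand around $x$ using the self-concordance of both $\mirrorfunc{}$ and $\mirrorfunc{}^*$ together with \ref{assump:rel-lipschitz}. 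On the good event $\|\xi\|^2 \lesssim d$ (a Gaussian concentration fact), self-concordance yields that $\|x-z\|_{\hessmirror{z}} \lesssim \sqrt{hd}$, and each of the four pieces is bounded by a polynomial in $h$, $d$, $\alpha$, $\beta$, $\lambda$. Choosing $h$ small enough that each contribution is at most a constant times $\varepsilon$ produces exactly the five constraints in the definition of $b(\varepsilon; d, \alpha, \beta, \lambda)$ in \cref{eq:h-max-precise}. A Markov-type argument then converts the high-probability bound on the log ratio into the stated expectation bound on $1 - p_{\mathrm{accept}}$.

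The main obstacle is the careful bookkeeping for part 2: multiple Taylor remainders of different orders in $\|x-z\|_{\hessmirror{z}}$ must be tracked simultaneously, each scaling differently in the parameters, and the step-size restriction must be tight enough that the worst of these remainders is at most $\varepsilon$. A secondary subtlety is that the self-concordance-based Hessian comparisons are only valid in a neighborhood of radius $O(1/\alpha)$ in the Hessian norm, so the analysis splits into a "good" Gaussian-concentration event where $\|\xi\|^2 \lesssim d$ and a "bad" event whose probability must be absorbed into $\varepsilon$ without destroying the $h = \Theta(d^{-3})$ dependence.
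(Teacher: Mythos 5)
Part 1 of your proposal is essentially the paper's argument: TV invariance under the bijective pushforward map (the paper actually applies the analogous data-processing identity at the level of KL divergence after invoking Pinsker's, but this is equivalent), followed by the closed-form Gaussian KL, with the trace/log-det pair controlled by the self-concordance eigenvalue estimates and the mean-difference term controlled via the integral representation \(\primtodual{x}-\primtodual{y} = G_y(x-y)\) and relative Lipschitzness.

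For Part 2 your decomposition differs materially from the paper's, and as stated it skips the key cancellation. The paper splits \(\calA(x,z)\) into five terms \(T_1^A,\dots,T_5^A\) and, crucially, pairs the potential difference \(f(x)-f(z)\) with the two linear-in-gradient cross terms coming from expanding both Gaussian exponents. Their sum \(T_3^A + T_4^A\) is precisely the Bregman commutator \(\zeta_\psi(\primtodual{x},\primtodual{z})\) of \(\psi = f\circ\dualtoprim{}\), which admits the integral representation \(\tfrac12\int_0^1(1-2t)\nabla^2\psi(p_t)[u,u]\,dt\) and is therefore genuinely \emph{second order} in \(\|u\|_{\hessmirror{x}^{-1}}\); the first-order terms from \(f(x)-f(z)\) and from the exponents cancel identically inside the commutator. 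Your four-way split \((i)\text{--}(iv)\) keeps the potential difference separate from the exponents: bounding \((i)\) alone via relative smoothness leaves an uncancelled first-order term \(\langle\gradpotential{x}, z-x\rangle\) of size \(\Theta(\beta\sqrt{h})\) (on the good event), which ought to have been absorbed; and your pieces \((iii)\) and \((iv)\) are each \(\Theta(d)\) individually, so the claim that ``each of the four pieces is bounded by a polynomial in \(h,d,\dots\)'' cannot mean a separate lower bound on each — you must combine \((iii)\) and \((iv)\) and show the reverse exponent equals \(\tfrac12\|\xi\|^2\) plus controlled corrections, and you must further notice that the cross term inside those corrections cancels the first-order part of \((i)\). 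A careful execution of your plan would rediscover these cancellations and effectively reassemble the Bregman commutator, but the proposal as written glosses over the one step where a naive term-by-term bound would fail, so the assertion that this produces ``exactly the five constraints'' in \(b(\varepsilon;\cdot)\) is not justified.
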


The proofs of \cref{lem:conductance-lower-strong-RelConv,lem:conductance-lower-weak,lem:one-step-overlap} are given in \cref{sec:proofs:main-lemmas-proofs}.
With these results, we proceed to the proof of \cref{thm:mix-mamla}.

\subsection{Complete proof of \cref{thm:mix-mamla}}
\label{sec:proofs:full-proof-thm}

\begin{proof}
Let \(h \in (0, b(\varepsilon; d, \alpha, \beta, \lambda)] \subseteq (0, 1]\).
From \cref{lem:one-step-overlap}, we have that for any \(x, y \in \interior{\primalspace}\) such that \(\|x - y\|_{\hessmirror{y}} \leq \frac{\sqrt{h}}{10 \cdot \bar{\alpha}}\),
\begin{equation*}
    \TVdist(\calP_{x}, \calP_{y}) \leq \frac{1}{2}\sqrt{\frac{h \cdot d}{20} + \frac{1}{80} + \frac{9}{2} \cdot h \cdot \beta^{2}} \leq \frac{1}{2}\sqrt{\frac{1}{20} + \frac{1}{80} + \frac{9}{8} \cdot \varepsilon}~.
\end{equation*}
For \(\varepsilon = \nicefrac{1}{24}\),
\begin{gather*}
    \TVdist(\calP_{x}, \calP_{y}) \leq \frac{1}{2}\sqrt{\frac{1}{20} + \frac{1}{80} + \frac{3}{64}} \leq \frac{1}{6} \\
    \TVdist(\calT_{x}, \calP_{x})~, \TVdist(\calT_{y}, \calP_{y}) \leq \frac{1}{48}
\end{gather*}
Consequently, \(\TVdist(\calT_{x}, \calT_{y}) \leq \frac{1}{24} + \frac{1}{6} \leq \frac{1}{4}\).

Let \(\{A_{1}, A_{2}\}\) be an arbitrary measurable partition of \(\primalspace\).
When \(\potential{}\) is \(\mu\)-strongly convex with respect to \(\mirrorfunc{}\) such that \(\mu > 0\), we invoke \cref{lem:conductance-lower-strong-RelConv} with \(\Delta = \frac{\sqrt{h}}{10 \cdot \bar{\alpha}}\) to obtain
\begin{equation*}
    \int_{A_{1}} \calT_{x}(A_{2})\targetdens(x)dx \geq \frac{3}{16} \cdot \min\left\{1, ~\frac{C_{\alpha} \cdot \sqrt{\mu} \cdot \sqrt{h}}{20 \cdot \bar{\alpha}}\right\} \cdot \min\{\targetdist(A_{1}), \targetdist(A_{2})\}~;\quad C_{\alpha} = \frac{1}{8\alpha + 4}~.
\end{equation*}
Consequently,
\begin{equation*}
    \Phi_{\bfT} \geq \frac{3}{16} \cdot \min\left\{1, \frac{C_{\alpha} \cdot \sqrt{\mu} \cdot \sqrt{h}}{20 \cdot \bar{\alpha}} \right\}~.
\end{equation*}

Instead, when \(\mu = 0\) i.e., when \(\potential{}\) is convex, the additional assumption about \(\mirrorfunc{}\) being a symmetric barrier with parameter \(\nu\) allows us to use \cref{lem:conductance-lower-weak} with \(\Delta = \frac{\sqrt{h}}{10 \cdot \bar{\alpha}}\) to get
\begin{equation*}
    \int_{A_{1}} \calT_{x}(A_{2}) \targetdens(x) dx \geq \frac{3}{16} \cdot \min\left\{1, \frac{\sqrt{h}}{80 \cdot \sqrt{\nu} \cdot \bar{\alpha}}\right\} \cdot \min\{\targetdist(A_{1}), \targetdist(A_{2})\}~,
\end{equation*}
and as a result,
\begin{equation*}
    \Phi_{\bfT} \geq \frac{3}{16} \cdot \min\left\{1, \frac{\sqrt{h}}{80 \cdot \sqrt{\nu} \cdot \bar{\alpha}}\right\} ~.
\end{equation*}

Since \(\Pi_{0}\) is \(M\)-warm with respect to \(\targetdist\), we instantiate \cref{prop:ls93-mixing-result}, which provides bounds on the mixing time for any \(\delta \in (0, 1)\) in the two cases discussed above.
To be precise,
\begin{equation*}
    \TVdist(\bbT^{k}_{\bfT}\Pi_{0}, \targetdist) \leq \sqrt{M} \cdot \left(1 - \frac{\Phi_{\bfT}^{2}}{2}\right)^{k} \leq \sqrt{M} \cdot \exp\left(-k \cdot \frac{\Phi_{\bfT}^{2}}{2}\right)~.
\end{equation*}
When \(k = \frac{2}{\Phi_{\bfT}^{2}} \log\left(\frac{\sqrt{M}}{\delta}\right)\), \(\TVdist(\bbT^{k}_{\bfT}\Pi_{0}, \Pi) \leq \delta\).
Thus,
\begin{description}
    \item [when \(\mu > 0\)]
    \begin{equation*}
        \mixingtime{\delta; \bfT, \Pi_{0}} = \calO\left(\max\left\{1, \frac{\bar{\alpha}^{4}}{\mu \cdot h} \right\} \cdot \log\left(\frac{\sqrt{M}}{\delta}\right)\right)~,\text{ and}
    \end{equation*}
    \item [when \(\mu = 0\)]
    \begin{equation*}
        \mixingtime{\delta; \bfT, \Pi_{0}} = \calO\left(\max\left\{1, \frac{\bar{\alpha}^{2} \cdot \nu}{h}\right\} \cdot \log\left(\frac{\sqrt{M}}{\delta}\right)\right)~.
    \end{equation*}
\end{description}

\end{proof}

\subsection{Proofs of conductance results and one-step overlap in \cref{sec:proofs:main-lemmas}}
\label{sec:proofs:main-lemmas-proofs}

Prior to presenting the the proofs of \cref{lem:conductance-lower-strong-RelConv,lem:conductance-lower-weak}, we first introduce and elaborate on two specific isoperimetric lemmas which we use in the proofs, which follow after this introduction.

The proof of \cref{lem:conductance-lower-strong-RelConv} uses an isoperimetric lemma for a distribution supported on a compact, convex subset \(\primalspace \subset \bbR^{d}\) with density \(\targetdens \propto e^{-\potential{}}\) where \(\potential{}\) is relatively strongly convex with respect to a self-concordant function \(\psi\) with domain \(\primalspace\).
This lemma is due to \cite{gopi2023algorithmic}; their original lemma assumes the self-concordance parameter of \(\psi\) to be \(1\) while we generalise it for \(\psi\) with an arbitrary self-concordance parameter \(\alpha\).
Additionally, we have the following metric \(d_{\psi} : \primalspace \times \primalspace \to \bbR_{+} \cup \{0\}\).
For any two points \(x, y \in \primalspace\), \(d_{\psi}(x, y)\) is the Riemannian distance between \(x, y\) as measured with respect to the Hessian metric \(\nabla^{2}\psi\) over \(\primalspace\).
We overload this notation to take sets as arguments; for two disjoint sets \(A, B\) of \(\primalspace\), \(d_{\psi}(A, B)\) is defined as \(d_{\psi}(A, B) = \inf_{x \in A, y \in B} d_{\psi}(x, y)\).
We state the lemma due to \cite{gopi2023algorithmic} as stated in \cite{kook2023efficiently}, which addresses the subtlety of \(\potential{}\) only taking finite values in \(\interior{\primalspace}\), and also highlight the dependence on the self-concordance parameter.

\begin{proposition}[{\citet[Lemma 2.7]{kook2023efficiently}}]
\label{lem:self-concordant-isoperimetry}
Let \(\psi : \primalspace \to \bbR \cup \{\infty\}\) be a self-concordant function with parameter \(\alpha\), and let \(\potential{}\) be a \(\mu\)-relatively convex function with respect to \(\psi\).
Then, for any given partition \(\{A_{1}, A_{2}, A_{3}\}\) of \(\primalspace\),
\begin{equation*}
    \targetdist(A_{3}) \geq C_{\alpha} \cdot \sqrt{\mu} \cdot d_{\psi}(A_{1}, A_{2}) \cdot \min\left\{\targetdist(A_{1}) ~, \targetdist(A_{2})\right\}
\end{equation*}
for positive constant \(C_{\alpha} = \frac{1}{8\alpha + 4}\).
\end{proposition}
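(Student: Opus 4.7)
The natural approach is reduction to the $1$-self-concordant case by rescaling the mirror function, which is the simplest way to leverage Gopi et al.'s existing result. The key observation is that if I set $\tilde\psi = \alpha^{2} \psi$, then a direct computation with the definition of self-concordance shows $\tilde\psi$ is self-concordant with parameter $1$: indeed $\nabla^{3}\tilde\psi[u,u,u] = \alpha^{2}\nabla^{3}\psi[u,u,u]$ while $\|u\|_{\nabla^{2}\tilde\psi}^{3} = \alpha^{3}\|u\|_{\nabla^{2}\psi}^{3}$, and the factor of $\alpha^{2}$ from $\nabla^{3}$ combined with $2\alpha$ on the right side produces $2 \cdot \|u\|_{\nabla^{2}\tilde\psi}^{3}$.

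Under this rescaling I would next track how the remaining hypotheses and conclusions transform. Since $\nabla^{2}\tilde\psi = \alpha^{2}\nabla^{2}\psi$, the relative convexity hypothesis $\mu \nabla^{2}\psi \preceq \nabla^{2}f$ becomes $(\mu/\alpha^{2})\nabla^{2}\tilde\psi \preceq \nabla^{2}f$, so $f$ is $(\mu/\alpha^{2})$-relatively convex with respect to $\tilde\psi$. The Riemannian length element of the Hessian metric transforms by a factor of $\alpha$, so for any sets $S_{1},S_{2}$, $d_{\tilde\psi}(S_{1},S_{2}) = \alpha \cdot d_{\psi}(S_{1},S_{2})$. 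Applying the $\alpha = 1$ version of the lemma to $\tilde\psi$ and $f$ yields
\begin{equation*}
    \int_{S_{3}} e^{-f(x)}\, dx \geq C_{1} \cdot \sqrt{\mu/\alpha^{2}} \cdot \alpha \cdot d_{\psi}(S_{1},S_{2}) \cdot \min\left\{\int_{S_{1}} e^{-f}\, dx,\int_{S_{2}} e^{-f}\, dx\right\},
\end{equation*}
where $C_{1}$ is the universal constant from Gopi et al.'s version. The factors of $\alpha$ cancel, and one obtains the stated form with $\sqrt{\mu} \cdot d_{\psi}(S_{1},S_{2})$ on the right.

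The main obstacle is the precise constant: the pure rescaling argument above produces a constant independent of $\alpha$, whereas the proposition asserts $C_{\alpha} = 1/(8\alpha+4)$. This means either (i) Gopi et al.'s proof in the $\alpha = 1$ case gives $C_{1} = 1/12$ and the stated $\alpha$-dependent form is a conservative bookkeeping that keeps $\alpha$ visible for later use (in which case the rescaling proof above suffices and $C_{\alpha}$ can be replaced by $1/12$), or (ii) the authors opt to reprove the lemma directly, propagating $\alpha$ through the Lovász–Simonovits localization that reduces the statement to a one-dimensional problem along a $\nabla^{2}\psi$-geodesic, where the self-concordance inequality with parameter $\alpha$ is invoked to control the Hessian of $f$ along the geodesic. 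The natural source of the $8\alpha+4$ is combining two self-concordance estimates (one for the Hessian metric's rate of change along the geodesic, one for the potential's second derivative), each carrying an additive or multiplicative $\alpha$.

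My plan is therefore to first attempt the rescaling route, which is clean and short, and verify that this yields the conclusion up to an absolute constant. If the stated constant $1/(8\alpha+4)$ is required verbatim, I would then revisit Gopi et al.'s localization argument: fix a geodesic parametrization in the Hessian metric of $\psi$, use the $\alpha$-self-concordance inequality to bound the derivative of $\log\det\nabla^{2}\psi$ and the second derivative of $f$ restricted to the geodesic, and integrate against the needle density to obtain the isoperimetric bound with the stated $\alpha$-dependence.
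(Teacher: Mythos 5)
Your rescaling argument is correct and, in fact, resolves the question cleanly. Setting \(\tilde\psi = \alpha^{2}\psi\) does give a self-concordant function with parameter \(1\) (the verification you sketch is right), the relative-convexity constant becomes \(\mu/\alpha^{2}\), and the Hessian-metric length element scales by \(\alpha\) so that \(d_{\tilde\psi} = \alpha\, d_{\psi}\); substituting into Gopi et al.'s \(\alpha = 1\) statement cancels the \(\alpha\) factors exactly, yielding the inequality with an absolute constant \(C_{1}\) independent of \(\alpha\). The paper itself does not actually spell out a proof of this generalisation -- it states the result as a proposition citing \citet{gopi2023algorithmic} and asserts, without argument, that it has been ``generalised'' to arbitrary \(\alpha\) -- so there is no paper proof to compare your route against. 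Regarding the constant: the paper's \(C_{\alpha} = 1/(8\alpha+4)\) coincides with your \(C_{1}\) at \(\alpha = 1\) (giving \(1/12\)), but for \(\alpha > 1\) your \(\alpha\)-independent constant is strictly larger, which would in turn shave a factor of \(\bar\alpha^{2}\) off the mixing-time bound in \cref{thm:mix-mamla} (where \(C_{\alpha}^{-2}\) contributes to the rate). For \(\alpha < 1\) your constant is nominally weaker than the stated one, but this regime is immaterial since the theorem replaces \(\alpha\) by \(\bar\alpha = \max\{1,\alpha\}\) throughout. You should therefore commit to the rescaling route and drop the hedged alternative of reproving Gopi et al.'s localization argument with \(\alpha\) threaded through -- the rescaling is both shorter and strictly at least as strong in the regime that matters. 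One small point worth stating explicitly if you write this up: the integrals in the conclusion are effectively over \(S_{i} \cap \primalspace\), since \(f\) equals \(+\infty\) outside the domain, and the same remark applies to the Riemannian distance; the rescaling does not disturb this because \(\tilde\psi\) and \(\psi\) share the same domain.
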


To operationalise this in order to obtain lower bounds on the conductance using the one-step overlap which uses a closeness criterion in terms of \(\|x - y\|_{\hessmirror{y}}\), we require a relation between \(d_{\mirrorfunc{}}(x, y)\) and \(\|x - y\|_{\hessmirror{y}}\).
The self-concordance property of \(\mirrorfunc{}\) ensures that the metric \(\nabla^{2}\mirrorfunc{}\) is relatively stable between two points that are close, and thus implying that the Riemannian distance between \(x, y\) is approximately the local norm \(\|x - y\|_{\hessmirror{y}}\).
The following lemma from \cite{nesterov2002riemannian} formalises this intuition.
The lemma below is a slight modification of their original lemma that makes the dependence on the self-concordance parameter explicit.

\begin{proposition}[{\citet[Lemma 3.1]{nesterov2002riemannian}}]
\label{lem:hessian-metric-to-hessian-norm}
Let \(\psi : \primalspace \to \bbR \cup \{\infty\}\) be a convex and self-concordant with parameter \(\alpha\).
For any \(\Delta \in [0, \nicefrac{1}{\alpha})\), if for any \(x, y \in \interior{\primalspace}\) it holds that \(d_{\psi}(x, y) \leq \Delta - \alpha \cdot \frac{\Delta^{2}}{2}\), then \(\|x - y\|_{\nabla^{2}\psi(x)} \leq \Delta\).
\end{proposition}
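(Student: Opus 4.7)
The plan is to trace a length-minimising curve from $x$ to $y$ in the Hessian metric and convert Riemannian arc length into ambient $\nabla^{2}\psi(x)$-length via the standard Dikin-ellipsoid control provided by self-concordance. Let $\gamma \colon [0,T] \to \interior{\primalspace}$ be a geodesic of the metric $\nabla^{2}\psi$ connecting $x$ to $y$, parameterised by arc length, so that $\|\dot\gamma(s)\|_{\nabla^{2}\psi(\gamma(s))} = 1$ for all $s$ and $T = d_{\psi}(x,y)$. (If the infimum defining $d_\psi$ is not attained I would instead take an approximating sequence of curves; this is a minor technicality.) Then
\[
y - x = \int_{0}^{T} \dot\gamma(s) \, ds,
\]
so by the triangle inequality for the Euclidean norm $\|\cdot\|_{\nabla^{2}\psi(x)}$,
\[
\|y-x\|_{\nabla^{2}\psi(x)} \;\le\; \int_{0}^{T} \|\dot\gamma(s)\|_{\nabla^{2}\psi(x)} \, ds.
\]

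The next step is to bound the integrand using self-concordance. The standard consequence of \cref{def:self-concord} is that whenever $\|z - x\|_{\nabla^{2}\psi(x)} < 1/\alpha$ one has $\nabla^{2}\psi(z) \succeq (1 - \alpha\|z-x\|_{\nabla^{2}\psi(x)})^{2} \, \nabla^{2}\psi(x)$, and hence for any vector $u$,
\[
\|u\|_{\nabla^{2}\psi(x)} \;\le\; \frac{1}{1 - \alpha\|z-x\|_{\nabla^{2}\psi(x)}} \, \|u\|_{\nabla^{2}\psi(z)}.
\]
Writing $r(s) \defeq \|\gamma(s) - x\|_{\nabla^{2}\psi(x)}$ and applying this with $z = \gamma(s)$, $u = \dot\gamma(s)$ gives, as long as $r(s) < 1/\alpha$,
\[
\dot r(s) \;\le\; \|\dot\gamma(s)\|_{\nabla^{2}\psi(x)} \;\le\; \frac{1}{1 - \alpha r(s)}.
\]
Multiplying by $(1-\alpha r(s))$ and integrating from $0$ to any $s$ on which the bound is valid yields the scalar inequality
\[
r(s) - \frac{\alpha}{2} r(s)^{2} \;\le\; s.
\]

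The final step is a continuation/monotonicity argument. The function $g(r) = r - \alpha r^{2}/2$ is strictly increasing on $[0, 1/\alpha]$ with maximum $1/(2\alpha)$. Since $r(0) = 0$, I let $T^{\star}$ be the supremum of times at which $r(s) < 1/\alpha$; on $[0,T^\star)$ the displayed inequality holds, and the hypothesis $T \le \Delta - \alpha\Delta^{2}/2 < 1/(2\alpha)$ (because $\Delta < 1/\alpha$) forces $T^{\star} > T$ by a simple contradiction at $r(T^{\star}) = 1/\alpha$. Evaluating at $s=T$ gives
\[
g(r(T)) = r(T) - \frac{\alpha}{2} r(T)^{2} \;\le\; T \;\le\; \Delta - \frac{\alpha}{2}\Delta^{2} = g(\Delta),
\]
and strict monotonicity of $g$ on $[0, 1/\alpha]$ yields $\|x-y\|_{\nabla^{2}\psi(x)} = r(T) \le \Delta$, as desired.

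The only place needing genuine care is the continuation argument that keeps $r(s) < 1/\alpha$ throughout $[0,T]$ so that the Dikin-ellipsoid inequality remains applicable; this is where the condition $\Delta < 1/\alpha$ is essential, since it guarantees $T < 1/(2\alpha)$ and thereby rules out the curve escaping the self-concordant neighbourhood before reaching $y$. Everything else is bookkeeping: the explicit form of the self-concordant Hessian bound, a triangle inequality, and a first-order scalar ODE comparison.
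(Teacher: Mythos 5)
Your argument is correct. The paper does not give its own proof of this proposition; it simply cites Lemma 3.1 of \citet{nesterov2002riemannian} as a slight generalisation from $\alpha = 1$ to arbitrary $\alpha$. Your derivation (arc-length parameterisation of a connecting curve, the self-concordance Hessian comparison $\nabla^{2}\psi(z) \succeq (1-\alpha\|z-x\|_{\nabla^{2}\psi(x)})^{2}\nabla^{2}\psi(x)$ inside the Dikin ellipsoid, the scalar comparison $g(r(s)) \le s$ with $g(r) = r - \alpha r^{2}/2$, and a continuation argument to stay inside the ellipsoid) is precisely the standard method used in the cited source, so there is nothing genuinely different to compare.

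Two small points worth tightening if this were to appear in print: first, the quantity $r(s) = \|\gamma(s) - x\|_{\nabla^{2}\psi(x)}$ is not differentiable at $r = 0$, so the differential inequality $\dot r(s) \le \|\dot\gamma(s)\|_{\nabla^{2}\psi(x)}$ (Cauchy--Schwarz) should be taken in the upper Dini sense or one should integrate on $[\epsilon, s]$ and let $\epsilon \to 0$; and second, your ``$T^{\star}$ is the supremum of times at which $r(s) < 1/\alpha$'' is more precisely the first exit time $T^{\star} = \sup\{s : r(\sigma) < 1/\alpha\ \forall\,\sigma \in [0,s]\}$, after which the contradiction $1/(2\alpha) = g(1/\alpha) \le T^{\star} \le T \le g(\Delta) < 1/(2\alpha)$ is clean. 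Neither affects the validity of the argument.
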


Next, for the proof of \cref{lem:conductance-lower-weak}, we use an isoperimetric inequality for a log-concave distribution \(\targetdist\) from \cite{lovasz2003hit}.
Before we state this, we introduce the notion of the cross ratio between two points in \(\primalspace\), which is compact.
Let \(x, y \in \interior{\primalspace}\), and consider a line segment between \(x\) and \(y\).
Let the endpoints of this extension of the line segment to the boundary of \(\primalspace\) be \(p\) and \(q\) respectively (thus forming a chord), with points in the order \(p, x, y, q\).
The cross ratio between \(x\) and \(y\) (with respect to \(\primalspace\)) is
\begin{equation*}
    \crossratio{x}{y}{\primalspace} = \frac{\|y - x\| \cdot \|q - p\|}{\|y - q\| \cdot \|p - x\|}~.
\end{equation*}
We overload this notation to define the cross ratio between sets as
\begin{equation*}
    \crossratio{A_{2}}{A_{1}}{\primalspace} = \inf\limits_{y \in A_{2},~x \in A_{1}} \crossratio{y}{x}{\primalspace}~.
\end{equation*}

\begin{proposition}[{\citet[Thm. 2.2]{lovasz2003hit}}]
\label{lem:log-concave-isoperimetry}
    Let \(\targetdist\) be a log-concave distribution on \(\bbR^{d}\) with support \(\primalspace\).
    For any measurable partition \(\{A_{1}, A_{2}, A_{3}\}\) of \(\primalspace\),
    \begin{equation*}
        \targetdist(A_{3}) \geq \crossratio{A_{2}}{A_{1}}{\primalspace} \cdot \targetdist(A_{2}) \cdot \targetdist(A_{1})~.
    \end{equation*}
\end{proposition}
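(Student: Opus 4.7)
The plan is to follow the classical localization approach of Lovász and Simonovits, which reduces this high-dimensional isoperimetric inequality to a concrete one-dimensional statement about log-concave densities on an interval.

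First, I would argue by contradiction. Suppose a measurable partition $\{A_1, A_2, A_3\}$ of $\primalspace$ violates the inequality. Writing $\rho = \crossratio{A_2}{A_1}{\primalspace}$, the failure takes the form
$$\int_{A_3} e^{-\potential{x}} dx < \rho \cdot \int_{A_1} e^{-\potential{x}} dx \cdot \int_{A_2} e^{-\potential{x}} dx.$$
This is a bilinear inequality among three integrals of a log-concave density against indicators of a partition. The localization lemma of Lovász-Simonovits guarantees that any such inequality, if violated over $\primalspace$, must be violated along some 1-dimensional segment $[p, q] \subseteq \primalspace$ equipped with a ``needle'' density of the form $g(t) = e^{-\potential{(1-t)p + tq} + \alpha t}$ for some $\alpha \in \bbR$, which remains log-concave on $[0, 1]$.

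Second, the problem reduces to the 1D statement: for a partition $\{B_1, B_2, B_3\}$ of $[0, 1]$ inherited from $\{A_1, A_2, A_3\}$, with $B_3$ separating $B_1$ from $B_2$ (by continuity along the segment and the fact that $A_3$ separates $A_1$ from $A_2$), show $\int_{B_3} g \geq \rho' \cdot \int_{B_1} g \cdot \int_{B_2} g$, where for $s \in B_1, t \in B_2$ with $s < t$ the local cross-ratio is $\rho' = \frac{t - s}{s \cdot (1 - t)}$. The core analytic input is three-point log-concavity: for any $s \leq u \leq t$ in $[0,1]$, one has $g(u) \geq g(s)^{(t-u)/(t-s)} \cdot g(t)^{(u-s)/(t-s)}$. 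Integrating this pointwise bound over $u \in B_3 \cap [s, t]$ and then over the product $(s, t) \in B_1 \times B_2$ (after swapping the order of integration) should produce the claimed product structure with the cross-ratio prefactor.

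The main obstacle I expect is extracting the \emph{exact} cross-ratio constant cleanly in the 1D step; naive applications of log-concavity typically lose dimension-dependent constants. The sharp form requires a Prékopa-Leindler flavored transport argument: one rewrites the product $\int_{B_1} g \cdot \int_{B_2} g$ as a double integral and constructs, for each pair $(s, t)$, a mass-transport map from $B_1 \times B_2$ into $B_3$ whose Jacobian is controlled precisely by the cross-ratio. A secondary technical hurdle is setting up the localization lemma for the bilinear functional appearing here, rather than the four-function form in which the lemma is most conveniently stated; this requires an auxiliary linear functional (e.g.\ a Lagrange-multiplier style parameter) whose zero set pins down the three-part partition structure. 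Completing these steps reproduces Theorem 2.2 of \cite{lovasz2003hit}.
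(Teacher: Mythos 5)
The paper does not prove this proposition: it is imported verbatim as Theorem~2.2 of \citet{lovasz2003hit}, so there is no in-paper proof to compare against. With respect to the original Lovász--Vempala argument, your high-level plan --- reduce to a one-dimensional ``needle'' via the localization lemma, then finish with 1D log-concavity --- is the right strategy, and you correctly flag the two real hurdles (recasting a bilinear inequality into a localization-friendly linear form, and extracting the exact cross-ratio constant).

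However, the 1D closing step as you describe it would not go through, and it is not how the known proof works. Integrating the three-point bound $g(u) \geq g(s)^{(t-u)/(t-s)} g(t)^{(u-s)/(t-s)}$ over $u \in B_3$ and then over $(s,t) \in B_1 \times B_2$ mismatches the two sides structurally: $\int_{B_3} g$ is a single integral while $\int_{B_1} g \cdot \int_{B_2} g$ is a double integral, so integrating both sides over $(s,t)$ over-counts the left side, and you also have to carry the normalization $\targetdist(\primalspace)=1$ through the needle reduction for the bilinear inequality to be scale-consistent. In the actual argument, iterated localization pushes the needle density all the way down to a truncated \emph{exponential} $g(t) \propto e^{ct}$, and the 1D cross-ratio inequality is then checked by direct computation for exponentials; no Prékopa--Leindler-style transport map is required. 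One further loose end: you assume $A_3$ separates $A_1$ from $A_2$ along the segment, but the partition in the statement is arbitrary. You should first dispose of the non-separating case by noting that then $\crossratio{A_2}{A_1}{\primalspace}=0$ and the inequality is vacuous, after which assuming separation is without loss of generality.
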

As seen previously, to operationalise this isoperimetric inequality, we have to relate the cross ratio between two points to the local norm \(\|x - y\|_{\hessmirror{y}}\).
When \(\mirrorfunc{}\) is a symmetric barrier, this is possible due to the following lemma.
\begin{proposition}[{\citet[Lem. 2.3]{laddha2020strong}}]
\label{lem:sym-barrier-cross-ratio}
    Let \(\psi : \primalspace \to \bbR \cup \{\infty\}\) be a symmetric barrier with parameter \(\nu\).
    Then, for any \(x, y \in \interior{\primalspace}\), \(\crossratio{y}{x}{\primalspace} \geq \frac{\|y - x\|_{\nabla^{2}\psi(y)}}{\sqrt{\nu}}\).
\end{proposition}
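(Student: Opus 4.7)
The plan is to reduce the inequality to a one-dimensional computation along the chord through $x$ and $y$ and then to apply the symmetric barrier inclusion at the point $y$.

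First, I would set up coordinates along the chord. Let $p,q \in \partial\primalspace$ be the endpoints of the line through $x$ and $y$, with $p,x,y,q$ in that order along the chord; write $v = (y-x)/\|y-x\|$ for the unit direction, and set $\alpha = \|x-p\|$, $a = \|y-x\|$, $\beta = \|y-q\|$. With this notation $\crossratio{y}{x}{\primalspace} = \frac{a(\alpha + a + \beta)}{\alpha \beta}$. The right-hand side $\|y-x\|_{\hessmirror{y}}/\sqrt{\nu}$ equals $a \cdot \|v\|_{\hessmirror{y}}/\sqrt{\nu}$, so the whole statement reduces to bounding $\|v\|_{\hessmirror{y}}$ in terms of the chord geometry.

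Next, I would invoke the symmetric barrier condition centered at $y$, namely $\primalspace \cap (2y - \primalspace) \subseteq \calE_y^{\mirrorfunc{}}(\sqrt{\nu})$. Intersecting $\primalspace \cap (2y - \primalspace)$ with the chord: a point $y + tv$ lies in $\primalspace$ iff $-a-\alpha \le t \le \beta$, and its reflection $y - tv$ lies in $\primalspace$ iff $-\beta \le t \le a+\alpha$. Hence both conditions hold iff $|t| \le \min\{\beta, a+\alpha\}$. Choosing $t = \min\{\beta, a+\alpha\}$ and applying the symmetric barrier inclusion at $y$ gives
\begin{equation*}
    \min\{\beta,\, a+\alpha\}\cdot \|v\|_{\hessmirror{y}} \;\le\; \sqrt{\nu},
\end{equation*}
so $\|y-x\|_{\hessmirror{y}}/\sqrt{\nu} \le a/\min\{\beta,\, a+\alpha\}$.

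Finally, I would finish with an elementary case split comparing this to $\crossratio{y}{x}{\primalspace} = a(\alpha + a + \beta)/(\alpha\beta)$. In the case $\beta \le a+\alpha$, the claim reduces to $(\alpha + a + \beta)/\alpha \ge 1$, which is immediate. In the case $\beta > a+\alpha$, it reduces to $(\alpha + a + \beta)(\alpha + a) \ge \alpha\beta$, i.e.\ $(\alpha + a)^2 + \beta a \ge 0$, again immediate. The main conceptual step is the second one: recognizing that the reflection across $y$ turns the symmetric barrier inclusion into a bound on how far one can travel along the chord in the local norm $\|\cdot\|_{\hessmirror{y}}$, with the effective half-length being $\min\{\beta, a+\alpha\}$; once that geometric identification is made, everything else is bookkeeping.
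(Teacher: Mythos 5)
The paper does not supply its own proof of this proposition; it simply cites \citet[Lem.~2.3]{laddha2020strong}. Your argument is correct and is essentially the standard one behind that citation: restrict the symmetric barrier inclusion \(\primalspace\cap(2y-\primalspace)\subseteq\calE_{y}^{\phi}(\sqrt{\nu})\) to the chord through \(x\) and \(y\), observe that the chord meets the symmetrized set in the segment \(\{y+tv:|t|\le\min\{\beta,a+\alpha\}\}\), deduce \(\min\{\beta,a+\alpha\}\cdot\|v\|_{\hessmirror{y}}\le\sqrt{\nu}\), and then compare \(a/\min\{\beta,a+\alpha\}\) to \(\crossratio{y}{x}{\primalspace}=a(\alpha+a+\beta)/(\alpha\beta)\) by the two-case split. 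Both cases reduce to trivially true inequalities exactly as you state. The only step worth being explicit about is that \(\primalspace\) is compact (hence closed), so the extreme point \(t=\min\{\beta,a+\alpha\}\) does lie in \(\primalspace\cap(2y-\primalspace)\); alternatively one may take \(t\uparrow\min\{\beta,a+\alpha\}\) and pass to the limit. A slightly slicker way to finish the case analysis is to note \(\min\{\beta,a+\alpha\}\ge\alpha\beta/(\alpha+a+\beta)\), which packages both cases into one line, but your version is perfectly fine.
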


With all of these ingredients, we give the proofs of \cref{lem:conductance-lower-strong-RelConv,lem:conductance-lower-weak}.
The proofs of both lemmas only differ in the isoperimetric inequalities used, and hence are presented together.

\subsubsection{Proofs of \cref{lem:conductance-lower-strong-RelConv,lem:conductance-lower-weak}}
\label{prf:lem:conductance-lower}
\begin{proof}
By the reversibility of the Markov chain \(\bfT\), we have for any partition \(\{A_{1}, A_{2}\}\) of \(\primalspace\) that
\begin{equation}
\label{eq:equality-transitions}
    \int_{A_{1}}\calT_{x}(A_{2})\targetdens(x) dx = \int_{A_{2}}\calT_{x}(A_{1})\targetdens(x) dx~.
\end{equation}
Define the subsets \(A_{1}'\) and \(A_{2}'\) of \(A_{1}\) and \(A_{2}\) respectively as
\begin{equation*}
    A_{1}' = \left\{x \in A_{1} ~:~ \calT_{x}(A_{2}) < \frac{3}{8}\right\}; \qquad A_{2}' = \left\{x \in A_{2} ~:~ \calT_{x}(A_{1}) < \frac{3}{8}\right\}~.
\end{equation*}

We consider two cases, which cover all possibilities
\begin{description}
\item [Case 1]: \(\Pi(A_{1}') \leq \frac{\Pi(A_{1})}{2}\) or \(\Pi(A_{2}') \leq \frac{\Pi(A_{2})}{2}\).
\item [Case 2]: \(\Pi(A_{i}') > \frac{\Pi(A_{i})}{2}\) for \(i \in \{1, 2\}\)
\end{description}

\paragraph*{Lower bound from case 1:}
Let \(\Pi(A_{1}') \leq \frac{\Pi(A_{1})}{2}\).
Then,
\begin{equation*}
    \Pi(A_{1} \cap (\primalspace \setminus A_{1}')) = \Pi(A_{1}) - \Pi(A_{1}') \geq \frac{\Pi(A_{1})}{2}.
\end{equation*}
Consequently,
\begin{equation*}
    \int_{A_{1}}\calT_{x}(A_{2})\targetdens(x)dx \geq \int_{A_{1} \cap (\primalspace \setminus A_{1}')} \calT_{x}(A_{2})\targetdens(x)dx \geq \int_{A_{1} \cap (\primalspace \setminus A_{1}')} \frac{3}{8} \cdot \targetdens(x) dx \geq \frac{3}{16} \cdot \targetdist(A_{1})~.
\end{equation*}
The first inequality is due to \(A_{1} \cap (\primalspace \setminus A_{1}') \subset A_{1}\).
The second inequality uses the fact that since \(x \in A_{1} \cap (\primalspace \setminus A_{1}')\), \(x \in A_{1} \setminus A_{1}'\), and for such \(x\), \(\calT_{x}(A_{2}) \geq \frac{3}{8}\).
The final step uses the fact shown right above.
We can use this same technique to show that when \(\targetdist(A_{2}') \leq \frac{\targetdist(A_{2})}{2}\),
\begin{equation*}
    \int_{A_{2}}\calT_{x}(A_{1})\targetdens(x) dx \geq \frac{3}{16}\cdot \targetdist(A_{2})~.
\end{equation*}
Combining these gives,
\begin{equation*}
    \int_{A_{1}}\calT_{x}(A_{2}) \targetdens(x) dx \geq \frac{3}{16}\cdot \min\{\Pi(A_{1}), \Pi(A_{2})\}~.
\end{equation*}

\paragraph*{Lower bound from case 2:}
We start by considering arbitrary \(x' \in A_{1}'\) and \(y' \in A_{2}'\).
From \cref{eq:equality-transitions},
\begin{align*}
    \int_{A_{1}}\calT_{x}(A_{2})\targetdens(x)dx &= \frac{1}{2}\left(\int_{A_{1}}\calT_{x}(A_{2})\targetdens(x)dx + \int_{A_{2}}\calT_{y}(A_{1})\targetdens(y)dy \right) \\
    &\geq \frac{1}{2}\left(\int_{A_{1} \cap (\primalspace \setminus A_{1}')}\calT_{x}(A_{2})\targetdens(x)dx + \int_{A_{2} \cap (\primalspace \setminus A_{2}')}\calT_{y}(A_{1})\targetdens(y)dy \right) \\
    &\geq \frac{3}{16}\left(\int_{A_{1} \cap (\primalspace \setminus A_{1}')}\targetdens(x)dx + \int_{A_{2} \cap (\primalspace \setminus A_{2}')}\targetdens(y)dy\right) \\
    &= \frac{3}{16} \cdot \targetdist(\primalspace \setminus (A_{1}' \cup A_{2}')) = \frac{3}{16} \cdot \targetdist(\primalspace \setminus A_{1}' \setminus A_{2}')~.\numberthis
    \label{eq:lower-bound-conductance-pre-inequality}
\end{align*}

Also, by the definition of \(\TVdist(\calT_{x'}, \calT_{y'})\), and using the fact that \(A_{1}\) and \(A_{2}\) form a partition,
\begin{equation*}
    \TVdist(\calT_{x'}, \calT_{y'}) \geq \calT_{x'}(A_{1}) - \calT_{y'}(A_{1}) = 1 - \calT_{x'}(A_{2}) - \calT_{y'}(A_{1}) > 1 - \frac{3}{8} - \frac{3}{8} = \frac{1}{4}.
\end{equation*}
At this juncture, the proofs of \cref{lem:conductance-lower-strong-RelConv,lem:conductance-lower-weak} will differ due to the different isoperimetric inequalities discussed previously.

\paragraph{When \(\mu > 0\)}

We first discuss \textbf{Case 2} in the context of \cref{lem:conductance-lower-strong-RelConv}.
From the assumption of the lemma, this implies by contraposition that \(\|x' - y'\|_{\hessmirror{y'}} > \Delta\) for \(\Delta \leq \frac{1}{2\alpha}\).
Let \(A_{3}' = \primalspace \setminus A_{1}' \setminus A_{2}'\).
Note that \(\{A_{1}', A_{2}', A_{3}'\}\) forms a partition of \(\primalspace\).
Since \(\potential{}\) is \(\mu\)-relatively convex with respect to \(\mirrorfunc{}\), which is self-concordant with parameter \(\alpha\), \cref{lem:self-concordant-isoperimetry} then gives
\begin{equation*}
    \targetdist(A_{3}') \geq C_{\alpha} \cdot \sqrt{\mu} \cdot d_{\phi}(A_{2}', A_{1}') \cdot \min\left\{\targetdist(A_{1}'), \targetdist(A_{2}')\right\}~.
\end{equation*}

From \cref{lem:hessian-metric-to-hessian-norm}, if \(\|x' -  y'\|_{\hessmirror{y'}} > \Delta\) for \(\Delta \leq \frac{1}{2\alpha}\), \(d_{\mirrorfunc{}}(y', x') > \Delta - \alpha \cdot \frac{\Delta^{2}}{2}\).
Since \(x' \in A_{1}', y' \in A_{2}'\) are arbitrary, this holds for all pairs of \((x', y') \in A_{1}' \times A_{2}'\), and hence
\begin{equation*}
    d_{\phi}(A_{2}', A_{1}') = \inf_{x' \in A_{1}', y' \in A_{2}'} d_{\phi}(y', x') > \Delta - \alpha \cdot \frac{\Delta^{2}}{2} \geq \frac{\Delta}{2}
\end{equation*}
where the final inequality uses the fact that \(t - \frac{t^{2}}{2} \geq \frac{t}{2}\) for \(t \in [0, 0.5]\) with \(t = \alpha \cdot \Delta\).
Substituting the above two inequalities in \cref{eq:lower-bound-conductance-pre-inequality}, we get the lower bound
\begin{align*}
    \int_{A_{1}}\calT_{x}(A_{2})\targetdens(x) dx &\geq C_{\alpha} \cdot \sqrt{\mu} \cdot \Delta \cdot \frac{3}{32} \cdot \min\{\targetdist(A_{1}'), \targetdist(A_{2}')\} \\
    &\geq C_{\alpha} \cdot \sqrt{\mu} \cdot \Delta \cdot \frac{3}{64} \cdot \min\{\targetdist(A_{1}), \targetdist(A_{2})\}~.
\end{align*}
where the final inequality uses the assumption of the case that \(\targetdist(A_{i}') > \frac{\targetdist(A_{i})}{2}\) for \(i \in \{1, 2\}\).

\paragraph{When \(\mu = 0\)}
Next, we discuss \textbf{Case 2} in the context of \cref{lem:conductance-lower-weak}.
From \cref{lem:log-concave-isoperimetry},
\begin{equation*}
    \targetdist(\primalspace \setminus A_{1}' \setminus A_{2}') \geq \crossratio{A_{2}'}{A_{1}'}{\primalspace} \cdot \targetdist(A_{1}') \cdot \targetdist(A_{2}')~.
\end{equation*}
From \cref{lem:sym-barrier-cross-ratio}, \(\crossratio{y'}{x'}{\primalspace} \geq \frac{\|y' - x'\|_{\hessmirror{y'}}}{\sqrt{\nu}}\).
Since \(x' \in A_{1}', y' \in A_{2}'\) are arbitrary, this holds for all pairs of \((x', y') \in A_{1}' \times A_{2}'\), and hence
\begin{equation*}
    \crossratio{A_{2}'}{A_{1}'}{\primalspace} = \inf_{y' \in A_{2}', x' \in A_{1}'} \crossratio{y'}{x'}{\primalspace} \geq \inf_{y' \in A_{2}', x' \in A_{1}'} \frac{\|y' - x'\|_{\hessmirror{y'}}}{\sqrt{\nu}} \geq \frac{\Delta}{\sqrt{\nu}}~.
\end{equation*}

Substituting the above two inequalities in \cref{eq:lower-bound-conductance-pre-inequality}, we get the lower bound
\begin{align*}
    \int_{A_{1}} \calT_{x}(A_{2})\targetdens(x)dx &\geq \frac{3}{16} \cdot \frac{\Delta}{\sqrt{\nu}} \cdot \Pi(A_{2}') \cdot \Pi(A_{1}') \\
    &\geq \frac{3}{64} \cdot \frac{\Delta}{\sqrt{\nu}} \cdot \Pi(A_{2}) \cdot \Pi(A_{1}) \\
    &\geq \frac{3}{128} \cdot \frac{\Delta}{\sqrt{\nu}} \cdot \min\left\{\Pi(A_{1}), \Pi(A_{2})\right\}
\end{align*}
where the second inequality uses the assumption of the case that \(\targetdist(A_{i}') > \frac{\targetdist(A_{i})}{2}\) for \(i \in \{1, 2\}\), and the final inequality uses the simple fact that \(t(1 - t) \geq 0.5 \cdot \min\{t, (1 - t)\}\) when \(t \in [0, 1]\).

We collate the inequalities from \textbf{Case 1} and \textbf{2} to get the following inequalities
\begin{description}
    \item [when \(\mu > 0\) as in \cref{lem:conductance-lower-strong-RelConv}] \begin{equation*}
            \int_{A_{1}} \calT_{x}(A_{2})\targetdens(x) dx \geq \min\left\{\frac{3}{16}, \frac{3}{64} \cdot C_{\alpha} \cdot \sqrt{\mu} \cdot \Delta \right\} \cdot \min\{\targetdist(A_{1}), \targetdist(A_{2})\}~,
        \end{equation*}
    \item [when \(\mu = 0\) as in \cref{lem:conductance-lower-weak}] \begin{equation*}
        \int_{A_{1}} \calT_{x}(A_{2})\targetdens(x) dx \geq \min\left\{\frac{3}{16}, \frac{3}{128} \cdot \frac{\Delta}{\sqrt{\nu}}\right\} \cdot \min\{\targetdist(A_{1}), \targetdist(A_{2})\}~,
    \end{equation*}
\end{description}
which concludes the proofs.
\end{proof}

\subsubsection{Proof of \cref{lem:one-step-overlap}}

In this subsection, we give the proof of both statements of the lemma.
Specifically,
\begin{description}
    \item [{\hyperlink{prf:part-1-lem:one-step-overlap}{Part 1}}] provides the proof for the first statement about \(\TVdist(\calP_{x}, \calP_{y})\), and
    \item [{\hyperlink{prf:part-2-lem:one-step-overlap}{Part 2}}] provides the proof for the second statement about \(\TVdist(\calT_{x}, \calP_{x})\).
\end{description}

Before giving the proofs, we state some key properties of \(\mirrorfunc{}\) due to its self-concordance.
\begin{proposition}[{\citet[\S 5.1.4]{nesterov2018lectures}}]
\label{prop:self-concordance-props}
Let \(\mirrorfunc{} : \primalspace \to \bbR \cup \{\infty\}\) be a self-concordant function with parameter \(\alpha\).
\begin{enumerate}
    \item For any \(x, y \in \interior{\primalspace}\) such that \(\|x - y\|_{\hessmirror{y}} < \frac{1}{\alpha}\),
    \begin{equation*}
        (1 - \alpha \cdot \|x - y\|_{\hessmirror{y}})^{2} \cdot \hessmirror{y} \preceq \hessmirror{x} \preceq \frac{1}{(1 - \alpha \cdot \|x - y\|_{\hessmirror{y}})^{2}} \cdot \hessmirror{y}~.
    \end{equation*}
    \item For any \(x, y \in \interior{\primalspace}\) such that \(\|x - y\|_{\hessmirror{y}} < \nicefrac{1}{\alpha}\), the matrix
    \begin{equation*}
        G_{y} = \int_{0}^{1} \hessmirror{y + \tau(x - y)} d\tau
    \end{equation*}
    satisfies
    \begin{equation*}
        G_{y} \preceq \frac{1}{1 - \alpha \cdot \|x - y\|_{\hessmirror{y}}} \cdot \hessmirror{y}~.
    \end{equation*}
    \item \(\mirrorfunc{}^{*}\) is also a self-concordant function with parameter \(\alpha\).
\end{enumerate}
\end{proposition}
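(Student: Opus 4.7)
The plan is to address the three parts sequentially, leveraging the definition of self-concordance and the standard machinery from \citet{nesterov2018lectures}; the first part is the crux from which the second follows by integration and the third follows by implicit differentiation.

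For \textbf{part 1}, my approach is to study how the Hessian varies along the segment from $y$ to $x$. Fix an arbitrary direction $h\in\bbR^d$ and define $\psi_h(t) = h^\top \hessmirror{y+t(x-y)}\,h$ for $t\in[0,1]$. Differentiating gives $\psi_h'(t) = \nabla^3\mirrorfunc{y+t(x-y)}[x-y,h,h]$. Self-concordance as stated in \cref{def:self-concord} is a cubic bound with all three arguments equal, so I will first establish the polarised form $|\nabla^3\mirrorfunc{z}[u,v,v]| \leq 2\alpha \|u\|_{\hessmirror{z}}\|v\|_{\hessmirror{z}}^2$ via a standard trilinear-form polarisation argument (consider the cubic polynomial $t\mapsto \nabla^3\mirrorfunc{z}[u+tv, u+tv, u+tv]$ and extract coefficients). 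Applying this with $z = y + t(x-y)$, $u = x-y$, $v = h$ yields $|\psi_h'(t)| \leq 2\alpha\, r(t)\, \psi_h(t)$, where $r(t) := \|x-y\|_{\hessmirror{y+t(x-y)}}$. Along the way I also need to control $r(t)$ itself: specialising the same polarised bound with $h = x-y$ shows $|r'(t)| \leq \alpha r(t)^2$, which upon integrating the separable ODE gives $r(t) \leq r(0)/(1-\alpha t\, r(0))$ on the interval where $r(0) = \|x-y\|_{\hessmirror{y}} < 1/\alpha$. Substituting back and integrating $|(\log \psi_h)'(t)| \leq 2\alpha r(t)$ from $0$ to $1$, I get $\psi_h(1)/\psi_h(0)$ sandwiched between $(1-\alpha\|x-y\|_{\hessmirror{y}})^2$ and $1/(1-\alpha\|x-y\|_{\hessmirror{y}})^2$. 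Since $h$ is arbitrary, this is exactly the Löwner bound claimed.

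For \textbf{part 2}, I will apply part 1 with the pair $(y,\, y+\tau(x-y))$ for each $\tau\in[0,1]$. The relevant local norm is $\|\tau(x-y)\|_{\hessmirror{y}} = \tau \|x-y\|_{\hessmirror{y}}$, so part 1 yields the pointwise bound $\hessmirror{y+\tau(x-y)} \preceq (1-\alpha\tau\|x-y\|_{\hessmirror{y}})^{-2}\hessmirror{y}$. Integrating in $\tau$ and evaluating the elementary integral $\int_0^1 (1-\alpha\tau s)^{-2} d\tau = 1/(1-\alpha s)$ with $s = \|x-y\|_{\hessmirror{y}}$ gives $G_y \preceq (1-\alpha\|x-y\|_{\hessmirror{y}})^{-1}\hessmirror{y}$, as claimed.

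For \textbf{part 3}, because $\mirrorfunc{}$ is of Legendre type, the conjugate relation gives $\dualtoprim{y} = x(y)$ with $\primtodual{x(y)} = y$. Implicit differentiation yields the well-known identity $\hessmirrorinv{y} = \hessmirror{x(y)}^{-1}$, and one further differentiation (treating $\hessmirror{x(y)}^{-1}$ as a matrix-valued function of $y$ and using the identity $\frac{d}{dt} A(t)^{-1} = -A(t)^{-1} A'(t) A(t)^{-1}$) yields the trilinear identity
\begin{equation*}
    \nabla^3\mirrorfuncdual{y}[u,u,u] = -\nabla^3\mirrorfunc{x(y)}[v,v,v], \qquad v := \hessmirror{x(y)}^{-1} u.
\end{equation*}
Applying \cref{def:self-concord} for $\mirrorfunc{}$ at $x(y)$ with argument $v$ bounds the right-hand side by $2\alpha\|v\|_{\hessmirror{x(y)}}^3$, and the key identity $\|v\|_{\hessmirror{x(y)}}^2 = u^\top \hessmirror{x(y)}^{-1} u = \|u\|_{\hessmirrorinv{y}}^2$ converts this to $2\alpha\|u\|_{\hessmirrorinv{y}}^3$, which is self-concordance of $\mirrorfuncdual{}$ with the same parameter $\alpha$.

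The only real obstacle is the trilinear identity for $\nabla^3\mirrorfuncdual{}$ in part 3: the bookkeeping of the three terms arising from differentiating $\hessmirror{x(y)}^{-1}$ with respect to $y$ (each of which brings in a factor of $\hessmirror{x(y)}^{-1}$ from the chain rule, turning $u$ into $v$) must be done carefully to confirm that all three contributions are equal and combine into a single $\nabla^3\mirrorfunc{x(y)}[v,v,v]$ term. The rest of the argument is routine calculus once the polarised form of self-concordance and the $r'(t)$-ODE from part 1 are in hand.
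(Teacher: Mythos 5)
The paper does not actually prove this proposition --- it is quoted from \citet[\S 5.1.4]{nesterov2018lectures} --- so there is no internal proof to compare against; your reconstruction follows the standard textbook route. Parts 2 and 3 are correct as written: the elementary integral $\int_0^1 (1-\alpha\tau s)^{-2}\,d\tau = (1-\alpha s)^{-1}$ does close part 2, and in part 3 the identity $\nabla^3\phi^*(y)[u,u,u] = -\nabla^3\phi(x(y))[v,v,v]$ with $v = \nabla^2\phi(x(y))^{-1}u$ checks out (the three chain-rule contributions do combine as you suspect), after which $\|v\|_{\nabla^2\phi(x(y))} = \|u\|_{\nabla^2\phi^*(y)}$ finishes the argument. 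Likewise the ODE $|r'(t)|\le \alpha\, r(t)^2$, its integrated form $r(t)\le r_0/(1-\alpha t r_0)$, and the integration of $|(\log\psi_h)'(t)|\le 2\alpha\, r(t)$ deliver exactly the stated Loewner sandwich in part 1.

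The one genuine gap is your claim that the polarised bound $|\nabla^3\phi(z)[u,v,v]| \le 2\alpha\,\|u\|_{\nabla^2\phi(z)}\|v\|_{\nabla^2\phi(z)}^2$ follows by ``extracting coefficients'' of the cubic $t\mapsto \nabla^3\phi(z)[u+tv,u+tv,u+tv]$. Naive polarisation does not give the sharp constant: for example the finite-difference identity $p(1)+p(-1)-2p(0) = 6\,\nabla^3\phi(z)[u,v,v]$ combined with the diagonal bound only yields $|\nabla^3\phi(z)[u,v,v]| \le 3\cdot 2\alpha$ when $\|u\|_{\nabla^2\phi(z)}=\|v\|_{\nabla^2\phi(z)}=1$, and any constant $c>1$ here would degrade part 1 to a sandwich of the form $(1-c\alpha r)^{\pm 2}$, which is not what the proposition asserts (and the downstream lemmas in the paper rely on the exact constants). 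The sharp statement --- that for a symmetric trilinear form the supremum over the diagonal of the unit ball of a Euclidean-type norm equals the supremum over the full cube --- is a genuine lemma (\citet[Lem.~5.1.3]{nesterov2018lectures}, going back to the Nesterov--Nemirovskii appendix) whose proof is a variational argument, not coefficient extraction. You should invoke that lemma rather than re-derive it; with it in hand the rest of your argument is correct.
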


\begin{proof}[\hypertarget{prf:part-1-lem:one-step-overlap}{Part 1.}]
By definition of \(\calP_{x}\), \(\calP_{x} = (\dualtoprim{})_{\#}\widetilde{\calP}_{x}\), where \(\widetilde{\calP}_{x} = \calN(\primtodual{x} - h \cdot \gradpotential{x}, 2h \cdot \hessmirror{x})\).
Starting with Pinsker's inequality, we have
\begin{align*}
    \TVdist(\calP_{x}, \calP_{y}) &\leq \sqrt{\frac{1}{2}\KLdist(\calP_{x}, \calP_{y})} \\
    &= \sqrt{\frac{1}{2}\KLdist((\dualtoprim{})_{\#}\widetilde{\calP}_{x}, (\dualtoprim{})_{\#}\widetilde{\calP}_{y})} \\
    &= \sqrt{\frac{1}{2}\KLdist(\widetilde{\calP}_{x}, \widetilde{\calP}_{y})}~.
\end{align*}
Since \(\mirrorfunc{}\) is of Legendre type, \(\dualtoprim{}\) is a bijective and differentiable map.
This allows us to instantiate \citet[Lemma 15]{vempala2019rapid} to get the final equality.

To further work with \(\KLdist(\widetilde{\calP}_{x}, \widetilde{\calP}_{y})\), we use the identity
\begin{equation*}
    \KLdist(\calN(m_{1}, \Sigma_{1}), \calN(m_{2}, \Sigma_{2})) = \frac{1}{2}\left(\trace(\Sigma_{2}^{-1}\Sigma_{1} - I) + \log\frac{\det \Sigma_{2}}{\det \Sigma_{1}} + \|m_{2} - m_{1}\|^{2}_{\Sigma_{2}^{-1}}\right)~.
\end{equation*}
In our case,
\begin{equation*}
    \Sigma_{1} = 2h\hessmirror{x} ~,~ \Sigma_{2} = 2h\hessmirror{y} ~,~ m_{1} = \primtodual{x} - h\gradpotential{x} ~,~ m_{2} = \primtodual{y} - h\gradpotential{y}~.
\end{equation*}
This yields a closed form expression \(\KLdist(\widetilde{\calP}_{x}, \widetilde{\calP}_{y}) = \frac{1}{2}\left(T_{1}^{P} + T_{2}^{P}\right)\)
where
\begin{align*}
    T_{1}^{P} &= \trace(\hessmirror{y}^{-1}\hessmirror{x} - I) - \log\det \hessmirror{y}^{-1}\hessmirror{x}~, \\
    T_{2}^{P} &= \frac{1}{2h}\|(\primtodual{y} - \primtodual{x}) - h(\gradpotential{y} - \gradpotential{x})\|^{2}_{\hessmirror{y}^{-1}}~.
\end{align*}

The rest of the proof is dedicated to showing that
\begin{equation*}
    T_{1}^{P} \leq \frac{h \cdot d}{20} ~;\qquad T_{2}^{P} \leq \frac{1}{80} + \frac{9}{2} \cdot h \cdot \beta~,
\end{equation*}
under assumptions made in the statement of the lemma.

For convenience, we denote \(\|x - y\|_{\hessmirror{y}}\) by \(r_{y}\).
From the statement of the first part of the lemma, \(r_{y} \leq \frac{\sqrt{h}}{10 \cdot \bar{\alpha}} \leq \frac{\sqrt{h}}{10 \cdot \alpha}\), since \(\bar{\alpha} = \max\{1, \alpha\}\).

\paragraph{Bounding \(T_{1}^{P}\)}
Owing to the cyclic property of trace and the product property of determinants, we have
\begin{align*}
    \trace(\hessmirror{y}^{-1}\hessmirror{x}) &= \trace(\hessmirror{y}^{-\nicefrac{1}{2}}\hessmirror{x}\hessmirror{y}^{-\nicefrac{1}{2}}) \\
    \det\hessmirror{y}^{-1}\hessmirror{x} &= \det \hessmirror{y}^{-\nicefrac{1}{2}}\hessmirror{x}\hessmirror{y}^{-\nicefrac{1}{2}}.
\end{align*}
Let \(M = \nabla^{2}\phi(y)^{-\nicefrac{1}{2}}\nabla^{2}\phi(x)\nabla^{2}\phi(y)^{-\nicefrac{1}{2}}\) for convenience, and let \(\{\lambda_{i}(M)\}_{i=1}^{d}\) be its eigenvalues.
\begin{equation*}
    T_{1}^{P} = \trace(M - I) - \log \det M = \sum_{i = 1}^{d} \{\lambda_{i}(M) - 1 - \log \lambda_{i}(M)\}~.
\end{equation*}
\cref{lem:bound_x-1-logx} states that \(\lambda_{i}(M) - 1 - \log \lambda_{i}(M) \leq \frac{(\lambda_{i}(M) - 1)^{2}}{\lambda_{i}(M)}\).
Therefore, we have the upper bound
\begin{equation*}
    T_{1}^{P} \leq \sum_{i = 1}^{d}\frac{(\lambda_{i}(M) - 1)^{2}}{\lambda_{i}(M)}~.
\end{equation*}
From \cref{prop:self-concordance-props}(1), each \(\lambda_{i}(M)\) is bounded as
\begin{equation*}
    \lambda_{i}(M) \in \left[\frac{1}{(1 - \alpha \cdot r_{y})^{2}}~,~ (1 - \alpha \cdot r_{y})^{2}\right]~.
\end{equation*}
The function \(t \mapsto \frac{(t- 1)^{2}}{t}\) is strictly convex for \(t > 0\).
This is because its second derivative is \(t \mapsto \frac{1}{t^{3}}\).
This implies that when \(t\) is restricted to a closed interval \([a, b]\), the maximum of \(\frac{(t - 1)^{2}}{t}\) is attained at the end points.
\begin{equation*}
    \max_{t \in [a, b]} \frac{(t - 1)^{2}}{t} = \max\left\{\frac{(a - 1)^{2}}{a}, \frac{(b - 1)^{2}}{b}\right\}~.
\end{equation*}
When \(b = \frac{1}{a}\), the maximum on the right hand side is \(\frac{(a - 1)^{2}}{a}\).
We use this fact with bounds on \(t = \lambda_{i}(M)\), and thus have for all \(i \in [d]\) that
\begin{equation*}
    \frac{(\lambda_{i}(M) - 1)^{2}}{\lambda_{i}(M)} \leq \frac{((1 - \alpha \cdot r_{y})^{2} - 1)^{2}}{(1 - \alpha \cdot r_{y})^{2}} = \frac{\alpha^{2} \cdot r_{y}^{2} \cdot (2 - \alpha \cdot r_{y})^{2}}{(1 - \alpha \cdot r_{y})^{2}} \leq \frac{4 \cdot \alpha^{2} \cdot r_{y}^{2}}{(1 - \alpha \cdot r_{y})^{2}}~.
\end{equation*}
and consequently,
\begin{equation*}
    T_{1}^{P} \leq d \cdot \frac{4 \cdot \alpha^{2} \cdot r_{y}^{2}}{(1 - \alpha \cdot r_{y})^{2}}~.
\end{equation*}
The function \(t \mapsto \frac{4t^{2}}{(1 - t)^{2}}\) is an increasing function since it is a product of two increasing functions \(t \mapsto 4t^{2}\) and \(t \mapsto \frac{1}{(1 - t)^{2}}\).
As stated earlier, \(r_{y} \leq \frac{\sqrt{h}}{10 \cdot \alpha}\), which implies
\begin{equation*}
    T_{1}^{P} \leq d \cdot \frac{4 \cdot \frac{h}{100}}{(1 - \frac{\sqrt{h}}{10})^{2}} \leq \frac{h \cdot d}{20}~.
\end{equation*}
The final inequality uses the fact that \(h \leq 1\) which implies \(\left(1 - \frac{\sqrt{h}}{10}\right)^{-2} \leq \frac{100}{81} \leq \frac{5}{4}\).

\paragraph{Bounding \(T_{2}^{P}\)}
Using the fact that \(\|a + b\|_{M}^{2} \leq 2\|a\|^{2}_{M} + 2\|b\|_{M}^{2}\) successively,
\begin{align*}
    T_{2}^{P} &= \frac{1}{2h}\|(\primtodual{y} - \primtodual{x}) - h(\gradpotential{y} - \gradpotential{x})\|^{2}_{\hessmirror{y}^{-1}} \\
    &\leq \frac{1}{2h}\left(2 \cdot \|\primtodual{y} - \primtodual{x}\|_{\hessmirror{y}^{-1}}^{2} + 2h^{2} \cdot \|\gradpotential{y} - \gradpotential{x}\|_{\hessmirror{y}^{-1}}^{2}\right) \\
    &\leq \frac{1}{h} \cdot \|\primtodual{y} - \primtodual{x}\|_{\hessmirror{y}^{-1}}^{2} + 2h \cdot \|\gradpotential{y}\|_{\hessmirror{y}^{-1}}^{2} + 2h \cdot \|\gradpotential{x}\|_{\hessmirror{y}^{-1}}^{2}~.
\end{align*}
Note that
\begin{equation*}
    \primtodual{x} - \primtodual{y} = \int_{0}^{1}\frac{d}{d\tau}\primtodual{y + \tau(x - y)} d\tau = \int_{0}^{1}\hessmirror{y + \tau(x - y)}(x - y) d\tau = G_{y} (x - y).
\end{equation*}
From \cref{prop:self-concordance-props}(2), and using the fact that \(r_{y} < \nicefrac{1}{\alpha}\), we have \(G_{y} \preceq \frac{1}{(1 - \alpha \cdot r_{y})} \cdot \hessmirror{y}\).
By virtue of \(\mirrorfunc{}\) being of Legendre type, \(G_{y} \succ 0\).
As a result,
\begin{equation}
\label{eq:interpolate-hessian-loewner-equiv}
    G_{y}^{-1} \succeq (1 - \alpha \cdot r_{y}) \cdot \hessmirror{y}^{-1} \Leftrightarrow G_{y} \preceq \frac{1}{1 - \alpha \cdot r_{y}} \cdot \hessmirror{y}~,
\end{equation}
where the equivalence is due to the Loewner ordering.
Let \(v = G_{y}^{\nicefrac{1}{2}}(x - y)\).
\begin{align*}
    \|\nabla \phi(x) - \nabla \phi(y)\|_{\nabla^{2}\phi(y)^{-1}}^{2} &= \|G_{y}(x - y)\|^{2}_{\hessmirror{y}^{-1}} \\
    &= \langle v, G_{y}^{\nicefrac{1}{2}}\nabla^{2}\phi(y)^{-1}G_{y}^{\nicefrac{1}{2}}v\rangle \\
    &\leq \frac{\langle v, v\rangle}{1 - \alpha \cdot r_{y}} \\
    &= \frac{\langle x - y, G_{y} (x - y)\rangle}{1 - \alpha \cdot r_{y}} \\
    &\leq \frac{r_{y}^{2}}{(1 - \alpha \cdot r_{y})^{2}}~.
\end{align*}
The inequalities above are due to \cref{eq:interpolate-hessian-loewner-equiv}.
Next, from \cref{prop:self-concordance-props}(1), we have \(\hessmirror{x} \succeq \frac{1}{(1 - \alpha \cdot r_{y})^{2}} \cdot \hessmirror{y}\), and using the fact that \(\mirrorfunc{}\) is strictly convex yields
\begin{equation*}
    \|\gradpotential{x}\|_{\hessmirror{y}^{-1}}^{2} \leq \frac{1}{(1 - \alpha \cdot r_{y})^{2}} \cdot \|\gradpotential{x}\|_{\hessmirror{x}^{-1}}^{2}~.
\end{equation*}
This gives a bound for \(T_{2}^{P}\) in terms of \(r_{y}\) as
\begin{align*}
    T_{2}^{P} &\leq \frac{r_{y}^{2}}{h \cdot (1 - \alpha \cdot r_{y})^{2}} + 2h \cdot \|\gradpotential{y}\|_{\hessmirror{y}^{-1}}^{2} + \frac{2h \cdot \|\gradpotential{x}\|_{\hessmirror{x}^{-1}}^{2}}{(1 - \alpha \cdot r_{y})^{2}} \\
    &\leq \frac{r_{y}^{2}}{h \cdot (1 - \alpha \cdot r_{y})^{2}} + 2h \cdot \beta^{2} + \frac{2h \cdot \beta^{2}}{(1 - \alpha \cdot r_{y})^{2}}~,
\end{align*}
where in the final inequality, we have used the fact that \(\potential{}\) is a \(\beta\)-relatively Lipschitz with respect to \(\mirrorfunc{}\).
Finally, since \(\alpha \cdot r_{y} \leq \frac{\sqrt{h}}{10} \leq \frac{1}{10}\), \((1 - \alpha \cdot r_{y})^{-2} \leq \frac{5}{4}\) as noted earlier.
This gives the upper bound
\begin{equation*}
    T_{2}^{P} \leq \frac{5}{4} \cdot \frac{r_{y}^{2}}{h} + \frac{9}{2} \cdot h \cdot \beta^{2}~.
\end{equation*}
From the statement of the lemma, \(r_{y} \leq \frac{\sqrt{h}}{10 \cdot \max\{1, \alpha\}}\) which also implies that \(r_{y} \leq \frac{\sqrt{h}}{10}\).
Finally, substituting this bound over \(r_{y}\), we get
\begin{equation*}
    T_{2}^{P} \leq \frac{5}{4} \cdot \frac{1}{h} \cdot \frac{h}{100} + \frac{9}{2} \cdot h \cdot \beta \leq \frac{1}{80} + \frac{9}{2} \cdot h \cdot \beta^{2}~.
\end{equation*}

Using the bounds derived for \(T_{1}^{P}\) and \(T_{2}^{P}\), we can complete the proof.
\begin{align*}
    \KLdist(\widetilde{\calP}_{x}, \widetilde{\calP}_{y}) &\leq \frac{1}{2}\left(\frac{h \cdot d}{20} + \frac{1}{80} + \frac{9}{2} \cdot h \cdot \beta^{2}\right)~.
\end{align*}
Finally, applying Pinsker's inequality as stated in the beginning of the proof yields
\begin{equation*}
    \TVdist(\calP_{x}, \calP_{y}) \leq \frac{1}{2}\sqrt{\frac{h \cdot d}{20} + \frac{1}{80} + \frac{9}{2} \cdot h \cdot \beta^{2}}~.
\end{equation*}
\end{proof}

\begin{proof}[\hypertarget{prf:part-2-lem:one-step-overlap}{Part 2.}]

For \(\varepsilon \in (0, 1)\), define the following quantities.
\begin{equation}
\label{eq:eps-quantities}
    \sfN_{\varepsilon} = 1 + 2\sqrt{\log\left(\frac{8}{\varepsilon}\right)} + 2\log\left(\frac{8}{\varepsilon}\right); \qquad \sfI_{\varepsilon} = \sqrt{2\log\left(\frac{8}{\varepsilon}\right)}~.
\end{equation}

The explicit forms of the \(\calC_{1}(\varepsilon), \ldots, \calC_{4}(\varepsilon)\) in \cref{eq:h-max-precise} are given below.
\begin{equation}
\label{eq:h-max-constants-precise}
    \left.
    \begin{aligned}
        \calC_{1}(\varepsilon) = \left(\left(\frac{\varepsilon}{24}\right)^{\nicefrac{2}{3}} \frac{1}{6 \cdot \sfN_{\varepsilon}}\right)^{3} &~,~ \calC_{2}(\varepsilon) = \frac{\varepsilon}{16} \cdot \frac{1}{6 \cdot \sfN_{\varepsilon}} \\
        \calC_{3}(\varepsilon) = \left(\left(\frac{\varepsilon}{72}\right)^{2} \cdot \frac{1}{6\sqrt{2} \cdot \sfI_{\varepsilon}}\right)^{\nicefrac{2}{3}} &~,~ \calC_{4}(\varepsilon) = \frac{\varepsilon}{16} \cdot \frac{1}{6\sqrt{2} \cdot \sfI_{\varepsilon}}
    \end{aligned}
    \quad \right\}
\end{equation}
Recall that the transition kernel has an atom i.e., \(\calT_{x}(\{x\}) \neq 0\).
The explicit form of this is given by
\begin{equation*}
    \calT_{x}(\{x\}) = 1 - \int_{z \in \primalspace} \alpha_{\bfP}(z; x) p_{x}(z) dz = 1 - \int_{z \in \primalspace} \min\left\{1, \frac{\targetdens(z) p_{z}(x)}{\targetdens(x) p_{x}(z)}\right\} p_{x}(z) dz~.
\end{equation*}

Using this, we have the expression for \(\TVdist(\calT_{x}, \calP_{x})\) as (c.f. \citet[Eq. 46]{dwivedi2018log})
\begin{equation*}
    \TVdist(\calT_{x}, \calP_{x}) = 1 - \bbE_{z \sim \calP_{x}}\left[\min\left\{1, \frac{\targetdens(z) p_{z}(x)}{\targetdens(x) p_{x}(z)}\right\} \right] ~.
\end{equation*}

Our goal is to bound this quantity from above, which equivalently implies bounding the expectation on the right hand side from below.
This can be achieved using Markov's inequality; for any \(t > 0\),
\begin{equation*}
    \bbE_{z \sim \calP_{x}}\left[\min\left\{1, \frac{\targetdens(z) p_{z}(x)}{\targetdens(x) p_{x}(z)}\right\}\right] \geq t ~\bbP_{z \sim \calP_{x}}\left[\min\left\{1, \frac{\targetdens(z) p_{z}(x)}{\targetdens(x) p_{x}(z)}\right\} \geq t\right]~.
\end{equation*}
If \(t \leq 1\), then
\begin{equation*}
    \bbP_{z \sim \calP_{x}}\left[\min\left\{1, \frac{\targetdens(z) p_{z}(x)}{\targetdens(x) p_{x}(z)}\right\} \geq t\right] = \bbP_{z \sim \calP_{x}}\left[\frac{\targetdens(z) p_{z}(x)}{\targetdens(x) p_{x}(z)} \geq t\right].
\end{equation*}

Due to \cref{eq:proposal-density}, we can write the explicit expression for this ratio
\begin{multline*}
    \frac{\targetdens(z) \cdot p_{z}(x)}{\targetdens(x) \cdot p_{x}(z)} = \exp\left(\potential{x} - \potential{z} + \frac{3}{2}\left\{\log \det \hessmirror{x} - \log \det \hessmirror{z}\right\}\right. \\
    + \frac{1}{4h}\|\primtodual{z} - \primtodual{x} + h \cdot \gradpotential{x}\|_{\hessmirror{x}^{-1}}^{2}\\
    \left.-\frac{1}{4h} \|\primtodual{x} - \primtodual{z} + h \cdot \gradpotential{z}\|_{\hessmirror{z}^{-1}}^{2}\right).
\end{multline*}
Concisely, \(\frac{\pi(z) \cdot p_{z}(x)}{\pi(x) \cdot p_{x}(z)} = \exp(\calA(x, z))\), where \(\calA(x, z) = T_{1}^{A} + T_{2}^{A} + T_{3}^{A} + T_{4}^{A} + T_{5}^{A}\) and each of these terms are
\begin{subequations}
\label{eq:TiA-group}
\begin{align}
    T_{1}^{A} &:= \frac{\|\primtodual{z} - \primtodual{x}\|_{\hessmirror{x}^{-1}}^{2}}{4h} - \frac{\|\primtodual{x} - \primtodual{z}\|_{\hessmirror{z}^{-1}}^{2}}{4h} \\
    T_{2}^{A} &:= \frac{3}{2}\log \det \hessmirror{z}^{-1}\hessmirror{x} \\
    T_{3}^{A} &:= \frac{1}{2}\left(f(x) - f(z) - \langle \gradpotential{z}, \primtodual{x} - \primtodual{z}\rangle_{\hessmirror{z}^{-1}}\right) \\
    T_{4}^{A} &:= \frac{1}{2}\left(f(x) - f(z) - \langle \gradpotential{x}, \primtodual{x} - \primtodual{z}\rangle_{\hessmirror{x}^{-1}}\right) \\
    T_{5}^{A} &:= \frac{h^{2}}{4h}\left(\|\gradpotential{x}\|_{\hessmirror{x}^{-1}}^{2} - \|\gradpotential{z}\|_{\hessmirror{z}^{-1}}^{2}\right).
\end{align}
\end{subequations}
The self-concordant nature of \(\mirrorfunc{}\) enables using \cref{prop:self-concordance-props} to control some of these quantities above, but only when \(\primtodual{x}\) and \(\primtodual{z}\) are close in the local norm.
Therefore, we condition on an event \(\frakE\) which implies that \(\|\primtodual{x} - \primtodual{z}\|_{\hessmirror{x}^{-1}} \leq \frac{3}{10 \cdot \alpha}\).
To be specific,
\begin{align*}
    \bbP_{z \sim \calP_{x}}\left[\frac{\targetdens(z) p_{z}(x)}{\targetdens(x) p_{x}(z)} \geq t\right] &\geq \bbP_{z \sim \calP_{x}}\left[\frac{\targetdens(z) p_{z}(x)}{\targetdens(x) p_{x}(z)} \geq t \wedge \frakE\right] \\
    &= \bbP_{z \sim \calP_{x}}\left[\left.\frac{\targetdens(z) p_{z}(x)}{\targetdens(x) p_{x}(z)} \geq t ~\right|~ \frakE\right] \cdot \bbP_{z \sim \calP_{x}}(\frakE)~.
\end{align*}

\setcounter{secnumdepth}{4}
\crefalias{paragraph}{part}

The remainder of the proof consists of three parts.
\begin{itemize}
\item In \cref{prf:lem:dist-proposal-transition:part1}, we identify an event \(\frakE\) satisfying our requirements, and show that it occurs with high probability.
We show that for the choice of step size \(h\) assumed, \(\frakE\) also implies that \(\|\primtodual{x} - \primtodual{z}\|_{\hessmirror{x}^{-1}} \leq \frac{3}{10 \cdot \alpha}\).
\item Next, in \cref{prf:lem:dist-proposal-transition:part2}, we condition on event \(\frakE\) and give lower bounds for each of the \(T_{i}^{A}\) quantities by leveraging the implication in the previous part.
\item Finally, in \cref{prf:lem:dist-proposal-transition:part3}, we show that for the choice of \(h\) in the statement of the lemma there exists \(t \leq 1\) such that the conditional probability is \(1\), thus reducing the lower bound to \(t \cdot \bbE_{z \sim \calP_{x}}[\frakE]\).
\end{itemize}

\paragraph{Identifying an event \(\frakE\)}\label{prf:lem:dist-proposal-transition:part1}
Since \(\calP_{x} = (\dualtoprim{})_{\#}\widetilde{\calP}_{x}\),
\begin{equation*}
    z \sim \calP_{x} ~\Leftrightarrow~ \primtodual{z} \sim \widetilde{\calP}_{x} = \calN(\primtodual{x} - h \cdot \gradpotential{x}, 2h\hessmirror{x})~.
\end{equation*}

We use this to show that \(\|\primtodual{x} - \primtodual{z}\|_{\hessmirror{x}^{-1}}\) concentrates well, since given \(\xi \sim \calN(0, I)\), \(\primtodual{z}\) is distributionally equivalent to \(\primtodual{x} - h \cdot \gradpotential{x} + \sqrt{2h} \hessmirror{x}^{\nicefrac{1}{2}}\xi\) when \(z \sim \calP_{x}\).

We begin by expanding the squared norm as
\begin{multline*}
    \|\primtodual{z} - \primtodual{x}\|^{2}_{\hessmirror{x}^{-1}} \\
    = h^{2} \cdot \langle \gradpotential{x}, \hessmirror{x}^{-1}\gradpotential{x}\rangle + 2h \cdot \langle \xi, \xi \rangle + 2\sqrt{2}h\sqrt{h} \cdot \langle \xi, \hessmirror{x}^{-\nicefrac{1}{2}}\gradpotential{x}\rangle.
\end{multline*}

Consider the following probability
\begin{align*}
    &\bbP_{z \sim \calP_{x}}(\|\primtodual{x} - \primtodual{z}\|_{\hessmirror{x}^{-1}}^{2} > h^{2} \cdot \|\gradpotential{x}\|^{2}_{\hessmirror{x}^{-1}} + 2h \cdot d \cdot \sfN_{\varepsilon} + 2\sqrt{2}h\sqrt{h} \cdot \sfI_{\varepsilon})\\
    &\quad= \bbP_{\xi \sim \calN(0, I_{d})}(2h \cdot \|\xi\|^{2} + 2\sqrt{2}h\sqrt{h} \cdot \langle \xi, \hessmirror{x}^{-\nicefrac{1}{2}}\gradpotential{x}\rangle > 2h \cdot d\cdot \sfN_{\varepsilon} + 2\sqrt{2}h\sqrt{h} \cdot \beta \cdot \sfI_{\varepsilon}) \\
    &\quad\leq \bbP_{\xi \sim \calN(0, I_{d})}(\|\xi\|^{2} > d \cdot \sfN_{\varepsilon}) + \bbP_{\xi \sim \calN(0, I_{d})}(\langle \xi, \hessmirror{x}^{-\nicefrac{1}{2}}\gradpotential{x}\rangle > \beta\cdot \sfI_{\varepsilon})
\end{align*}
where the final inequality uses the fact that for random variables \(\sfa, \sfb\),
\begin{gather*}
    (\sfa \leq a) \wedge (\sfb \leq b) \Rightarrow \sfa + \sfb \leq a + b \\
    \bbP(\sfa + \sfb > a + b) \leq \bbP(\sfa > a \vee \sfb > b) \leq \bbP(\sfa > a) + \bbP(\sfb > b)~.
\end{gather*}
Through \(\chi^{2}\) concentration guarantees \citep[Lem. 1]{laurent2000adaptive}, we get
\begin{equation*}
    \bbP(\|\xi\|^{2} > d \cdot \sfN_{\varepsilon}) \leq \frac{\varepsilon}{8}~,
\end{equation*}
where \(\sfN_{\varepsilon}\) was previously defined in \cref{eq:eps-quantities}.
Also, since \(\xi\) is Gaussian, \(\langle \xi, \hessmirror{x}^{-\nicefrac{1}{2}}\gradpotential{x}\rangle\) is a mean zero, sub-Gaussian random variable with variance parameter \(\|\gradpotential{x}\|_{\hessmirror{x}^{-1}}^{2}\).
This implies that for any \(x \in \primalspace\),
\begin{equation*}
    \bbP_{\xi \sim \calN(0, I_{d})}(\langle \xi, \hessmirror{x}^{-\nicefrac{1}{2}}\gradpotential{x}\rangle > t) \leq \exp\left(-\frac{t^{2}}{2\|\gradpotential{x}\|_{\hessmirror{x}^{-1}}^{2}}\right) \leq \exp\left(-\frac{t^{2}}{2\beta^{2}}\right)
\end{equation*}
where the final inequality is due to the fact that \(\potential{}\) is \(\beta\)-relatively Lipschitz with respect to \(\phi\).
Therefore,
\begin{equation*}
    \bbP_{\xi \sim \calN(0, I_{d})}(\langle \xi, \hessmirror{x}^{-\nicefrac{1}{2}}\gradpotential{x}\rangle > \beta \cdot \sfI_{\varepsilon}) \leq \frac{\varepsilon}{8}~,
\end{equation*}
where \(\sfI_{\varepsilon}\) was previously defined in \cref{eq:eps-quantities}.
Therefore, with probability at least \(1 - \frac{\varepsilon}{4}\),
\begin{align*}
    \|\primtodual{x} - \primtodual{z}\|^{2}_{\hessmirror{x}^{-1}} &\leq h^{2} \cdot \|\gradpotential{x}\|^{2}_{\hessmirror{x}^{-1}} + 2h \cdot d \cdot \sfN_{\varepsilon} + 2\sqrt{2}h\sqrt{h} \cdot \beta \cdot \sfI_{\varepsilon} \\
    &\leq h^{2} \cdot \beta^{2} + 2h \cdot d \cdot \sfN_{\varepsilon} + 2\sqrt{2}h\sqrt{h} \cdot \beta \cdot \sfI_{\varepsilon}~.
\end{align*}
We hence pick the event \(\frakE\) to be
\begin{equation*}
    \frakE \defeq \left\{ \|\primtodual{x} - \primtodual{z}\|_{\hessmirror{x}^{-1}} \leq \sqrt{h^{2} \cdot \beta^{2} + 2h \cdot d \cdot \sfN_{\varepsilon} + 2\sqrt{2}h\sqrt{h} \cdot \beta \cdot \sfI_{\varepsilon}} \right\}.
\end{equation*}

Using the simple algebraic fact that \(\sqrt{a + b + c} \leq \sqrt{a} + \sqrt{b} + \sqrt{c}\), and from the choice of \(h\) in the lemma it can be verified that
\begin{equation*}
    h \leq \frac{1}{10 \cdot \beta \cdot \alpha}; \quad h \leq \frac{1}{200 \cdot d \cdot \sfN_{\varepsilon} \cdot \alpha^{2}}; \quad h \leq \frac{1}{50 \cdot \sfI_{\varepsilon}^{\nicefrac{2}{3}} \cdot \alpha^{\nicefrac{4}{3}} \cdot \beta^{\nicefrac{2}{3}}}~,
\end{equation*}
and we are assured that \(\|\primtodual{x} - \primtodual{z}\|_{\hessmirror{x}^{-1}} \leq \frac{3}{10 \cdot \alpha}\) as required.

\paragraph{Lower bound on the \(T_{i}^{A}\) quantities}\label{prf:lem:dist-proposal-transition:part2}

In this part, we condition on the event \(\frakE\) which implies that \(\|\primtodual{z} - \primtodual{x}\|_{\hessmirror{x}^{-1}} \leq \frac{3}{10 \cdot \alpha}\) to obtain lower bounds on \(\calA(x, z)\) defined previously.
We do so by giving lower bounds for each of the \(T_{i}^{A}\) quantities individually defined in \cref{eq:TiA-group}.

In this part of the proof, we will make use of the identity
\begin{align}
\label{eq:legendre-hessian-identity}
    \hessmirrorinv{\primtodual{x}} &= \hessmirror{x}^{-1} &\forall ~x \in \primalspace~.
\end{align}
This can be derived by differentiating the identity \(\dualtoprim{\primtodual{x}} = x\) by the invertibility of \(\primtodual{x}\).

\begin{description}
\item [\(T_{1}^{A}\):]
We use the shorthand notation \(v_{x, z} = \primtodual{x} - \primtodual{z}\) for convenience.
\begin{align*}
    T_{1}^{A} &= \frac{1}{4h}\left(\|v_{x, z}\|_{\hessmirror{x}^{-1}}^{2} - \|v_{x, z}\|_{\hessmirror{z}^{-1}}^{2}\right) \\
    &\overset{(i)}= \frac{1}{4h}\left(\|v_{x, z}\|_{\hessmirrorinv{\primtodual{x}}}^{2} - \|v_{x, z}\|_{\hessmirrorinv{\primtodual{z}}}^{2}\right) \\
    &\overset{(ii)}\geq \frac{1}{4h}\left(\|v_{x, z}\|_{\hessmirrorinv{\primtodual{x}}}^{2} - \frac{\|v_{x, z}\|_{\hessmirrorinv{\primtodual{x}}}^{2}}{(1 - \alpha \cdot \|v_{x, z}\|_{\hessmirrorinv{\primtodual{x}}})^{2}}\right) \\
    &= \frac{1}{h\cdot \alpha^{2}} \cdot \frac{(\alpha \cdot \|v_{x, z}\|_{\hessmirrorinv{\primtodual{x}}})^{2}}{4} \cdot \left(1 - \frac{1}{(1 - \alpha \cdot \|v_{x, z}\|_{\hessmirrorinv{\primtodual{x}}})^{2}}\right) \\
    &\overset{(iii)}\geq -\frac{3\alpha}{2h} \cdot \|v_{x, z}\|_{\hessmirrorinv{\primtodual{x}}}^{3}~.
\end{align*}
Step \((i)\) uses \cref{eq:legendre-hessian-identity}.
Step \((ii)\) uses the fact that \(\mirrorfunc{}^{*}\) is self-concordant with parameter \(\alpha\) from \cref{prop:self-concordance-props}(3), and the result of \cref{prop:self-concordance-props}(1) to change the metric from \(\hessmirrorinv{\primtodual{z}}\) to \(\hessmirrorinv{\primtodual{x}}\) since \(\|v_{x, z}\|_{\hessmirrorinv{\primtodual{x}}} < \nicefrac{1}{\alpha}\).
Step \((iii)\) uses an algebraic lemma \cref{lem:t1-lower}.

\item [\(T_{2}^{A}\):]
We can use \cref{eq:legendre-hessian-identity} and the product property of determinants to rewrite \(T_{2}^{A}\) as follows.
\begin{align*}
    T_{2}^{A} &= \frac{3}{2} \log \det \hessmirror{z}^{-1}\hessmirror{x} \\
    &= \frac{3}{2} \log \det \hessmirrorinv{\primtodual{x}}^{-\nicefrac{1}{2}} ~\hessmirrorinv{\primtodual{z}} ~\hessmirrorinv{\primtodual{x}}^{-\nicefrac{1}{2}} \\
    &= \frac{3}{2}\sum_{i = 1}^{d}\log \lambda_{i}(M^{*})
\end{align*}
where \(M^{*} = \hessmirrorinv{\primtodual{x}}^{-\nicefrac{1}{2}} \hessmirrorinv{\primtodual{z}} \hessmirrorinv{\primtodual{x}}^{-\nicefrac{1}{2}}\), and \(\{\lambda_{i}(M^{*})\}\) is the sequence of its eigenvalues.
These eigenvalues lie in an interval due to \(\mirrorfunc{}^{*}\) being self-concordant, and the associated Hessian ordering as given in \cref{prop:self-concordance-props}(1,3) courtesy of \(\|\primtodual{x} - \primtodual{z}\|_{\hessmirrorinv{\primtodual{x}}} < \nicefrac{1}{\alpha}\) as guaranteed by \(\frakE\).
This yields
\begin{equation*}
    \lambda_{i}(M^{*}) \geq (1 - \alpha \cdot \|\primtodual{x} - \primtodual{z}\|_{\hessmirrorinv{\primtodual{x}}})^{2}~,
\end{equation*}
and we use this to obtain the lower bound
\begin{align*}
    T_{2}^{A} &\geq 3d \cdot \log (1 - \alpha \cdot \|\primtodual{x} - \primtodual{z}\|_{\hessmirrorinv{\primtodual{x}}}) \\
    &\geq -\frac{9d}{2} \cdot \alpha \cdot \|\primtodual{x} - \primtodual{z}\|_{\hessmirrorinv{\primtodual{x}}}~.
\end{align*}
The final inequality is due to \cref{lem:t2-lower}.

\item [\(T_{3}^{A} + T_{4}^{A}\): ]

Let \(\psi = f \circ \dualtoprim{} : \bbR^{d} \to \bbR\).
Then,
\begin{align*}
    D_{\psi}(\bar{x}; \bar{z}) &= \psi(\bar{x}) - \psi(\bar{z}) - \langle \nabla \psi(\bar{z}), \bar{x} - \bar{z}\rangle \\
    &= \psi(\bar{x}) - \psi(\bar{z}) - \langle \hessmirrorinv{\bar{z}}\gradpotential{\dualtoprim{\bar{z}}}, \bar{x} - \bar{z}\rangle~.
\end{align*}

The Bregman commutator \(\zeta_{\psi} : \bbR^{d} \times \bbR^{d} \to \bbR\) of \(\psi\) \citep{wibisono2022alternating} is defined as
\begin{equation*}
    \zeta_{\psi}(\bar{x}, \bar{z}) = \frac{1}{2}(D_{\psi}(\bar{x}; \bar{z}) - D_{\psi}(\bar{z}; \bar{x}))~.
\end{equation*}

Substituting \(\bar{x} = \primtodual{x}\) and \(\bar{z} = \primtodual{z}\) gives
\begin{align*}
    D_{\psi}(\primtodual{x}; \primtodual{z}) &= f(x) - f(z) - \langle \gradpotential{z}, \primtodual{x} - \primtodual{z}\rangle_{\hessmirror{z}^{-1}} = T_{3}^{A}\\
    D_{\psi}(\primtodual{z}; \primtodual{x}) &= f(z) - f(x) - \langle \gradpotential{x}, \primtodual{z} - \primtodual{x}\rangle_{\hessmirror{x}^{-1}} = -T_{4}^{A} ~.
\end{align*}

Consequently, the sum of \(T_{3}^{A}\) and \(T_{4}^{A}\) is
\begin{equation*}
    T_{3}^{A} + T_{4}^{A} = \frac{1}{2}\left(D_{\psi}(\primtodual{x}; \primtodual{z}) - D_{\psi}(\primtodual{z}; \primtodual{x})\right) = \zeta_{\psi}(\primtodual{x}, \primtodual{z})~.
\end{equation*}

For convenience, we use the shorthand notation \(\bar{x}\) and \(\bar{z}\) for \(\primtodual{x}\) and \(\primtodual{z}\) respectively.
Define \(p_{t} = \bar{z} + t(\bar{x} - \bar{z})\) for \(t \in [0, 1]\).
We work with the following identity from \citet[Eq. 15]{wibisono2022alternating}
\begin{equation*}
    \zeta_{\psi}(\bar{x}, \bar{z}) = \frac{1}{2}\int_{0}^{1}(1 - 2t) \nabla^{2}\psi(p_{t})[\bar{x} - \bar{z}, \bar{x} - \bar{z}] dt~.
\end{equation*}
The Hessian of \(\psi\) is
\begin{align*}
    \nabla^{2}\psi(p_{t}) &= \nabla^{3}\phi^{*}(p_{t})[\gradpotential{\dualtoprim{p_{t}}}] + \hessmirrorinv{p_{t}} \nabla^{2}\potential{\dualtoprim{p_{t}}} \hessmirrorinv{p_{t}} \\
    &= \nabla^{3}\phi^{*}(p_{t})[\gradpotential{\dualtoprim{p_{t}}}] + \hessmirror{\dualtoprim{p_{t}}}^{-1} \nabla^{2}\potential{\dualtoprim{p_{t}}} \hessmirror{\dualtoprim{p_{t}}}^{-1}
\end{align*}
Consequently, 
\begin{multline*}
    |\zeta_{\psi}(\bar{x}, \bar{z})| \leq \frac{1}{2}\underbrace{\int_{0}^{1}|1 - 2t| \cdot |\nabla^{3}\phi^{*}(p_{t})[\bar{x} - \bar{z}, \bar{x} - \bar{z}, \gradpotential{\dualtoprim{p_{t}}}]| dt}_{I_{1}} \\+ \frac{1}{2}\underbrace{\int_{0}^{1}|1 - 2t| \cdot |\langle \bar{x} - \bar{z}, \hessmirror{\dualtoprim{p_{t}}}^{-1}\nabla^{2}\potential{\dualtoprim{p_{t}}}\hessmirror{\dualtoprim{p_{t}}}^{-1}[\bar{x} - \bar{z}]\rangle| dt}_{I_{2}}~.
\end{multline*}

We can bound \(I_{1}\) and \(I_{2}\) using properties of \(\mirrorfunc{}\), \(\potential{}\) and the fact that \(\bar{x} = \primtodual{x}, \bar{z} = \primtodual{z}\).
From \cref{prop:self-concordance-props}(3), we know that \(\mirrorfunc{}^{*}\) is self-concordant with parameter \(\alpha\).
This implies
\begin{align*}
    I_{1} &\leq \int_{0}^{1} |1 - 2t| \left(2 \alpha \cdot \|\bar{x} - \bar{z}\|_{\hessmirrorinv{p_{t}}}^{2} \cdot \|\gradpotential{\dualtoprim{p_{t}}}\|_{\hessmirrorinv{p_{t}}}\right) dt \\
    &\leq \int_{0}^{1} |1 - 2t| \left(2\alpha \cdot \|\bar{x} - \bar{z}\|_{\hessmirrorinv{p_{t}}}^{2} \cdot \beta\right) dt
\end{align*}
The last inequality uses the fact that \(\potential{}\) is \(\beta\)-relatively Lipschitz with respect to \(\mirrorfunc{}\) and by writing \(\hessmirrorinv{p_{t}} = \hessmirror{\dualtoprim{p_{t}}}^{-1}\) from \cref{eq:legendre-hessian-identity}.

Since \(\potential{}\) is \(\lambda\)-relatively smooth with respect to \(\mirrorfunc{}\), we have for any \(w \in \primalspace\),
\begin{equation*}
    \nabla^{2}\potential{w} \preceq \lambda \cdot \hessmirror{w} \Leftrightarrow \hessmirror{w}^{-\nicefrac{1}{2}}\nabla^{2}f(w)\hessmirror{w}^{-\nicefrac{1}{2}} \preceq \lambda \cdot I~.
\end{equation*}
With this, we can bound \(I_{2}\) as 
\begin{align*}
    I_{2} &\leq \int_{0}^{1}|1 - 2t| \cdot \lambda \cdot |\langle \hessmirror{\dualtoprim{p_{t}}}^{-\nicefrac{1}{2}}(\bar{x} - \bar{z}), \hessmirror{\dualtoprim{p_{t}}}^{-\nicefrac{1}{2}}(\bar{x} - \bar{z})\rangle| dt \\
    &= \int_{0}^{1} |1 - 2t| \cdot \lambda \cdot \|\bar{x} - \bar{z}\|_{\hessmirrorinv{p_{t}}}^{2} dt~.
\end{align*}
Collecting the bounds,
\begin{align*}
    |\zeta_{\psi}(\bar{x}, \bar{z})| &\leq \left(\alpha \cdot \beta + \frac{\lambda}{2}\right) \cdot \int_{0}^{1} |1 - 2t| \cdot \|\bar{x} - \bar{z}\|^{2}_{\hessmirrorinv{p_{t}}} dt \\
    &\overset{(i)}\leq \left(\alpha \cdot \beta + \frac{\lambda}{2}\right) \cdot \int_{0}^{1} |1 - 2t| \cdot \frac{\|\bar{x} - \bar{z}\|^{2}_{\hessmirrorinv{\bar{x}}}}{(1 - \alpha \cdot (1 - t)\|\bar{x} - \bar{z}\|_{\hessmirrorinv{\bar{x}}})^{2}} dt \\
    &\overset{(ii)}\leq \left(\alpha \cdot \beta + \frac{\lambda}{2}\right) \cdot \|\bar{x} - \bar{z}\|^{2}_{\hessmirrorinv{\bar{x}}} \cdot \int_{0}^{1} \frac{100 \cdot |1 - 2t|}{(7 + 3t)^{2}} dt \\
    &\leq \left(\alpha \cdot \beta + \frac{\lambda}{2}\right) \cdot \|\primtodual{x} - \primtodual{z}\|^{2}_{\hessmirrorinv{\primtodual{x}}}~.
\end{align*}
Recall that \(\|\bar{x} - \bar{z}\|_{\hessmirrorinv{\bar{x}}} < \nicefrac{1}{\alpha}\) and \(\|p_{t} - \bar{x}\|_{\hessmirrorinv{\bar{x}}} = (1 - t)\|\bar{z} - \bar{z}\|_{\hessmirrorinv{\bar{x}}} < \nicefrac{1}{\alpha}\) when \(t \in [0, 1]\).
In Step \((i)\), we use this fact to use \cref{prop:self-concordance-props}(1) to get
\begin{equation*}
    \hessmirrorinv{p_{t}} \preceq \frac{1}{(1 - \alpha \cdot (1 - t) \|\bar{x} - \bar{z}\|_{\hessmirrorinv{\bar{x}}})^{2}} \cdot \hessmirrorinv{\bar{x}}
\end{equation*}
Step \((ii)\) uses the fact that \(\|\bar{x} - \bar{z}\|_{\hessmirrorinv{\bar{x}}} \leq \nicefrac{3}{10 \cdot \alpha}\) and that \((1 - x)^{-2}\) is increasing for \(x < 1\).
The final inequality is an upper bound on the integral which can be computed as a closed form expression with value \(\approx 0.73\).
This gives us the final bound
\begin{equation*}
    |\zeta_{\psi}(\primtodual{x}, \primtodual{z})| \leq \gamma \cdot \|\primtodual{x} - \primtodual{z}\|^{2}_{\hessmirror{x}^{-1}}~,
\end{equation*}
where \(\gamma = \frac{\lambda}{2} + \alpha \cdot \beta\), which implies
\begin{equation*}
    T_{3}^{A} + T_{4}^{A} \geq -\gamma \cdot \|\primtodual{x} - \primtodual{z}\|_{\hessmirror{x}^{-1}}^{2}~.
\end{equation*}

\item [\(T_{5}^{A}\): ] Since \(f\) is \(\beta\)-relatively Lipschitz with respect to \(\phi\), this can be simply bounded from below as
\begin{equation*}
    T_{5}^{A} \geq -\frac{h \cdot \beta^{2}}{4}.
\end{equation*}
\end{description}

For convenience, we will use the shorthand notation \(\ell_{x,z}\) to denote \(\|\primtodual{x} - \primtodual{z}\|_{\hessmirror{x}^{-1}}\).
Collating these lower bounds, we get the net lower bound on \(\calA(x, z)\) as
\begin{equation*}
    \calA(x, z) \geq -\frac{3\bar{\alpha}}{2h}\cdot \ell_{x,z}^{3} - \frac{9d \cdot \bar{\alpha}}{2}\cdot \ell_{x, z} - \gamma \cdot\ell_{x,z}^{2} - \frac{h \cdot \beta^{2}}{4}~.
\end{equation*}

\paragraph{Choosing \(t\) in the conditional probability quantity}
\label{prf:lem:dist-proposal-transition:part3}
Under the event \(\frakE\), we can substitute the upper bound on \(\ell_{x,z}\) in the lower bound for \(\calA(x, z)\).

This gives us
\begin{align*}
    \calA(x, z) &\geq -\frac{3\alpha}{2h}\cdot(h^{2} \cdot \beta^{2} + 2h \cdot d \cdot \sfN_{\varepsilon} + 2\sqrt{2}h\sqrt{h} \cdot \beta \cdot \sfI_{\varepsilon})^{\nicefrac{3}{2}} \\
    &\qquad - \frac{9d \cdot \alpha}{2} \cdot (h^{2} \cdot \beta^{2} + 2h \cdot d \cdot \sfN_{\varepsilon} + 2\sqrt{2}h\sqrt{h} \cdot \beta \cdot  \sfI_{\varepsilon})^{\nicefrac{1}{2}} \\
    &\qquad - \gamma \cdot (h^{2} \cdot \beta^{2} + 2h \cdot d \cdot \sfN_{\varepsilon} + 2\sqrt{2}h\sqrt{h} \cdot \beta \cdot \sfI_{\varepsilon}) - \frac{h \cdot \beta^{2}}{4} \\
    &= -\frac{3\alpha \cdot \sqrt{h}}{2}(h \cdot \beta^{2} + 2d \cdot \sfN_{\varepsilon} + 2\sqrt{2h} \cdot \beta \cdot \sfI_{\varepsilon})^{\nicefrac{3}{2}} \\
    &\qquad - \frac{9\sqrt{h} \cdot \alpha}{2}(h \cdot d^{2} \cdot \beta^{2} + 2d^{3} \cdot \sfN_{\varepsilon} + 2\sqrt{2h} \cdot \beta \cdot \sfI_{\varepsilon})^{\nicefrac{1}{2}} \\
    &\qquad -\gamma \cdot h \cdot (h \cdot \beta^{2} + 2d \cdot \sfN_{\varepsilon} + 2\sqrt{2}\sqrt{h} \cdot \beta \cdot \sfI_{\varepsilon}) - \frac{h \cdot \beta^{2}}{4}~.
\end{align*}

Define the following constants.
\begin{gather*}
    C_{\varepsilon}^{(1)} = \left(\frac{\varepsilon}{24}\right)^{\nicefrac{2}{3}}; \quad C_{\varepsilon}^{(2)} = \left(\frac{\varepsilon}{72}\right)^{2}; \quad C_{\varepsilon}^{(3)} = \frac{\varepsilon}{16} \\
    C_{\varepsilon}^{(i,1)} = \frac{C_{\varepsilon}^{(i)}}{3}~;\quad C_{\varepsilon}^{(i, 2)} = \frac{C_{\varepsilon}^{(i)}}{6 \cdot \sfN_{\varepsilon}}~; \quad C_{\varepsilon}^{(i, 3)} = \frac{C_{\varepsilon}^{(i)}}{6\sqrt{2} \cdot \sfI_{\varepsilon}} \qquad \forall ~i \in [3]~.
\end{gather*}
When \(\varepsilon < 1\), note that all of these constants are strictly less than \(1\), since \(\sfN_{\varepsilon} \geq 1\) and \(\sfI_{\varepsilon} \geq 1\).
With this we define some limits on \(h\).
\begin{gather*}
    C^{(1)}_{\max}(\varepsilon) = \min\left\{\left(C_{\varepsilon}^{(1, 1)}\right)^{\nicefrac{3}{4}} \cdot \frac{1}{\beta^{\nicefrac{3}{2}}} \cdot \frac{1}{\sqrt{\alpha}}~,~\left(C_{\varepsilon}^{(1, 2)}\right)^{3}\cdot \frac{1}{\alpha^{2}} \cdot \frac{1}{d^{3}}~,~ \left(C_{\varepsilon}^{(1, 3)}\right)^{\nicefrac{6}{5}} \cdot \frac{1}{\alpha^{\nicefrac{4}{5}}} \cdot \frac{1}{\beta^{\nicefrac{6}{5}}}\right\} ~, \\
    C^{(2)}_{\max}(\varepsilon) = \min\left\{\left(C_{\varepsilon}^{(2, 1)}\right)^{\nicefrac{1}{2}} \cdot \frac{1}{d} \cdot \frac{1}{\alpha} \cdot \frac{1}{\beta} ~,~ \left(C_{\varepsilon}^{(2, 2)}\right) \cdot \frac{1}{\alpha^{2}} \cdot \frac{1}{d^{3}} ~,~ \left(C_{\varepsilon}^{(2, 3)}\right)^{\nicefrac{2}{3}} \cdot \frac{1}{\beta^{\nicefrac{2}{3}}} \cdot \frac{1}{\alpha^{\nicefrac{4}{3}}}\right\} ~, \\
    C^{(3)}_{\max}(\varepsilon) = \min\left\{(C_{\varepsilon}^{(3, 1)})^{\nicefrac{1}{2}} \cdot \frac{1}{\beta} \cdot \frac{1}{\sqrt{\gamma}} ~,~ (C_{\varepsilon}^{(3, 2)}) \cdot \frac{1}{d} \cdot \frac{1}{\gamma} ~,~ (C_{\varepsilon}^{(3, 3)}) \cdot \frac{1}{\beta^{\nicefrac{2}{3}}} \cdot \frac{1}{\gamma^{\nicefrac{2}{3}}}\right\} ~, \\
    C_{\max}^{(4)}(\varepsilon) = \frac{\varepsilon}{4 \cdot \beta^{2}}~.
\end{gather*}
When \(h \leq \min\{C_{\max}^{(1)}(\varepsilon), C_{\max}^{(2)}(\varepsilon), C_{\max}^{(3)}(\varepsilon), C_{\max}^{(4)}(\varepsilon)\}\), it can be verified that
\begin{gather*}
    \frac{3\alpha \cdot \sqrt{h}}{2}(h \cdot \beta^{2} + 2d \cdot \sfN_{\varepsilon} + 2\sqrt{2h} \cdot \beta \cdot \sfI_{\varepsilon})^{\nicefrac{3}{2}} \leq \frac{\varepsilon}{16} \\
    \frac{9\sqrt{h} \cdot \alpha}{2}(h \cdot d^{2} \cdot \beta^{2} + 2d^{3} \cdot \sfN_{\varepsilon} + 2\sqrt{2h} \cdot \beta \cdot \sfI_{\varepsilon})^{\nicefrac{1}{2}} \leq \frac{\varepsilon}{16} \\
    \gamma \cdot h \cdot (h \cdot \beta^{2} + 2d \cdot \sfN_{\varepsilon} + 2\sqrt{2}\sqrt{h} \cdot \beta \cdot \sfI_{\varepsilon}) \leq \frac{\varepsilon}{16} \\
    \frac{h \cdot \beta^{2}}{4} \leq \frac{\varepsilon}{16}
\end{gather*}
and consequently,
\begin{equation*}
    \calA(x, z) \geq -\frac{\varepsilon}{4}~.
\end{equation*}
It can be verified that when \(h \leq b_{\max}(\varepsilon; d, \alpha, \beta, \lambda)\) as defined in \cref{eq:h-max-precise}, with \(\calC_{i}(\varepsilon)\) defined in \cref{eq:h-max-constants-precise} that when \(h \leq b_{\max}(\varepsilon)\), \(h \leq \min\left\{C_{\max}^{(1)}(\varepsilon), C_{\max}^{(2)}(\varepsilon), C_{\max}^{(3)}(\varepsilon), C_{\max}^{(4)}(\varepsilon)\right\}\).
\(b_{\max}(\varepsilon)\) is a condensed version where we have used the facts
\begin{itemize}
\item \(\beta\sqrt{\gamma} \geq \beta^{\nicefrac{3}{2}}\sqrt{\alpha}\),
\item \(d \gamma \geq d \alpha \beta\),
\item if \(\gamma \geq \beta^2\), then \(\beta^{\nicefrac{2}{3}} \gamma^{\nicefrac{2}{3}} \ge \beta \sqrt{\gamma} \ge \beta^2\), and when \(\gamma \leq \beta^{2}\), then \(\beta^{\nicefrac{2}{3}}\gamma^{\nicefrac{2}{3}} \leq \beta \sqrt{\gamma} \leq \beta^{2}\), and
\item if \(\alpha \geq \beta\), then \(\alpha^{\nicefrac{4}{3}} \beta^{\nicefrac{2}{3}} \geq \alpha^{\nicefrac{4}{5}} \beta^{\nicefrac{6}{5}} \geq \beta^2\), and when \(\alpha \leq \beta\), then \(\alpha^{\nicefrac{4}{3}} \beta^{\nicefrac{2}{3}} \leq \alpha^{\nicefrac{4}{5}} \beta^{\nicefrac{6}{5}} \leq \beta^2\).
\end{itemize}
to reduce the cases.

This implies that when \(t = e^{-\nicefrac{\varepsilon}{4}}\) which is less than \(1\),
\begin{equation*}
    \bbP_{z \sim \calP_{x}}\left[\left.\frac{\pi(z) p_{z}(x)}{\pi(x) p_{x}(z)} \geq t ~\right|~ \frakE \right] = 1~.
\end{equation*}

To conclude the proof, from part \ref{prf:lem:dist-proposal-transition:part1}, we have that \(\bbP_{z \sim \calP_{z}}(\frakE) \geq 1 - \frac{\varepsilon}{4}\).
From part \ref{prf:lem:dist-proposal-transition:part3}, setting \(t = e^{-\nicefrac{\varepsilon}{4}}\) ensures that when conditioning on \(\frakE\), \(\bbP_{z\sim \calP_{x}}\left[\exp(\calA(x, z)) \geq t ~|~ \frakE\right] = 1\).
This finally gives
\begin{align*}
    \bbE_{z \sim \calP_{x}}\left[\min\left\{1, \frac{\pi(z) p_{z}(x)}{\pi(x) p_{x}(z)}\right\}\right] &\geq t ~ \bbP_{z \sim \calP_{x}}\left[\left.\frac{\pi(z) p_{z}(x)}{\pi(x) p_{x}(z)} \geq t ~\right|~ \frakE \right] \cdot \bbP_{z \sim \calP_{x}}[\frakE] \\
    &= \exp(-\nicefrac{\varepsilon}{4}) \cdot (1 - \nicefrac{\varepsilon}{4}) \geq (1 - \nicefrac{\varepsilon}{4})^{2} \geq 1 - \nicefrac{\varepsilon}{2}~
\end{align*}
and recalling the form of \(\TVdist(\calT_{x}, \calP_{x})\), we get the bound
\begin{equation*}
    \TVdist(\calT_{x}, \calP_{x}) \leq \frac{\varepsilon}{2}~.
\end{equation*}
\end{proof}

\subsection{Proofs of corollaries in \cref{sec:algo:applications-thm}}
\label{sec:proofs:corollaries}

In this section, we provide the proofs for the corollaries in \cref{sec:algo:applications-thm}.

\subsubsection{Proofs of \cref{corr:mixing-time-polytope-ellipsoids}}
\label{sec:prf:corr:mixing-time-polytope-ellipsoids}

We begin by stating two key facts, each about \(\mathsf{Polytope}(A, b)\) and \(\mathsf{Ellipsoids}(\{(c_{i}, M_{i})\}_{i=1}^{m})\).

\begin{lemma}
\label{lem:log-barrier-polytope-sym-barrier}
Let \(\primalspace = \mathsf{Polytope}(A, b)\) where \(A \in \bbR^{m \times d}, b \in \bbR^{m}\) be a non-empty, bounded polytope, and \(\mirrorfunc{}\) be the log-barrier of \(\primalspace\).
Then, \(\mirrorfunc{}\) is a symmetric barrier with parameter \(m\).
\end{lemma}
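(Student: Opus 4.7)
The plan is to directly verify the two inclusions in the definition of a symmetric barrier (Definition~\ref{def:sym-barrier}) using the explicit form of the log-barrier and a characterisation of the symmetrised set $\primalspace \cap (2x - \primalspace)$.

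First I would write the log-barrier as $\mirrorfunc{x} = -\sum_{i=1}^{m} \log s_{i}(x)$ where $s_{i}(x) = b_{i} - a_{i}^{\top}x$ is the slack in the $i$-th constraint (with $a_{i}^{\top}$ the $i$-th row of $A$), and compute
\begin{equation*}
    \hessmirror{x} = \sum_{i=1}^{m} \frac{a_{i}a_{i}^{\top}}{s_{i}(x)^{2}}, \qquad \|h\|_{\hessmirror{x}}^{2} = \sum_{i=1}^{m} \frac{(a_{i}^{\top}h)^{2}}{s_{i}(x)^{2}}.
\end{equation*}
Next I would give a clean description of the symmetrisation: writing $y = x + h$, one has $y \in \primalspace$ iff $a_{i}^{\top}h \leq s_{i}(x)$ for all $i$, and $2x - y = x - h \in \primalspace$ iff $-a_{i}^{\top}h \leq s_{i}(x)$ for all $i$. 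Consequently,
\begin{equation*}
    \primalspace \cap (2x - \primalspace) = \bigl\{ x + h : |a_{i}^{\top}h| \leq s_{i}(x) ~\text{for all}~ i \in [m] \bigr\}.
\end{equation*}

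With these two ingredients, both inclusions reduce to a one-line comparison. For the inner inclusion $\calE_{x}^{\mirrorfunc{}}(1) \subseteq \primalspace \cap (2x - \primalspace)$, if $\|h\|_{\hessmirror{x}}^{2} \leq 1$ then each individual summand $(a_{i}^{\top}h)^{2}/s_{i}(x)^{2}$ is at most $1$, hence $|a_{i}^{\top}h| \leq s_{i}(x)$ for every $i$. For the outer inclusion $\primalspace \cap (2x - \primalspace) \subseteq \calE_{x}^{\mirrorfunc{}}(\sqrt{m})$, if $|a_{i}^{\top}h| \leq s_{i}(x)$ for every $i$, then each of the $m$ summands is at most $1$, so $\|h\|_{\hessmirror{x}}^{2} \leq m$. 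This yields the symmetric barrier parameter $\nu = m$.

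There is no substantive obstacle here; the only thing to be careful about is that boundedness and non-emptiness of $\primalspace$ are what guarantees $s_{i}(x) > 0$ for $x \in \interior{\primalspace}$ and that $\hessmirror{x}$ is positive definite (so that the Dikin ellipsoid is well-defined). I would mention this briefly and then present the two inclusions as displayed above.
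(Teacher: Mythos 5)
Your proof is correct and complete. The paper itself does not prove this lemma: it simply cites Lemma 4.9 of \citet{kook2023efficiently} for the polytope case (while giving a direct proof only for the ellipsoid analogue, \cref{lem:log-barrier-ellipsoid-sym-barrier}). You have supplied the missing direct argument, and it is the natural one: compute $\hessmirror{x} = \sum_i a_i a_i^\top / s_i(x)^2$, observe that $\primalspace \cap (2x - \primalspace)$ is exactly the set of $x + h$ with $|a_i^\top h| \le s_i(x)$ for all $i$, and then note that the inner inclusion follows because a sum of nonnegative terms being at most $1$ forces each term to be at most $1$, while the outer inclusion follows because $m$ terms each at most $1$ sum to at most $m$. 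This is structurally the same kind of argument the paper uses for the ellipsoid case in \cref{prf:lem:log-barrier-ellipsoid-sym-barrier}, just considerably simpler because the polytope constraints are linear (so no cross terms appear and no auxiliary bound on $\|y - x\|_{M_i}^2$ is needed). Your remark about boundedness and non-emptiness guaranteeing $s_i(x) > 0$ on the interior and $\hessmirror{x} \succ 0$ (equivalently, that the $a_i$ span $\bbR^d$) is the right caveat to include so that the Dikin ellipsoid is well-defined. No gaps.
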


\begin{lemma}
\label{lem:log-barrier-ellipsoid-sym-barrier}
Let \(\primalspace = \mathsf{Ellipsoids}(\{M_{i}\}_{i=1}^{m})\) for a sequence of positive definite matrices \(\{M_{i}\}_{i=1}^{m}\), and \(\mirrorfunc{}\) be the log-barrier of \(\primalspace\).
Then, \(\mirrorfunc{}\) is a symmetric barrier with parameter \(2m\).
\end{lemma}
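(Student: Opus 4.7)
The plan is to reduce the claim to the single-ellipsoid case and then combine by additivity of Hessians, exactly as in the proof of \cref{lem:log-barrier-polytope-sym-barrier}. Write
\[
\phi(x) = \sum_{i=1}^{m} \phi_i(x), \qquad \phi_i(x) \defeq -\log\bigl(1 - \|x - c_i\|_{M_i}^2\bigr),
\]
so that $\hessmirror{x} = \sum_{i=1}^{m} \nabla^{2}\phi_i(x)$, and let $E_i = \{x : \|x-c_i\|_{M_i}^2 \leq 1\}$ so that $\primalspace = \bigcap_{i} E_i$. First I will prove that each $\phi_i$ is a symmetric barrier for $E_i$ with parameter $2$; then I will lift the two inclusions to the intersection. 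The symmetric-barrier property is affine-invariant (the Dikin ellipsoid, $\primalspace$, and $2x - \primalspace$ all transform covariantly under an affine change of variables), so for the single-ellipsoid step I may assume without loss of generality that $c_i = 0$ and $M_i = I_d$, i.e.\ $E_i$ is the Euclidean unit ball.

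In that normalized setting a direct computation gives
\[
\nabla^{2}\phi_i(x) = \frac{2\,I_d}{1-\|x\|^2} + \frac{4\,xx^{\top}}{(1-\|x\|^2)^{2}}.
\]
Set $v = y - x$ and $s = 1 - \|x\|^2$. For the inner inclusion $\calE_x^{\phi_i}(1) \subseteq E_i \cap (2x - E_i)$, the hypothesis $\|v\|^2_{\nabla^{2}\phi_i(x)} \le 1$ unpacks as $a + b \le 1$ with $a = 2\|v\|^2/s$ and $b = 4(x^{\top}v)^2/s^2$; a short elementary bound shows $\sqrt{b} + a/2 \le 1$ whenever $a+b\le 1$ and $a,b \ge 0$, and this in turn yields $|2x^{\top}v| + \|v\|^2 \le s$, which is exactly $\|y\|^2 \le 1$ and $\|2x-y\|^2 \le 1$. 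For the outer inclusion $E_i \cap (2x - E_i) \subseteq \calE_x^{\phi_i}(\sqrt{2})$, the parallelogram identity gives $\|x+v\|^2 + \|x-v\|^2 = 2\|x\|^2 + 2\|v\|^2$, so both points lying in $E_i$ forces $\|v\|^2 \le s$ and $|x^{\top}v| \le (s - \|v\|^2)/2$; substituting into the Dikin form yields $\|v\|^2_{\nabla^{2}\phi_i(x)} \le 1 + t^2 \le 2$ where $t = \|v\|^2/s \in [0,1]$.

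Given the single-ellipsoid claim, the two inclusions defining the symmetric-barrier property of $\phi$ follow by combining across $i$. The inner inclusion uses $\hessmirror{x} \succeq \nabla^{2}\phi_i(x)$ for every $i$, so $\calE_x^{\phi}(1) \subseteq \calE_x^{\phi_i}(1) \subseteq E_i \cap (2x - E_i)$ for every $i$, and intersecting over $i$ yields $\calE_x^{\phi}(1) \subseteq \primalspace \cap (2x - \primalspace)$. The outer inclusion uses that any $y \in \primalspace \cap (2x - \primalspace)$ lies in $E_i \cap (2x - E_i)$ for every $i$, giving $\|y-x\|^2_{\nabla^{2}\phi_i(x)} \le 2$ for each $i$; summing over $i$ produces $\|y-x\|^2_{\hessmirror{x}} \le 2m$. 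The only delicate step is the single-ellipsoid computation, and specifically the inner inclusion: both sides involve the quadratic form induced by $\nabla^{2}\phi_i$, so one cannot simply quote Hessian bounds and must run the short algebraic argument on $a + b \le 1 \Rightarrow \sqrt{b} + a/2 \le 1$. Everything else is bookkeeping that parallels the polytope proof.
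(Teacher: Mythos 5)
Your proof is correct, and it takes a genuinely different route from the paper's. You factor the argument into a single-ellipsoid lemma (each $\phi_i$ is a symmetric barrier for $E_i$ with parameter $2$, proved after normalising to the unit ball by affine invariance) and a combination step (Loewner ordering $\hessmirror{x} \succeq \nabla^{2}\phi_i(x)$ for the inner inclusion, additivity of the Hessian quadratic forms for the outer). The paper instead works with the multi-ellipsoid quadratic form $\sum_i a_i$ all at once: it derives the membership characterisation $2|\langle y-x, x-c_i\rangle_{M_i}| \le 1 - \|x-c_i\|_{M_i}^2 - \|y-x\|_{M_i}^2$ for each $i$, squares to get $a_i \le b_i$ with $b_i \ge 1$, argues the inner inclusion by contraposition, and shows $b_i \le 2$ for the outer. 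The underlying computation is essentially the same (your $1 + t^2 \le 2$ bound with $t = \|v\|^2/s$ is exactly the paper's $b_i \le 2$), but your decomposition makes transparent where the factor $2$ and the factor $m$ come from, and the normalisation to the unit ball removes the clutter of carrying $M_i$ and $c_i$ through the algebra. Your elementary bound $a + b \le 1 \Rightarrow \sqrt{b} + a/2 \le 1$ (which follows from $\sqrt{1-a} \le 1 - a/2$) is where the real content of the inner inclusion lives, and it replaces the paper's squaring-plus-contraposition argument; both are correct, but yours also makes explicit that $a+b\le 1$ already forces $\|v\|^2 \le s$, which the paper's equivalence claim quietly relies on when it inverts the squaring step.
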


\cref{lem:log-barrier-polytope-sym-barrier} was stated as \citet[Lem. 4.9]{kook2023efficiently}.
To the best of our knowledge, \cref{lem:log-barrier-ellipsoid-sym-barrier} has not been discussed in any prior work, a proof of which is given in \cref{sec:prf:lem:log-barrier-ellipsoid-sym-barrier}.

\begin{proof}
We begin with the following fact, as stated in \citet[Ex. 5.1.1(4)]{nesterov2018lectures}.
Let \(P \in \bbR^{d \times d}\) and \(P \succeq 0\), \(q \in \bbR^{d}, r \in \bbR\).
Then, \(x \mapsto -\log(r + q^{\top}x + x^{\top}Px)\) is a self-concordant function with parameter \(1\).
For \(\mathsf{Polytope}(A, b)\), the log-barrier is the sum of \(m\) such functions where \(r = b_{i}\), \(q = -a_{i}\) and \(P = 0\) for \(i \in [m]\).
For \(\mathsf{Ellipsoids}(\{(c_{i}, M_{i})\}_{i=1}^{m})\), the log-barrier is the sum of \(m\) such functions as well, where \(r = 1 - \|c_{i}\|_{M_{i}}^{2}\), \(q = 2M_{i}c_{i}\), \(P = M_{i}\) for \(i \in [m]\).
By \citet[Thm. 5.1.1]{nesterov2018lectures}, this is a self-concordant function with parameter \(1\) as well.

For uniform sampling, since \(\potential{}\) is a constant function, \(\gradpotential{x} = 0\) for \(x \in \interior{\primalspace}\).
Hence, \(\mu = \lambda = \beta = 0\) in assumptions \ref{assump:rel-convex-smooth} and \ref{assump:rel-lipschitz}.
For assumption \ref{assump:symm-barrier}, \cref{lem:log-barrier-polytope-sym-barrier,lem:log-barrier-ellipsoid-sym-barrier} states that \(\nu = m\) for \(\primalspace = \mathsf{Polytope}(A, b)\), and \(\nu = 2m\) for \(\primalspace = \mathsf{Ellipsoids}(\{(c_{i}, M_{i})\}_{i=1}^{m})\).
Substituting these values for the parameters in the assumptions in the mixing time bound for the (weakly) convex case in \cref{thm:mix-mamla} recovers the statements of the lemmas.
\end{proof}

\subsubsection{Proof of \cref{corr:dirichlet-sampling}}
\begin{proof}
The simplex \(\primalspace\) is a special polytope, and since \(\mirrorfunc{}\) is the log-barrier of \(\primalspace\), the self-concordance parameter is \(1\), as previously discussed in \cref{sec:prf:corr:mixing-time-polytope-ellipsoids}.

We have the following explicit expression for the gradient of \(\potential{}\) and Hessians of \(\phi\) and \(\potential{}\).
\begin{gather*}
    \gradpotential{x} = \left[-\frac{a_{1}}{x_{1}} + \frac{a_{d+1}}{1 - \bm{1}^{\top}x}~, \cdots~, \frac{a_{d}}{x_{d}} + \frac{a_{d+1}}{1 - \bm{1}^{\top}x} \right] \\
    \nabla^{2}\potential{x} = \mathrm{diag}\left(\frac{a_{1}}{x_{1}^{2}}~, \cdots~, \frac{a_{d}}{x_{d}^{2}}\right) + \frac{a_{d+ 1}}{\left(1 - \bm{1}^{\top}x\right)^{2}} \bm{1}_{d\times d}~, \\
    \hessmirror{x} = \mathrm{diag}\left(\frac{1}{x_{1}^{2}}~, \cdots,~, \frac{1}{x_{d}^{2}}\right) + \frac{1}{\left(1 - \bm{1}^{\top}x\right)^{2}} \bm{1}_{d\times d}~.
\end{gather*}
The smallest \(\lambda\) such that \(\lambda \cdot \hessmirror{x} - \nabla^{2}\potential{x} \succeq 0\) is \(\bm{a}_{\max}\).
The largest \(\mu\) such that \(\nabla^{2}\potential{x} - \mu \cdot \hessmirror{x} \succeq 0\) is \(\bm{a}_{\min}\).
Consequently,
\(\potential{}\) is \(\bm{a}_{\max}\)-relatively smooth and \(\bm{a}_{\min}\)-relatively convex with respect to \(\phi\).

From the expressions above, for any \(v \in \bbR^{d}\),
\begin{equation*}
    \langle \gradpotential{x}, v\rangle = -\sum_{i=1}^{d}\frac{a_{i}v_{i}}{x_{i}} + \frac{a_{d+1}}{1 - \bm{1}^{\top}x} \cdot \sum_{i =1}^{d}v_{i}~, \enskip 
    \langle v, \hessmirror{x}v\rangle = \sum_{i=1}^{d}\frac{v_{i}^{2}}{x_{i}^{2}} + \frac{1}{(1 - \bm{1}^{\top}x)^{2}} \cdot \left(\sum_{i=1}^{d}v_{i}\right)^{2}~.
\end{equation*}
From \cref{lem:dirichlet-relative-lipschitz} with \(z_{i} \leftarrow \frac{v_{i}}{x_{i}}\), \(w_{i} \leftarrow v_{i}\), and \(c \leftarrow \frac{1}{1 - \bm{1}^{\top}x}\), we have the inequality for any \(x \in \interior{\primalspace}\)
\begin{equation*}
    \langle \gradpotential{x}, v\rangle^{2} \leq \|\bm{a}\|^{2} \cdot \langle v, \hessmirror{x}v\rangle \Leftrightarrow \langle \gradpotential{x}, \hessmirror{x}^{-1}\gradpotential{x}\rangle \leq \|\bm{a}\|^{2}~,
\end{equation*}
which shows that \(\potential{}\) is \(\|\bm{a}\|\)-relatively Lipschitz with respect to \(\mirrorfunc{}\).

The constant \(\gamma = \frac{\bm{a}_{\max}}{2} + \|\bm{a}\|\) is at most \(\frac{3\|\bm{a}\|}{2}\).
Since \(a_{i} > 0\) for all \(i\), we instantiate the \(\mu > 0\) case of \cref{thm:mix-mamla} to get the mixing time guarantee
\begin{equation}
    \calO\left(\frac{1}{\bm{a}_{\min}} \cdot \max\left\{d^{3},~d \cdot \|\bm{a}\|,~\|\bm{a}\|^{\nicefrac{2}{3}},~\|\bm{a}\|^{\nicefrac{3}{2}},~\|\bm{a}\|^{2}\right\}\cdot \log\left(\frac{\sqrt{M}}{\delta}\right)\right)
\end{equation}
When \(\|\bm{a}\| \geq 1\), \(\|\bm{a}\|^{2}\) dominates \(\|\bm{a}\|^{\nicefrac{3}{2}}\) and \(\|\bm{a}\|^{\nicefrac{2}{3}}\), thus recovering the statement of the lemma.
\end{proof}

\subsection{Proofs of other technical lemmas}
\subsubsection{Proof of \cref{lem:log-barrier-ellipsoid-sym-barrier}}
\label{sec:prf:lem:log-barrier-ellipsoid-sym-barrier}

\begin{proof}
The log-barrier of this set is defined by
\begin{equation*}
    \mirrorfunc{x} = -\sum_{i=1}^{m}\log(1 - \|x - c_{i}\|^{2}_{M_{i}})~.
\end{equation*}

The Hessian of \(\mirrorfunc{}\) is given by
\begin{equation*}
    \hessmirror{x} = \sum_{i=1}^{m} \frac{2M_{i}(1 - \|x - c_{i}\|_{M_{i}}^{2}) + 4M_{i}(x - c_{i})(x - c_{i})^{\top}M_{i}}{(1 - \|x - c_{i}\|_{M_{i}}^{2})^{2}}~.
\end{equation*}
For any \(y \in \calE_{x}^{\mirrorfunc{}}(\sqrt{r})\),
\begin{equation}
\label{eq:dikin-ellipsoid-belonging-def}
    \|y - x\|_{\hessmirror{x}}^{2} \leq r \Leftrightarrow \sum_{i = 1}^{m} \frac{2\|y - x\|_{M_{i}}^{2}(1 - \|x - c_{i}\|_{M_{i}}^{2}) + 4\langle y - x, x - c_{i}\rangle_{M_{i}}^{2}}{(1 - \|x - c_{i}\|_{M_{i}}^{2})^{2}} \leq r~.
\end{equation}
Let \(\calS_{i}\) be the ellipsoid defined by the pair \((c_{i}, M_{i})\).
To show that \(\mirrorfunc{}\) is a symmetric barrier over \(\primalspace = \mathsf{Ellipsoids}(\{(c_{i}, M_{i})\}_{i=1}^{m})\), we need to show that there exists \(r > 0\), such that for all \(x \in \interior{\primalspace}\),
\begin{equation}
\label{eq:containment-chain}
    \calE_{x}^{\mirrorfunc{}}(1) \subseteq \primalspace \cap (2x - \primalspace) \subseteq \calE_{x}^{\mirrorfunc{}}(r)~.
\end{equation}
We begin by giving an equivalent algebraic statement for \(y \in \primalspace \cap (2x - \primalspace)\).
Note that \(\primalspace \cap (2x - \primalspace) = \bigcap_{i = 1}^{m} \left\{\calS_{i} \cap (2x - \calS_{i})\right\}\).

For any \(i \in [m]\), let \(y \in \calS_{i} \cap (2x - \calS_{i})\) which is equivalent to \(y \in \calS_{i} \wedge (2x - y) \in \calS_{i}\).
\begin{align*}
    y \in \calS_{i} &\Leftrightarrow \|y - c_{i}\|_{M_{i}}^{2} \leq 1 \\
    &\Leftrightarrow \|y - x + x - c_{i}\|_{M_{i}}^{2} \leq 1 \\
    &\Leftrightarrow \|y - x\|_{M_{i}}^{2} + \|x - c_{i}\|^{2} + 2\langle y - x, x - c_{i}\rangle_{M_{i}} \leq 1 \\
    &\Leftrightarrow 2\langle y - x, x - c_{i}\rangle_{M_{i}} \leq 1 - \|x - c_{i}\|_{M_{i}}^{2} - \|y - x\|_{M_{i}}^{2}~.\\
    2x - y \in \calS_{i} &\Leftrightarrow \|2x - y - c_{i}\|_{M_{i}}^{2} \leq 1 \\
    &\Leftrightarrow \|x - c_{i} + x - y\|_{M_{i}}^{2} \leq 1 \\
    &\Leftrightarrow \|x - c_{i}\|_{M_{i}}^{2} + \|y - x\|_{M_{i}}^{2} + 2\langle x - c_{i}, x - y\rangle_{M_{i}} \leq 1 \\
    &\Leftrightarrow -2\langle y - x, x - c_{i}\rangle_{M_{i}} \leq 1 - \|x - c_{i}\|_{M_{i}}^{2} - \|y - x\|_{M_{i}}^{2}~.
\end{align*}
Therefore, \(y \in \calS_{i} \cap (2x - \calS_{i})\) if and only if
\begin{equation*}
    2\left|\langle y - x, x - c_{i}\rangle_{M_{i}}\right| \leq 1 - \|x - c_{i}\|_{M_{i}}^{2} - \|y - x\|_{M_{i}}^{2}~.
\end{equation*}
Squaring both sides, and moving terms around we have that \(y \in \primalspace \cap (2x - \primalspace)\) if and only if
\begin{align*}
    \underbrace{\frac{2(1 - \|x - c_{i}\|_{M_{i}}^{2})\|y - x\|_{M_{i}}^{2} + 4\langle y - x, x - c_{i}\rangle_{M_{i}}^{2}}{(1 - \|x - c_{i}\|_{M_{i}}^{2})^{2}}}_{a_{i}} &\leq \underbrace{1 + \frac{\|y - x\|_{M_{i}}^{4}}{(1 - \|x - c_{i}\|_{M_{i}}^{2})^{2}}}_{b_{i}}~ \quad \forall ~i \in [m]~.
\end{align*}
By the definition as stated in \cref{eq:dikin-ellipsoid-belonging-def}, we have following equivalence
\begin{equation*}
    y \in \calE_{x}^{\mirrorfunc{}}(1) \Rightarrow y \in \primalspace \cap (2x - \primalspace) \quad \Leftrightarrow \quad \sum_{i=1}^{m}a_{i} \leq 1 \Rightarrow a_{i} \leq b_{i} \quad \forall~ i \in [m]
\end{equation*}
Suppose there exists \(i \in [m]\) such that \(a_{i} > b_{i}\), then \(a_{i} > 1\) since \(b_{i} \geq 1\).
By definition, \(a_{i} \geq 0\) for all \(i \in [m]\), and this in conjuction with the assumption implies that \(\sum_{i = 1}^{m} a_{i} > 1\).
By contraposition, this is equivalent to stating that \(\sum_{i = 1}^{m} a_{i} \leq 1\) implies \(a_{i} \leq b_{i}\) for all \(i \in [m]\).
This proves that \(\calE_{x}^{\mirrorfunc{}}(1) \subseteq \primalspace \cap (2x - \primalspace)\) as \(y \in \calE_{x}^{\mirrorfunc{}}(1)\) was chosen arbitrarily.

We now show the second statement in the containment chain (\cref{eq:containment-chain}).
If \(y \in \calS_{i} \cap (2x - \calS_{i})\) for some \(i \in [m]\),
\begin{align*}
    \|y - x\|_{M_{i}}^{2} &= \|y - 2x + c_{i} + x - c_{i}\|_{M_{i}}^{2} \\
    &= \|2x - y - c_{i}\|_{M_{i}}^{2} + \|x - c_{i}\|_{M_{i}}^{2} - 2\langle 2x - y - c_{i}, x - c_{i}\rangle_{M_{i}} \\
    &\overset{(a)}\leq 1 + \|x - c_{i}\|_{M_{i}}^{2} - 2\langle x - c_{i} + x - y, x - c_{i}\rangle_{M_{i}} \\
    &= 1 - \|x - c_{i}\|_{M_{i}}^{2} + 2\langle y - x, x - c_{i}\rangle_{M_{i}} \\
    &\overset{(b)}\leq 1 - \|x - c_{i}\|_{M_{i}}^{2} + 1 - \|x - c_{i}\|_{M_{i}}^{2} - \|y - x\|_{M_{i}}^{2} \\
    &\leq 2 - 2\|x - c_{i}\|_{M_{i}}^{2} - \|y - x\|_{M_{i}}^{2}
\end{align*}
Step \((a)\) uses the fact that \(y \in 2x - \calS_{i}\), and step \((b)\) uses the equivalence for \(y \in \calS_{i}\) shown above.
This concludes that if \(y \in \primalspace \cap (2x - \primalspace)\), \(\|y - x\|_{M_{i}}^{2} \leq 1 - \|x\|_{M_{i}}^{2}\) for all \(i \in [m]\).

We can use the above assertion to bound the \(b_{i}\) quantities by \(2\).
In summary, if \(y \in \primalspace \cap (2x - \primalspace)\), then \(a_{i} \leq b_{i} \leq 2\) for all \(i \in [m]\).
Note that this implies \(\sum_{i = 1}^{m} a_{i} \leq 2m\), which is equivalent to stating that if \(y \in \primalspace \cap (2x - \primalspace)\), then \(y \in \calE_{x}^{\mirrorfunc{}}(\sqrt{2m})\).
Due to \(y\) being arbitrary, we have shown that \(\primalspace \cap (2x - \primalspace) \subseteq \calE_{x}^{\mirrorfunc{}}(\sqrt{2m})\), thus completing the proof.
\end{proof}

\section{Conclusion}
\label{sec:conclusion}
To summarise, we introduce the Metropolis-adjusted Mirror Langevin algorithm (\nameref{alg:mamla}), and provide non-asymptotic mixing time guarantees for it.
This algorithm adds a Metropolis-Hastings accept-reject filter to the proposal Markov chain defined by a single step of the Mirror Langevin algorithm (\ref{eq:MLA}) which is the Euler-Maruyama discretisation of the Mirror Langevin dynamics (\ref{eq:MLD}), at each point in \(\primalspace\).
The resulting Markov chain is reversible with respect to the target distribution that we seek to sample from, unlike the Markov chain induced by \ref{eq:MLA}.

Our mixing time guarantees are strongly related to the maximum permissible stepsize that results in non-vanishing acceptance rates, and our current analysis shows that this scales as \(\nicefrac{1}{d^{3}}\) as \(d\) grows.
Motivated by our empirical studies, we believe that for structured domains like polytopes where \(\mirrorfunc{}\) is chosen to be the log-barrier of the domain, a step size that scales as \(\nicefrac{1}{d^{\gamma}}\) for \(\gamma < 3\) can lead to non-vanishing acceptance rates for \nameref{alg:mamla}, and consequently yield better mixing time guarantees.
Another question that we leave for the future is to check if the relative Lipschitz continuity assumption over \(\potential{}\) can be relaxed or removed altogether for the mixing time analysis of \nameref{alg:mamla}, especially when used in conjuction with relative convexity and smoothness of \(\potential{}\).
This is motivated by the analysis in \citet{li2022mirror}, which does not employ such a condition to derive mixing time guarantees for \ref{eq:MLA}.

More conceptually, \nameref{alg:mamla} also strongly uses the dual form of \ref{eq:MLD} i.e., by using \ref{eq:MLA} which is a discretisation of the SDE in the dual space.
Using It\^{o}'s lemma, the equivalent SDE in the primal space is given by \citep{zhang2020wasserstein,li2022mirror}
\begin{equation*}
    dX_{t} = (\nabla \cdot \{\hessmirror{X_{t}}^{-1}\} - \hessmirror{X_{t}}^{-1}\gradpotential{X_{t}}) dt + \sqrt{2 \hessmirror{X_{t}}^{-1}} dB_{t}~.
\end{equation*}
While an algorithm based on the discretision of the primal SDE by itself will not guarantee feasible iterates, the Metropolis-Hastings filter could be used to ensure that (1) the iterates are feasible at all times, and (2) the resulting Markov chain is reversible with respect to the target distribution.
Notably, the primal space SDE introduces a third-order quantity, specifically \(\nabla \cdot \{\hessmirror{x}^{-1}\}\) which might be hard to compute.
However, owing to the fact that the Metropolis-adjustment results in a Markov chain that is reversible with respect to the target, it would be instructive to analyse a proposal that does not contain the aforementioned third order expression in it.

\section*{Acknowledgements}
\label{sec:ack}
The authors would like to thank the reviewers at COLT 2024 for their feedback, and Sinho Chewi for helpful remarks.
The authors acknowledge the MIT SuperCloud and Lincoln Laboratory Supercomputing Center for providing high-performance computing resources that have contributed to the experimental results reported within this work.

\bibliography{references.bib}

\appendix

\section{Addendum}
\label{app:sec:addendum}
\subsection{Miscellaneous algebraic lemmas}

\begin{lemma}
\label{lem:bound_x-1-logx}
Let \(f : (0, \infty) \to (0, \infty)\) be defined as \(f(x) = x - 1 - \log(x)\).
Then for all \(x > 0\),
\begin{equation*}
    f(x) \leq \frac{(x - 1)^{2}}{x}.
\end{equation*}
\end{lemma}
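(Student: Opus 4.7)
The plan is to reduce the claimed inequality to a single-variable minimization. Define
\begin{equation*}
    g(x) = \frac{(x-1)^{2}}{x} - \bigl(x - 1 - \log x\bigr),
\end{equation*}
so the lemma amounts to showing $g(x) \geq 0$ on $(0, \infty)$.

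First I would simplify $g$ by expanding $(x-1)^{2}/x = x - 2 + 1/x$, which after cancellation gives
\begin{equation*}
    g(x) = \log x + \frac{1}{x} - 1.
\end{equation*}
This is a much more tractable expression, and in particular $g(1) = 0$. Next, I would differentiate:
\begin{equation*}
    g'(x) = \frac{1}{x} - \frac{1}{x^{2}} = \frac{x - 1}{x^{2}}.
\end{equation*}
So $g'(x) < 0$ for $x \in (0, 1)$ and $g'(x) > 0$ for $x \in (1, \infty)$, meaning $g$ is strictly decreasing on $(0, 1)$ and strictly increasing on $(1, \infty)$. Therefore $g$ attains its global minimum on $(0, \infty)$ at $x = 1$, where $g(1) = 0$, and hence $g(x) \geq 0$ for all $x > 0$, which is the desired inequality.

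There is no real obstacle here: once the inequality is rearranged into the form $\log x + 1/x - 1 \geq 0$, the standard calculus argument (first derivative test at the unique critical point $x = 1$) closes it immediately. The only thing to be a little careful about is the algebraic simplification of $(x-1)^{2}/x - (x - 1 - \log x)$, which should be done term-by-term to avoid sign errors.
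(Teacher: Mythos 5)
Your proof is correct and is essentially the same as the paper's: both reduce the inequality to the nonnegativity of $\log x + 1/x - 1$ (the paper studies its negation, $1 - 1/x - \log x \leq 0$) and close it with the first-derivative test at the unique critical point $x = 1$.
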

\begin{proof}
We begin with an algebraic manipulation.
\begin{align*}
    g(x) &= f(x) - \frac{(x - 1)^{2}}{x} \\
    &= x - 1 - \log(x) - \frac{(x - 1)^{2}}{x} \\
    &= \frac{x^{2} - x - (x - 1)^{2}}{x} - \log(x) \\
    &= \frac{x - 1}{x} - \log(x).
\end{align*}
The function \(g(x) = \frac{x - 1}{x} - \log(x)\) has derivative \(g'(x) = \frac{(1 - x)}{x^{2}}\).
The only solution to \(g'(x) = 0\) for \(x > 0\) is \(x = 1\).
Moreover, \(g''(x) = -\frac{2}{x^{3}} + \frac{1}{x^{2}}\), and \(g''(1) = -1 < 0\).
This implies that \(g(x)\) attains its maximum at \(x = 1\), and this maximum value is \(1 - 1 - \log(0) = 0\).
As a result of the calculation above, for any \(x > 0\), \(f(x) - \frac{(x - 1)^{2}}{x} \leq \max\limits_{x > 0} g(x) = 0\), which concludes the proof.
\end{proof}

\begin{lemma}
\label{lem:t1-lower}
Let \(t \in [0, 0.5]\).
Then,
\begin{equation*}
    \frac{t^{2}}{4}\left(1 - \frac{1}{(1 - t)^{2}}\right) \geq -\frac{3}{2}t^{3}
\end{equation*}
\end{lemma}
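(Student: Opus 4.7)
The plan is to reduce the inequality to a simple quadratic that factors over the rationals, and then verify non-negativity on $[0,1/2]$ by inspection of the roots.

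First I would simplify the awkward factor on the left-hand side. Writing
\begin{equation*}
1 - \frac{1}{(1-t)^{2}} = \frac{(1-t)^{2}-1}{(1-t)^{2}} = \frac{-t(2-t)}{(1-t)^{2}},
\end{equation*}
the claim becomes
\begin{equation*}
-\frac{t^{3}(2-t)}{4(1-t)^{2}} \;\geq\; -\frac{3}{2}\,t^{3}.
\end{equation*}
At $t=0$ both sides vanish, and for $t\in(0,1/2]$ I can divide through by $t^{3}>0$ (flipping signs appropriately) to obtain the equivalent inequality
\begin{equation*}
\frac{2-t}{4(1-t)^{2}} \;\leq\; \frac{3}{2},
\end{equation*}
i.e.\ $2-t \leq 6(1-t)^{2}$ since $(1-t)^{2} > 0$ on this interval.

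Next I would expand and collect terms: $6(1-t)^{2} - (2-t) = 6t^{2} - 11t + 4$. Since $6t^{2} - 11t + 4 = (2t-1)(3t-4)$, the inequality to establish is $(2t-1)(3t-4) \geq 0$ for $t\in[0,1/2]$. On this interval $2t-1 \leq 0$ and $3t - 4 < 0$, so the product of the two non-positive factors is non-negative, which completes the proof. The endpoints are tight: equality holds at $t=0$ and at $t=1/2$.

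There is no real obstacle here — the whole argument is a one-line algebraic simplification followed by factoring a quadratic. The only thing to be slightly careful with is the sign: the original inequality is between two non-positive quantities, so after clearing the negative sign and the positive factor $t^{3}$ one must make sure the inequality direction is preserved, as done above.
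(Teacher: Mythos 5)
Your proof is correct and takes essentially the same route as the paper: both arguments reduce the inequality to showing $6t^{2}-11t+4 \geq 0$ on $[0,1/2]$. The only tactical difference is in the final step — you factor the quadratic as $(2t-1)(3t-4)$ and check signs, while the paper notes the quadratic is decreasing on $[0,1/2]$ (derivative $12t-11<0$) and vanishes at $t=1/2$; your factorization is marginally cleaner but the two are interchangeable.
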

\begin{proof}
We begin with the quantity
\begin{equation*}
    \frac{t^{2}}{4}\left(1 - \frac{1}{(1 - t)^{2}}\right) + \frac{3}{2}t^{3} = \frac{t^{3}(6t^{2} - 11t + 4)}{4(1 - t)^{2}} = \frac{t^{3}(3t - 4)(2t - 1)}{4(1 - t)^{2}}~.
\end{equation*}
When \(t \in [0, 0.5]\), \(t^{3}, (1 - t)^{2} \geq 0\), and \((3t - 4), (2t - 1) \leq 0\).
This implies that
\begin{equation*}
    \frac{t^{3}(3t - 4)(2t - 1)}{4(1 - t)^{2}} \geq 0 \Leftrightarrow \frac{t^{2}}{4}\left(1 - \frac{1}{(1 - t)^{2}}\right) + \frac{3}{2}t^{3} \geq 0~,
\end{equation*}
which completes the proof.
\end{proof}

\begin{lemma}
\label{lem:t2-lower}
Let \(t \in [0, 0.5]\).
Then,
\begin{equation*}
    \log(1 - t) \geq -\frac{3}{2}t
\end{equation*}
\end{lemma}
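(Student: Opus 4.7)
The plan is to reduce the inequality to the sign of a single scalar function and analyze it with elementary calculus. Define
\[
g(t) = \log(1-t) + \tfrac{3}{2}t, \qquad t \in [0, 1/2],
\]
so that the claim becomes $g(t) \geq 0$ on this interval. First I would record that $g(0) = 0$, which handles the left endpoint.

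Next I would differentiate once to locate the critical points: $g'(t) = -\frac{1}{1-t} + \frac{3}{2}$, which vanishes at $t = 1/3$, is positive on $[0, 1/3)$, and is negative on $(1/3, 1/2]$. Hence $g$ is increasing on $[0, 1/3]$ and decreasing on $[1/3, 1/2]$, so on $[0, 1/2]$ its minimum is attained at one of the two endpoints. Since $g(0) = 0$ already, I only need to verify nonnegativity at $t = 1/2$.

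The remaining step is to check $g(1/2) = \tfrac{3}{4} - \log 2 \geq 0$, which follows from the elementary estimate $\log 2 < 0.694 < 0.75$. This gives $g(t) \geq 0$ for all $t \in [0, 1/2]$, which is the desired inequality.

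There is essentially no obstacle here; the only mildly delicate point is making sure the endpoint value $\tfrac{3}{4} - \log 2$ is indeed positive, but this is immediate from any standard bound on $\log 2$. An alternative route, if preferred for self-containedness, would be to use the Taylor expansion $-\log(1-t) = \sum_{k \geq 1} t^k/k$ and bound the tail $\sum_{k \geq 2} t^k/k \leq \tfrac{1}{2}\sum_{k \geq 2} t^k = \tfrac{t^2}{2(1-t)} \leq t^2 \leq \tfrac{t}{2}$ for $t \in [0, 1/2]$, which yields $-\log(1-t) \leq \tfrac{3t}{2}$ directly.
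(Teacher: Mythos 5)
Your proof is correct and takes essentially the same route as the paper's: the paper observes that $f(t) = \log(1-t) + \tfrac{3}{2}t$ is concave (via $f''(t) = -\tfrac{1}{(1-t)^2} < 0$) so its minimum on $[0,\tfrac12]$ is at an endpoint and then checks both endpoints, while you reach the same conclusion by locating the unique critical point $t=1/3$ via $f'$ and observing increase-then-decrease. Both arguments reduce to verifying $f(0)=0$ and $f(1/2)=\tfrac34-\log 2 \geq 0$, and your side remark about the Taylor-series bound is a valid self-contained alternative, though not the one the paper uses.
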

\begin{proof}
Consider the function \(f(t) = \log(1 - t) + \frac{3}{2}t\).
The second derivative of \(f\) is \(f''(t) = -\frac{1}{(1 - t)^{2}}\).
Note that \(f''(t) < 0\) for \(t \in [0, 0.5]\), which implies that \(f\) is strictly concave in \([0, 0.5]\).
Consequently, the minimum of \(f\) restricted to \([0, 0.5]\) is attained at either \(0\) or \(0.5\).
Hence, for any \(t \in [0, 0.5]\),
\begin{equation*}
    f(t) \geq \min\{f(0), f(0.5)\} = \min\{0, 0.75 + \log(0.5)\} = 0~.
\end{equation*}
\end{proof}

\begin{lemma}
\label{lem:dirichlet-relative-lipschitz}
Let \(a \in \bbR^{d + 1}\).
Then, for any \(w, z \in \bbR^{d}\), \(c \in \bbR\),
\begin{equation*}
    \left(-\sum_{i=1}^{d} a_{i}z_{i} + c \cdot a_{d+1} \cdot \sum_{i=1}^{d}w_{i}\right)^{2} \leq \|a\|^{2} \cdot \left(\sum_{i=1}^{d}z_{i}^{2} + c^{2}\cdot \left\{\sum_{i=1}^{d}w_{i}\right\}^{2}\right)
\end{equation*}
\end{lemma}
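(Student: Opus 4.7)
The plan is to recognise the inequality as an instance of Cauchy--Schwarz in $\mathbb{R}^{d+1}$, and the main (and only) task is to identify the two vectors whose inner product reproduces the left-hand side and whose norms match the right-hand side.

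Concretely, I would define $u \in \mathbb{R}^{d+1}$ by setting $u_i = -v_i/x_i$ for $i \in [d]$ and $u_{d+1} = c \sum_{i=1}^{d} v_i$. Then the scalar inside the square on the left-hand side is precisely $\langle a, u\rangle$, since
\begin{equation*}
    \langle a, u\rangle = \sum_{i=1}^{d} a_i \cdot \left(-\frac{v_i}{x_i}\right) + a_{d+1} \cdot c \sum_{i=1}^{d} v_i = -\sum_{i=1}^{d} \frac{a_i v_i}{x_i} + c \cdot a_{d+1} \cdot \sum_{i=1}^{d} v_i.
\end{equation*}

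Applying the Cauchy--Schwarz inequality gives $\langle a, u\rangle^2 \leq \|a\|^2 \cdot \|u\|^2$, and by construction
\begin{equation*}
    \|u\|^2 = \sum_{i=1}^{d} \frac{v_i^2}{x_i^2} + c^2 \left(\sum_{i=1}^{d} v_i\right)^2,
\end{equation*}
which is exactly the parenthesised factor on the right-hand side. Combining these two displays yields the claimed inequality.

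There is no real obstacle here: the only ``trick'' is to treat the coefficient $c \cdot a_{d+1}$ as a single inner-product term, which is what motivates embedding into $\mathbb{R}^{d+1}$ rather than $\mathbb{R}^d$. The hypothesis $x_i \neq 0$ is used implicitly to make $u$ well-defined, and the result holds for arbitrary $a \in \mathbb{R}^{d+1}$, $c \in \mathbb{R}$, and $v \in \mathbb{R}^d$ with no further assumptions.
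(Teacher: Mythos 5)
Your proof is correct and is essentially the same as the paper's: both embed into $\mathbb{R}^{d+1}$ and apply Cauchy--Schwarz, the only (immaterial) difference being that you place the minus signs on the second vector while the paper places them on the copy of $a$; either way the norms are unchanged.
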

\begin{proof}
Construct two vectors of length \(d + 1\) as shown below.
\begin{equation*}
    \sfA = \left[-a_{1}, \cdots, -a_{d}, a_{d + 1}\right] ~;\qquad
    \sfB = \left[z_{1}, \cdots, z_{d}, c \cdot \sum_{i=1}^{d}w_{i}\right]~.
\end{equation*}
The LHS of the inequality in the statement is \((\sfA^{\top}\sfB)^{2}\).
Using the Cauchy-Schwarz inequality,
\begin{equation*}
    (\sfA^{\top}\sfB)^{2} \leq \|\sfA\|^{2}\|\sfB\|^{2} = \|a\|^{2} \cdot \left(\sum_{i=1}^{d}z_{i}^{2} + c^{2} \cdot \left\{\sum_{i=1}^{d}w_{i}\right\}^{2}\right)
\end{equation*}
thus obtaining the statement of the lemma.
\end{proof}

\subsection{An algorithm to compute the inverse mirror map for a simplex}
\label{app:sec:simplex-inverse-mirror-map}

Recall the simplex which is defined as the set
\begin{equation*}
    \primalspace = \left\{x \in \bbR_{+}^{d} : \bm{1}^{\top}x \leq 1\right\}~.
\end{equation*}

The log-barrier of the simplex, which we consider to be the mirror function is defined as
\begin{equation*}
    \mirrorfunc{x} = -\sum_{i=1}^{d}\log x_{i} - \log\left(1 - \sum_{i=1}^{d}x_{i}\right)~.
\end{equation*}
Since the simplex is a compact subset of \(\bbR^{d}\), for every \(y \in \bbR^{d}\) there exists \(x \in \interior{\primalspace}\) such that \(y = \primtodual{x}\).
Element-wise, this is equivalent to stating
\begin{equation*}
    y_{i} = -\frac{1}{x_{i}} + \frac{1}{1 - \bm{1}^{\top}x} \qquad \forall ~i \in [d]~.
\end{equation*}

The goal is to recover \(x\) given some real vector \(y\).
Let \(C = 1 - \bm{1}^{\top}x\).
Then,
\begin{equation}
\label{app:eq:simplex-inverse-element}
    y_{i} = -\frac{1}{x_{i}} + \frac{1}{C} \Leftrightarrow x_{i} = \frac{C}{1 - C y_{i}}~.
\end{equation}

Consequently, \(C\) must satisfy
\begin{equation*}
    C = 1 - \bm{1}^{\top}x = 1 - \sum_{i=1}^{d}x_{i} = 1 - \sum_{i=1}^{d} \frac{C}{1 - C y_{i}} \Leftrightarrow f(C) \defeq C + C\sum_{i=1}^{d} \frac{C}{1 - C y_{i}} - 1 = 0~.
\end{equation*}
By the domain constraints, \(C \in (0, 1)\) (equality is attained when \(x\) lies on the boundary of \(\primalspace\)).

Note the following properties of \(f\):
\begin{enumerate}
    \item the asymptotes of \(f\) are given by \(y_{1}^{-1}, \ldots, y_{d}^{-1}\) with \(\lim\limits_{C \to y_{i}^{+}} f(C) = -\infty\) and \(\lim\limits_{C \to y_{i}^{-}} f(C) = \infty\).
    \item \(f\) is increasing between two consecutive asymptotes, and
    \item \(f(0) = -1\).
\end{enumerate}

We identify ranges to perform binary search on, and this depends on the sign of \(y_{(1)} = \max_{j \in [d]} y_{j}\).
\begin{description}
    \item [\(y_{(1)} \leq 0\):] in this case, since there are no positive asymptotes, \(f\) must be increasing in the domain \((0, \infty)\).
    Since \(f(0) = -1\), the root \(C^{\star}\) could lie anywhere between \((0, 1)\), and therefore the search bounds as \((0, 1)\).
    \item [\(y_{(1)} > 0\):] in this case, there is at least one positive asymptote.
    For the sake of contradiction, let \(C^{\star} > y_{(1)}^{-1}\).
    Then for the \(j\) such that \(y_{j} = y_{(1)}\), we have
    \begin{equation*}
        y_{j} = -\frac{1}{x_{j}} + \frac{1}{C^{\star}} \Rightarrow \frac{1}{x_{j}} = \frac{1}{C^{\star}} - y_{j} < y_{j} - y_{j} < 0~.
    \end{equation*}
    which is a contradiction as \(x_{j} \geq 0\).
    Therefore, the root \(C^{\star}\) must be at most \(\min\{1, y_{(1)}^{-1}\}\).
    This gives the search bounds \(\left(0, \min\{1, y_{(1)}^{-1}\}\right)\).
\end{description}

Having identified the bounds, one can use a binary search method to identify the root.
With \(C^{\star}\), the corresponding \(x\) can be obtained from \cref{app:eq:simplex-inverse-element}.

\end{document}